\documentclass[11pt,letterpaper]{article}

\usepackage[margin=1in]{geometry}
\usepackage{cite}
\usepackage{hyperref}
\usepackage{colortbl}
\usepackage{paralist}
\usepackage{pdfpages}
\usepackage{enumitem}
\usepackage{float}
\usepackage{booktabs}
\usepackage[override]{cmtt}
\usepackage{color,xcolor}
\usepackage{graphicx,eso-pic}
\usepackage{boxedminipage}
\usepackage{url}
\usepackage{caption}
\usepackage{subcaption}
\usepackage{xspace}
\usepackage{multirow}
\usepackage{amsmath,amsthm,amstext,amssymb,amsfonts,latexsym}
\usepackage{wrapfig}
\usepackage[linesnumbered,ruled,vlined]{algorithm2e}
\usepackage[noend]{algpseudocode}
\usepackage{algorithmicx}
\usepackage{epstopdf}
\usepackage{mdframed}
\usepackage{cases}
\usepackage[capitalize]{cleveref}
\usepackage[compact]{titlesec}

%% Macros used in this paper

\newcommand{\Point}{{\normalfont \texttt{P}}}

\newcommand{\Array}{{\normalfont \texttt{Arr}}}

\newcommand{\MP}{{\normalfont \texttt{MP}}}
\newcommand{\MPone}{{\normalfont \texttt{MP1}}}
\newcommand{\MPtwo}{{\normalfont \texttt{MP2}}}

\newcommand{\HD}{{\normalfont \texttt{HD}}}
\newcommand{\VP}{{\normalfont \tilde{\texttt{k}}}}
\newcommand{\verif}{{\normalfont \texttt{k}}}
\newcommand{\voteone}{{\normalfont \texttt{v1}}}
\newcommand{\votetwo}{{\normalfont \texttt{v2}}}
\newcommand{\error}{{\normalfont \texttt{E}}}

\newcommand{\poly}{{\normalfont \texttt{poly}}}

\newcommand{\agree}{\ensuremath{\mathfrak{L}}}

\newcommand{\disagree}{{\normalfont \texttt{D}}}
\newcommand{\codeword}{{\normalfont \texttt{W}}}
\newcommand{\counter}{{\normalfont \texttt{CNT}}}

\newcommand{\core}{{\sf core}}

\newcommand{\sset}{{\normalfont S}}

\newcommand{\eps}{\epsilon}

\newcommand{\lhash}{\ensuremath{\mathcal{H}}}
\newcommand{\lseed}{\texttt{S}}
\newcommand{\shash}{\ensuremath{\mathcal{G}}}
\newcommand{\sseed}{\texttt{S}}

\newcounter{note}[section]

\newcommand{\yuetodo}[1]{{\large\color{green}[Yue todo: #1]}}

\definecolor{yue}{rgb}{0.7, 0, 0}

\renewcommand{\yuetodo}[1]{}

\newcommand{\ceil}[1]{\ensuremath{\left \lceil #1 \right \rceil}}
\newcommand{\floor}[1]{\ensuremath{\left \lfloor #1 \right \rfloor}}

\definecolor{darkgreen}{rgb}{0,0.5,0}
\definecolor{lightblue}{RGB}{0,176,240}
\definecolor{darkblue}{RGB}{0,112,192}
\definecolor{lightpurple}{RGB}{124, 66, 168}
%\definecolor{orange}{rgb}{1,0.5,0}
\definecolor{grey}{RGB}{139, 137, 137}
\definecolor{maroon}{RGB}{178, 34, 34}
\definecolor{green}{RGB}{34, 139, 34}
\definecolor{types}{RGB}{72, 61, 139}
%\definecolor{gold}{rgb}{0.85, 0.65, 0.13}
\definecolor{gold}{rgb}{0.8, 0.33, 0.0}

\definecolor{darkgray}{gray}{0.3}

%\newcommand{\skiptext}[1]{}
%\newcommand{\envi}{\ensuremath{{\sf env}}\xspace}

% vars

% keywords

% operations

% strings

% lineno

\definecolor{darkred}{rgb}{0.5, 0, 0}
\definecolor{darkgreen}{rgb}{0, 0.5, 0}
\definecolor{darkblue}{rgb}{0,0,0.5}

\newcommand\markx[2]{}

\newcommand{\Z}{\mathbb{Z}}
\newcommand{\F}{\mathbb{F}}

\newcommand{\rank}{\mathsf{rank}}
\newcommand{\EvalPoly}{\mathsf{EvalPoly}}

\newcommand{\ignore}[1]{}

\newcounter{task}

\newtheorem{theorem}{Theorem}[section]
\newtheorem{remark}[theorem]{Remark}

\newtheorem{corollary}[theorem]{Corollary}
\newtheorem{fact}[theorem]{Fact}
\newtheorem{lemma}[theorem]{Lemma}
\newtheorem{proposition}[theorem]{Proposition}
{
\theoremstyle{definition}
\newtheorem{definition}[theorem]{Definition}
}
\newtheorem{claim}[theorem]{Claim}

\newcounter{cnt:challenge}

%\urlstyle{sf}

\newcommand*\samethanks[1][\value{footnote}]{\footnotemark[#1]}

\SetKwComment{Comment}{//}{}

\begin{document}

\title{Small Memory Robust Simulation of Client-Server Interactive Protocols over Oblivious Noisy Channels}

\author{T-H. Hubert Chan\thanks{The University of Hong Kong. \texttt{hubert@cs.hku.hk, liangzb@connect.hku.hk}}  \and Zhibin Liang\samethanks
\and Antigoni Polychroniadou\thanks{J.P. Morgan AI Research. \texttt{antigonipoly@gmail.com}}
\and Elaine Shi\thanks{Cornell University. \texttt{runting@gmail.com}}
}

\date{}

\begin{titlepage}

\maketitle

\begin{abstract}

We revisit the problem of low-memory robust simulation of interactive protocols over noisy channels. Haeupler [FOCS 2014] considered robust simulation of two-party interactive protocols over oblivious, as well as adaptive, noisy channels.  Since the simulation does not need to have fixed communication pattern, the achieved communication rates can circumvent the lower bound proved by Kol and Raz [STOC 2013].  However, a drawback of this approach is that each party needs to remember the whole history of the simulated transcript.  In a subsequent manuscript, Haeupler and Resch considered low-memory simulation.  The idea was to view the original protocol as a computational DAG and only the identities of the nodes are saved (as opposed to the whole transcript history) for backtracking to reduce memory usage.

In this paper, we consider low-memory robust simulation of more general client-server interactive protocols, in which a leader communicates with other members/servers, who do not communicate among themselves; this setting can be applied to information-theoretic multi-server Private Information Retrieval (PIR) schemes. We propose an information-theoretic technique that converts any correct PIR protocol that assumes reliable channels, into a protocol which is both correct and private in the presence of a noisy channel while keeping the space complexity to a minimum. Despite the huge attention that PIR protocols have received in the literature, the existing works assume that the parties communicate using noiseless channels. 

Moreover, we observe that the approach of Haeupler and Resch to just save the nodes in the aforementioned DAG without taking the transcript history into account will lead to a correctness issue even for oblivious corruptions. We resolve this issue by saving hashes of prefixes of past transcripts. Departing from the DAG representation also allows us to accommodate scenarios where a party can simulate its part of the protocol without any extra knowledge (such as the DAG representation of the whole protocol). In the the two-party setting, our simulation has the same dependence on the error rate as in the work of Haeupler, and in the client-server setting it also depends on the number of servers.  Furthermore, since our approach does not remember the complete transcript history, our current technique can defend only against oblivious corruptions.

%\anti{read abstract again}
\end{abstract}

\thispagestyle{empty}
\end{titlepage}

\section{Introduction}
\label{sec:intro}

This paper revisits the problem of low-memory robust simulation
of interactive protocols over a noisy channel that corrupts any $\epsilon$ fraction of the transmitted symbols. Interactive protocols over noiseless communication channels assume that a transmitted message is received as-is. However, transmitted messages are subject to some bounded type of noise in modern communication channels in the presence of environmental or adversarial interference. Given an interactive communication protocol $\Pi$, robust simulation converts $\Pi$ into another communication protocol $\Pi'$ over a noisy channel which is still guaranteed to correctly determine the outcome of the noise-free protocol $\Pi$. The work of Haeupler~\cite{Haeupler14}, considered robust simulation
of a two-party interactive protocol $\Pi$ over a noisy channel
with error rate~$\epsilon$
and could achieve a communication rate of
$1 - O(\sqrt{\epsilon})$ for oblivious corruptions (decided at the onset of the protocol),
and a rate of $1 - O(\sqrt{\epsilon \log \log \frac{1}{\epsilon}})$
for adaptive corruptions (decided during the execution of the protocol based on the communication history).  The approach in~\cite{Haeupler14}
needs to remember the whole history of the simulated transcript,
and a subsequent manuscript by Haeupler and Resch~\cite{abs-1805-06872}
considered low-memory simulation. These works achieve robust communication by adding redundancy i.e., exchanging hash values. When hash values do not match, the parties backtrack. The idea for small storage is to view the protocol $\Pi$ as a computational DAG and only the identities of the nodes are saved for backtracking to reduce memory usage. 
Here are our main contributions.
%\anti{read this paragraph again}

\begin{compactitem}

\item We consider low-memory robust simulation of more general \emph{client-server}\footnote{
In some literature~\cite{DBLP:journals/talg/ShiCRS17}, the client-server model refers to the case of one server and many clients.
However, in multi-server private information retrieval schemes~\cite{ChorKGS98}, there is one client and more than one server.  Hence,
to avoid confusion, we will use the terminology of one Alice and many Bobs.
} 
interactive protocols,
in which a \emph{leader} (known as Alice) communicates with $m$ other \emph{members} (known as Bobs), who do not communicate with one another;
the special case of $m=1$ lead to a two-party protocol.  

Our communication rate is interesting for small values of~$m$. As we shall see in Section~\ref{sec:pir},
for $m = 2$ or $4$, this setting can be applied
to $m$-server information theoretic private information retrieval schemes~\cite{ChorKGS98,BeimelI01}.

\item  We observe that
even for the special case of $m=1$ and oblivious corruptions, 
low-memory simulation is not previously well understood.
As we shall explain, the approach in~\cite{abs-1805-06872} to just 
save the nodes in the DAG without taking the transcript history into account
 will lead to correctness issue (which is elaborated in Section~\ref{sec:overview}).
We resolve this issue by saving hashes of prefixes of past transcripts.
Our information theoretic technique can defend against only oblivious noisy channels. Note the previous work of \cite{Haeupler14}
that defends against adaptive noisy channels require large memory usage.
 %with large memory defend against adaptive noisy channels as long as the adversary is computationally efficient or as long as the parties share some randomness unknown to the adversary. 
As we explain later, in the information-theoretic setting with low memory where the adversary has access to the shared randomness generation it is not clear whether adaptive corruptions can be achieved. 
%\anti{read this}
\end{compactitem}

\begin{theorem}[Our Main Result]
\label{th:main}
Suppose $\Pi$ is an interactive protocol in which the leader Alice
communicates with each of $m$ Bobs through a noiseless channel, and the Bobs do not 
communicate with one another;
suppose further that at most $n$ bits are transmitted in each channel in~$\Pi$.

Then, there is a transformation procedure $\mathfrak{T}$
that, given oracle access to $\Pi$ for some party~$X$ (either Alice or some Bob),
together with parameters
$n$, $m$, $\delta \in 
[\exp(-\Theta(n^{0.249}  \epsilon)), \frac{1}{n}]$
and $0 < \epsilon < 1$ satisfying $m \leq \min\{ n^{0.25}, O(\frac{1}{\epsilon})^{0.199}\}$,
will produce an interface for party~$X$ in another protocol $\Pi'$
such that the following holds.

\begin{enumerate}

\item If protocol $\Pi'$ is run where the communication channel
between Alice and each Bob has \textbf{oblivious} error rate~$\epsilon$, then,
except with probability~$\delta$, $\Pi'$ correctly simulates $\Pi$.

\item In $\Pi'$, for each (noisy) communication channel,
the number of bits transmitted is at most
$(1 + O(m^{2.5} \sqrt{\log m}) \cdot \sqrt{\epsilon}) \cdot n$.

\item If party~$X$ has a memory usage of $M_X$ bits and is involved in~$\kappa$  (= $1$ for each Bob, or $m$ for Alice) communication channels in~$\Pi$,
then its memory usage (in bits) in $\Pi'$ is
at most
$O(\kappa \log \frac{1}{\delta} \log n) + O(\log n) \cdot M_X$.
\end{enumerate}
\end{theorem}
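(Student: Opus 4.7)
The plan is to build $\Pi'$ as a block-wise simulation in the spirit of Haeupler's framework~\cite{Haeupler14}, adapted to the client-server topology and modified to satisfy the memory bound. I divide the $n$-bit execution of $\Pi$ on each channel into blocks of $r$ consecutive rounds each, where $r$ is tuned to balance the hash length, block size, and target error probability~$\delta$. Within a block, each party runs its share of $\Pi$ using the oracle, appending the simulated bits to its working transcript for that channel. After a block, the two endpoints of the channel exchange short hashes (of length $\Theta(\log(1/\delta))$ bits, drawn from a common seed agreed upon in a setup phase) that fingerprint the full simulated transcript prefix up to the current checkpoint; comparing these hashes determines whether to \emph{commit} and advance, or \emph{backtrack} by one block.

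The key departure from~\cite{abs-1805-06872} is what each party stores. The DAG-only approach is unsafe, because two distinct transcript histories can reach the same DAG node via different inputs, so node agreement does not imply history agreement, and the state surviving a commit can silently diverge across parties. My fix is to maintain, per channel, a short stack of \emph{hashes of verified transcript prefixes}, alongside $O(\log n)$ checkpointed copies of the party's $\Pi$-state (each of size $M_X$) laid out in a Savitch-style binary schedule so any past checkpoint can be restored via replay from the nearest saved state. When a backtrack is triggered, we use the oracle's resume capability to re-enter $\Pi$ from the last committed checkpoint; when we commit we push one new hash. Each hash-stack entry is $O(\log(1/\delta))$ bits and the stack depth is $O(\log n)$, giving per-channel memory $O(\log(1/\delta)\log n)$ on top of the $O(\log n)\cdot M_X$ state term, which matches item~(3).

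For rate and correctness I adopt Haeupler's potential argument. Define a potential $\Phi$ that rises by one per verified block of progress and falls for each (genuine or spurious) backtrack. An oblivious adversary fixes its $\epsilon n$ corruptions in advance, so it cannot aim them at the hash bits with any advantage beyond the fraction of rounds those hashes occupy. Hence, except with probability~$\delta$ (union bound over the $O(n/r)$ hash comparisons), each successful comparison reflects true transcript agreement and each failed comparison consumes $\Omega(r)$ of the adversary's budget. Optimizing $r$ against the hash overhead per block and the backtracking wastage recovers the $1+O(\sqrt{\epsilon})$ bound in the two-party ($m=1$) case.

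The main obstacle is extending this to $m$ Bobs. Alice's $\Pi$-state is shared across her $m$ channels while each Bob holds only its own channel-local state. The plan is to run the channels in lock-step super-blocks: Alice simulates one $\Pi$-block touching all channels, then sequentially exchanges hashes with each Bob, and commits the super-block only when \emph{every} Bob's hash matches; otherwise she backtracks on the offending channel(s) and rolls back her global state when the discrepancy lies in bits she herself sent. The $m^{2.5}\sqrt{\log m}$ overhead in item~(2) then decomposes as an $O(m)$ factor for hash exchanges on $m$ channels per super-block, an $O(\sqrt{m})$ factor from the per-channel backtracking budget against an adversary who may concentrate corruption on any subset of channels, and an $O(m\sqrt{\log m})$ factor from simultaneously union-bounding hash collision events and inflating the block size so the overall failure probability stays below~$\delta$. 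The hypotheses $m\le n^{0.25}$ and $m\le O(1/\epsilon)^{0.199}$, together with the assumed range of $\delta$, are exactly what is needed to keep $r$ in a regime where this multi-channel potential argument goes through and the memory-state replay remains within $O(\log n)\cdot M_X$ bits.
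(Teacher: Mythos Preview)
Your proposal has two genuine gaps that would prevent the argument from going through.

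\textbf{Synchronization under corrupted control bits.} You describe a commit-or-backtrack-one-block scheme: after each block, compare hashes and either advance or retreat by one block. But the hash bits and the commit/backtrack decision are themselves sent over the noisy channel. A single corruption there can make Alice commit while Bob backtracks; they are now one block apart, simulate incompatible things in the next block, and the discrepancy can grow unboundedly. Haeupler's framework, and this paper, do \emph{not} backtrack by one block. They use the meeting-point mechanism: each party keeps a counter~$\verif$ that doubles through powers of two, votes over $\Theta(\verif)$ epochs for candidate meeting points $\MPone=\verif\lfloor\Point/\verif\rfloor$ and $\MPtwo=\MPone-\verif$, and only rolls back when a candidate accumulates $0.4\verif$ votes. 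An additional $\error$ counter detects when the $\verif$ values themselves have desynchronized and forces a reset. The potential function in the paper has six terms ($\agree,\disagree,\verif,\error,\alpha\!+\!\beta,\gamma$) with $m$-dependent coefficients precisely to track this machinery; a potential that only counts committed blocks minus backtracks cannot absorb the desynchronization cases. Your ``Savitch-style'' checkpoint stack is not the same structure and you have not shown it supports a re-synchronization protocol.

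\textbf{Transmitted hash length.} You send hashes of length $\Theta(\log\frac{1}{\delta})$. With $r$ rounds per block, the hash overhead is $\Theta(\log\frac{1}{\delta})/r$ and the backtracking overhead is $\Theta(\epsilon r)$; balancing gives overhead $\Theta(\sqrt{\epsilon\log\frac{1}{\delta}})$, not $\Theta(\sqrt{\epsilon})$. Since $\delta$ may be as small as $\exp(-\Theta(n^{0.249}\epsilon))$, this is far from the claimed bound. The paper separates the two roles: a \emph{long} hash of length $o=\Theta(\log\frac{1}{\delta})$ is \emph{stored} (never sent) to fingerprint transcript prefixes, while only a \emph{short} hash of length $c=\Theta(\log m)$ of the (index, long-hash) pair is transmitted. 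Short-hash collisions are then not controlled by a union bound but by a Chernoff argument showing at most $O(mn\epsilon)$ collision epochs occur except with probability $\exp(-\Theta(n^{0.25}\epsilon/I))$; this is where the phase structure with fresh randomness every $I$ epochs is used.

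Finally, your decomposition of the $m^{2.5}\sqrt{\log m}$ factor is not how it arises. In the paper it comes from $r=\Theta(\sqrt{c/(m^5\epsilon)})$ with $c=\Theta(\log m)$, where the $m^5$ stems from the potential dropping by $O(m^4)$ in a bad epoch (because the coefficients $C_3,\ldots,C_6$ are degree-$\leq 3$ polynomials in $m$) times $O(mn\epsilon)$ bad epochs. Alice cannot ``backtrack on the offending channel(s)'': her $\Pi$-state is global, so she must find a \emph{common} meeting point agreed by all $m$ Bobs simultaneously, which is why the potential constants blow up in $m$.
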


\noindent \textbf{Paper Organization.} While we are trying to give
the precise results of this paper in the above description as soon as possible,
readers unfamiliar with the formal definitions and settings can
first refer to Section~\ref{sec:prelim}.  The research background
and existing results are given in Section~\ref{sec:related}.
An overview of the contribution and methods of this paper is given in Section~\ref{sec:overview};
in particular, we give detailed comparisons with the previous works~\cite{Haeupler14,abs-1805-06872}.
In Section~\ref{sec:pir}, we give an application of Theorem~\ref{th:main}
to multi-server private information retrieval schemes~\cite{ChorKGS98}.

In Section~\ref{sec:data_structure}, we describe the data structure maintained by each party,
and defer proofs related to hashes and randomness in Section~\ref{sec:random}.
The algorithms are described in Section~\ref{sec:algo},
from both Alice's and each Bob's perspectives.
While the overhead and memory usage analysis are relatively straightforward (given in Section~\ref{sec:algo}),
the correctness proof is quite technical.
The high-level proof strategy using potential analysis is described in Section~\ref{sec:proofmultiBob},
while the most technical proofs are deferred to Section~\ref{sec:potential_tech}.

\subsection{Related Work}
\label{sec:related}

The most relevant related works are the aforementioned paper by~Haeupler~\cite{Haeupler14}
and the subsequent manuscript~\cite{abs-1805-06872} that attempted to
perform low-memory robust simulation.  Naturally, the related works described in
them are also related works for this paper, but for the reader's convenience,
we recap some of the works that introduced important concepts and results.

\noindent \emph{Interactive Coding.}
Schulman~\cite{Schulman1996} was the first to design a coding scheme that
tolerates $\epsilon=\frac{1}{240}$ fraction of (oblivious) corruption
with some constant communication rate for interactive channels.
Later, Braverman and Rao~\cite{Braverman2011} improved the tolerated error rate to $\epsilon<1/4$.
Franklin et al.~\cite{Franklin2013} showed
that constant rate cannot be achieved if the error rate is above $\frac{1}{2}$.
%While the coding schemes in \cite{Braverman2014} can also withstand error rate $\epsilon<1/2$ by list-decoding.
%Moreover, \cite{Efremenko2015} proved a tolerated error rate $\epsilon<1/3$ for noisy channels with feedback
%and erasure rate $\epsilon<1/2$ for erasure channels with fixed alphabet.
As mentioned in~\cite{Haeupler14}, the initial interactive coding schemes are not computationally efficient
because the involved tree codes are complicated to construct.
Using randomization, later results~\cite{Brakerski12,Brakerski13,Gelles2011,GhaffariB2014} achieved polynomial-time coding schemes.

\noindent \emph{Small Error Rate $\epsilon$.}
For binary symmetric channel with random error rate~$\epsilon$,
Kol and Raz~\cite{Kol2013} 
have achieved rate~$1-\Theta(\sqrt{\epsilon\log \frac{1}{\epsilon}})$,
which is optimal for non-adaptive simulation,
i.e., the simulation has a fixed communication pattern.
In contrast, by using adaptive coding schemes,
the aforementioned paper~\cite{Haeupler14}
can circumvent the above lower bound,
and achieve a rate of 
$1-O(\sqrt{\epsilon})$ against oblivious corruptions
and a rate of $1-O(\sqrt{\epsilon \log\log \frac{1}{\epsilon}})$ against adaptive corruptions.
As mentioned in~\cite{Haeupler14},
a subsequent work~\cite{Gelles2015} has achieved 
a better rate of 
$1-O({\epsilon \log \frac{1}{\epsilon}})$
for channels with feedback or erasure channels.

%was the first to approach interactive channel capacity and proved the
%$1-\Omega(\sqrt{H(\epsilon)})
%= 1-\Omega(\sqrt{\epsilon\log \frac{1}{\epsilon}})$
%interactive channel capacity bound for random errors.
%While \cite{Kol2013}'s results were based on fixed communication order,
%\cite{GhaffariBM2014} introduced adaptive coding schemes, which have no restriction on the communication order for all parties, and showed that the schemes can tolerate $\epsilon<2/7$ fraction of adversarial errors.
%
%
%\cite{Haeupler14} showed coding schemes with a better bound $1-O(\sqrt{\epsilon})$ against random errors and oblivious adversaries,
%and $1-O(\sqrt{\epsilon \log\log \frac{1}{\epsilon}})$ against fully adaptive adversaries.
%\cite{Gelles2015} was a subsequent work of \cite{Haeupler14} and showed that a communication rate $1-\Theta(H(\epsilon))$
%is achievable for channels with feedback or erasure channels.

\noindent \emph{Multi-party Case.}
The multi-party case has also been studied
for random~\cite{DBLP:conf/stoc/RajagopalanS94,DBLP:journals/jacm/BravermanEGH18}
and adversarial~\cite{DBLP:conf/innovations/0002KL15,DBLP:conf/podc/GellesKR19} corruptions.
However, these constructions use large memory and guarantees
only $O(1)$ communication overhead for some restricted error rates.  In contrast,
we consider gracefully degrading communication overhead that tends to 1 as the error rate tends to 0.

\noindent \emph{Small Memory Usage.}
According to~\cite{abs-1805-06872},
their model was influenced by the work of Brody et al.~\cite{Brody2013} on space-bounded complexity.
Moreover, low-memory robust simulation of interactive protocols
has applications in robust circuits design~\cite{Neumann1956,Kleitman1994,Kalai2012},
where the details are described in~\cite{abs-1805-06872}.

\section{Preliminaries}
\label{sec:prelim}

We consider robust simulation of an interactive communication protocol $\Pi$
between the following parties:  \emph{Alice} (denoted as $A$) is the \emph{leader}, and there
are $m$  other \emph{members} known as \emph{Bobs} (denoted as $B_i$ for $i \in [m]$).  Each Bob only communicates with Alice
and the Bobs do not communicate with one another.  The special case $m=1$ is the
usual 2-party protocol.

\noindent \textbf{Simplifying Assumptions.} The original protocol $\Pi$ 
%is deterministic
%otherwise, we can assume that each party has some extra random string incorporated in its initial internal state. 
proceeds in synchronous \emph{rounds}.
We describe what happens in each round from the perspective of Alice.  At the beginning
of a round, the internal state of Alice will determine, for each $i \in [m]$,
whether she is supposed to send 1 bit to, or receive 1 bit from $B_i$ in this round;
observe that in general, if the protocol requires Alice to send a message of $t$ bits,
we can view this as $t$ rounds, in each of which Alice sends 1 bit.

In the case that Alice is supposed to send 1 bit to $B_i$, her internal state
will determine the value of the bit sent to $B_i$.  After the $m$ bits are transmitted
between Alice and the $m$ Bobs, the internal state of Alice will change accordingly,
and the protocol goes to the next round.   We use $n$ to denote (an upper bound on) the number of rounds of the protocol.  Typically, $m$ is not too large; in particular, our proofs need 
$m \leq \min\{ n^{0.25}, O(\frac{1}{\epsilon})^{0.199}\}$.
Without loss of generality, we can assume that after the protocol $\Pi$ terminates,
Alice continues to send \emph{imaginary} zeroes forever so that we will never run out of
steps to simulate later.

The perspective of each Bob is similar, except that each
Bob only communicates with one Alice.  The number of bits needed to store the state of a party
is known as their \emph{memory usage}.  For each $i \in [m]$, the transcript
between Alice and $B_i$ is a bit string representing the bits transmitted between them.
Clearly, the length-$t$ prefixes (i.e., the first $t$ transmitted bits) of the $m$ transcripts between Alice and 
Bobs determine the internal state of Alice at the end of round $t$.
However, in the original protocol, the memory usage of Alice could be much less than $O(mt)$.

\noindent \textbf{Noisy Communication Channel.}  The original protocol $\Pi$ assumes
that communication is error free.
In a noisy channel, a round might be corrupted, i.e., the value of the transmitted bit is flipped when received
(without the knowledge of the sender); hence, the transcripts of the
two parties can be different.
The error rate of a communication channel
between two parties is $\epsilon$, if, for 
a communication channel through which $n'$ bits are transmitted 
in a protocol,
the number of corrupted bits is at most $\epsilon n'$.
We consider \textbf{computationally unbounded} adversaries in this paper.

\noindent \textbf{Oblivious vs Adaptive Corruptions.}  We assume the adversary knows
(an upper bound on) the number $n$ of rounds of the protocol.
The adversary is \emph{oblivious}, if in which rounds corruption happens is determined
at the beginning of the protocol.  In particular, if each round
is corrupted independently with probability $\epsilon$, the channel is oblivious and
has error rate $O(\epsilon)$ with high probability, by the Chernoff Bound.
On the other hand, an adversary is \emph{adaptive}, if
whether to corrupt a certain round can depend on the bits transmitted in all previous rounds.

\noindent \textbf{Robust Simulation.} 
The goal is to simulate the original protocol $\Pi$
with another protocol $\Pi'$ over noisy communication channels,
without increasing the round complexity or the memory usage too much.
Since communication can be corrupted,
parties might have to roll back their computation when error is detected.
This means that parties might need to increase their memory usage.  

If one only cares about the final computation output,
then a trivial approach might be for all Bobs to send their inputs to Alice who will perform all the computation.
Besides having a large communication overhead in the case of long inputs,
this approach does not respect the intermediate states of a party,
which can be important in some applications, such as private information retrieval that
we will discuss later.

Hence, we define simulation formally as follows.  
We require that throughout the protocol $\Pi'$, each party maintains a pair $(j, s)$,
where $j$ is a round number and $s$ is an internal state in the original protocol $\Pi$.
We assume that each party
has an \emph{imaginary} write-only \emph{state tape} with $n$ entries such that the $\ell$-th entry
is supposed to store their state at the end of the $\ell$-th round of $\Pi$.
At the
end of each round of the new protocol $\Pi'$, the contents of
the $j$-th entry of its state tape is (over)written with $s$.

We use $\delta$ to denote the failure probability of correct simulation.
In this paper, we consider the range $\exp(-\Theta(n^{0.249}  \epsilon)) \leq \delta \leq \frac{1}{n}$;
since $\log \frac{1}{\delta} = \Omega(\log n)$, this can sometimes simplify the expression.

%
%Moreover, each party has their own \emph{simulator} such that at the
%end of each round of the new protocol $\Pi'$, the simulator
%can inspect only the internal state of its party in $\Pi'$ (that includes the pair $(j,s)$) and decide whether to overwrite the $j$-th entry its state tape with $s$.

\begin{definition}[Robust Simulation]
\label{defn:robust_sim}
A protocol $\Pi'$ correctly simulates $\Pi$, if at the end of $\Pi'$, the contents of all the state tapes
correspond exactly to the internal states of all parties in all rounds in $\Pi$.

In the case that $\Pi$ is randomized,
the joint distribution of the contents of the state tapes 
is the same as that of the internal states of all parties in all rounds in $\Pi$.
\end{definition}

%\noindent \textbf{Remark.} It will be quite evident from the description of
%our transformed protocol what the simulators should do.  Hence, to avoid too cumbersome
%description, we will omit mentioning the state tape and the simulators
%and let the reader check that Definition~\ref{defn:robust_sim} is indeed satisfied.

\noindent \textbf{Communication Efficiency.}  Suppose the original protocol $\Pi$ takes
$n$ rounds (which is also the number of transmitted bits)
and it is simulated by $\Pi'$ that takes $n' \geq n$ rounds.
Then, the \emph{communication rate} is $\frac{n}{n'} \leq 1$
and the \emph{communication overhead} is $\frac{n'}{n} \geq 1$.

\noindent \textbf{Some Naive Approaches.}  Observe that we are
aiming to achieve gracefully degrading $1 + \poly(m) \cdot O(\sqrt{\epsilon})$ communication 
overhead, i.e., the overhead tends to 1 as $\epsilon$ tends to 0. 
This rules out certain straightforward approaches that will lead to a
communication overhead of at least some large constant (greater than 1).

\begin{compactitem}
\item \emph{Silent Rounds.}  As mentioned in~\cite{DBLP:conf/podc/GellesKR19},
Hoza observed that the ability to have silent rounds can encode information.  Essentially,
speaking in an odd-numbered round means 1, and speaking in an even-numbered round means 0.
However, this approach will blow up the number of rounds by a factor of 2.

\item  \emph{Cryptography.} If the adversary is computationally bounded, 
then cryptographical tools (such as time-stamped messages together with signature schemes) can be used to detect whether a message has been tampered with.
This can lead to a simpler correction mechanism.  However, the use of cryptography 
will in general lead to a communication overhead of some constant strictly greater than 1.
\end{compactitem}

\section{Overview of Our Contribution}
\label{sec:overview}

Our approach is based on the work of 
Haeupler~\cite{Haeupler14} on interactive protocols over noisy channels
and a subsequent manuscript~\cite{abs-1805-06872} on low-memory simulation.
In this section, we will give an overview of the approach and highlight
our contributions and how our techniques differ from the previous works.

\noindent \textbf{Intuition for Checksum Bits.}
Suppose, for simplicity, we consider the case where each transmitted bit
is flipped independently with probability $\epsilon$.
The main idea in~\cite{Haeupler14} is that for every \emph{epoch}
of $r$ transmitted bits in $\Pi$, there are $c = \Theta(1)$ checksum bits (which
can depend on previous epochs).  The term ``epoch'' can be applied to both the original protocol $\Pi$
and the transformed protocol $\Pi'$,
where one epoch in $\Pi'$ can simulate at most one epoch in $\Pi$ and have some extra computation and checksum bits.

Suppose further that if an error occurs in some epoch in $\Pi'$,
the error will eventually be detected by the checksum bits (maybe in subsequent epochs).
Even with these lenient assumptions, Haeupler~\cite{Haeupler14} has derived
a lower bound on the communication overhead of this approach as follows.  The checksum bits alone
will lead to a communication overhead of at least $\frac{r + c}{r}$.
Because the channel has error rate $\epsilon$,
each epoch has no error with probability around $1 - \epsilon r$,
which means each epoch has to be transmitted, in expectation, $1 + O(\epsilon r)$ times.
Hence, the communication overhead
is at least $\max\{1 + O(\epsilon r), 1 + \frac{c}{r}\} = 1 + O(\sqrt{c \epsilon})$,
which is achieved when $r = \Theta(\sqrt{\frac{c}{\epsilon}})$.
We shall later see that when there are $m$ Bobs, both $r$ and $c$ will depend on~$m$.

\noindent \textbf{Meeting Point Based Backtracking.} As credited by Haeupler~\cite{Haeupler14},
their protocol uses \emph{meeting points} for backtracking, which
originated in Schulman's first interactive coding paper~\cite{Schulman92}.
The idea is that when a party is currently at the end of epoch~$p$ in $\Pi$
of the simulation, then for each $k$ that is a power of two,
the party can possibly revert to the end of the epoch indexed by the following meeting point:
$\mathsf{MP}_k(p) := k \cdot \floor{\frac{p}{k}} - k$.  The following ideas are crucial
to the success of this method.

\begin{compactitem}
\item \emph{Shared Randomness for Hashing.}  To reduce communication overhead,
two parties can find out if they have a common meeting point by comparing
hashes of the relevant prefixes of their transcript histories.  Although inner-product hash
uses randomness whose length is the same as the object to be hashed,
one does not need totally independent randomness.  In fact, a seed that is logarithmic in length
can be used to generate biased randomness~\cite{Naor1993} which will be good enough.
Hence, at the beginning of the protocol, parties just need to robustly agree on this shorter seed.

\item \emph{Remembering the Transcript.}  It is important that each party remembers his complete transcript, i.e., history
of communication.
Even though two parties might have different transcript prefixes of a certain length,
they might think that the corresponding meeting point is valid, because of
hash collision or corruption.  However, since the original transcript is still available to each party,
as the protocol proceeds, a different randomness will eventually reveal the discrepancy.

\end{compactitem}

Although the idea of meeting points is intuitive, an intricate potential function
was used to argue that the simulation can be completed with small communication overhead.
Intuitively, a potential function is used to keep track of the progress of the simulation and corruptions made so far
such that if the potential function is large enough, then the original protocol $\Pi$ is completed.
As we shall see, for small memory, we will use an even more sophisticated potential function.

\noindent \textbf{Challenges for Low-Memory Simulation.}  The manuscript~\cite{abs-1805-06872} has attempted to lower the memory usage of the simulation.
The intuition is that if the original protocol $\Pi$ can be represented by a DAG
with $s$ nodes, then $O(\log s)$ bits is needed to store each meeting point.
Since at most $O(\log n)$ meeting points are stored,
the protocol can be simulated with $O(\log s \log n)$ bits of memory.
However, we think there are several issues with this approach.

\begin{compactitem}

\item \emph{Too restrictive model.}  It is assumed that the original protocol $\Pi$ is somehow transformed
into a DAG that is known by all parties, who must all have the same memory usage.
However, in other settings such
as the client-server model, the client party might only have partial knowledge of the original protocol $\Pi$
that is relevant to him.  Hence, a more desirable transformation should allow a party
to simulate its part of the original protocol $\Pi$ without any extra knowledge (such as the DAG representing the whole protocol).
Furthermore, in the original protocol $\Pi$, the client party might have a much smaller memory usage
than the server party, and it would be undesirable if every party needs to have the same memory usage in the simulation.

\item \emph{Just remembering the state is not enough.}  A careful analysis of the manuscript~\cite{abs-1805-06872} 
reveals that in their approach,
only a node is saved in a meeting point, but the information on how the node was reached (i.e., the transcript)
is not saved or used to compute the hash of that meeting point.  After all,
the whole point of a low-memory simulation is to avoid remembering the transcript.

However, this poses a serious correctness issue.
The following simple example suggests that the hash for a meeting point should involve the corresponding prefix of the transcript.  Suppose each of Alice and Bob has an input bit,
which is transmitted to the other party in two rounds,
after which the protocol is in node~0 if the bits agree, and in node~1, otherwise.
Suppose the parties initially have different bits.  Then, if there is no corruption,
the protocol should be in node 1 after two rounds.
However, if both transmissions are corrupted,
then each party thinks (incorrectly) that the other party has the same bit.
If each epoch has 2 rounds, then both parties will think that they are in node~0 after the first epoch and computing the hash just based on the node identity will not detect the mistake.

\end{compactitem}

\noindent \textbf{Our Solution: Saving Hashes of Transcript Prefixes.}  In addition to using hashes to
produce checksums, we observe that they can be used to reduce memory storage for transcript history.
To detect inconsistent transcript history, two parties do not actually need to know their exact past transcripts,
but just need to tell that they are different.  Therefore, instead of saving their whole transcript history,
when a party saves a meeting point, it just needs to save the hash of the corresponding transcript prefix.
Even though the underlying intuition is simple, we still need to pay attention to the following details.

\begin{compactitem}
\item \emph{Sharing Randomness with Low Memory.}  Recall that to produce one bit of inner product hash requires
a randomness whose length is the same as that of the object to be hashed.  Moreover, since we only remember
the hash of a transcript prefix, if hash collision happens for inconsistent transcript history between two parties,
then there is no way to recover correctness.  Hence, to achieve failure probability of at most $\delta$,
the hash for a transcript prefix needs to have at least $\Omega(\log \frac{1}{\delta})$ bits.

Even though biased randomness can be produced with a seed of logarithmic length~\cite{Naor1993},
we cannot afford too much space to store the stretched randomness explicitly.
In Section~\ref{sec:random}, we will describe a low-memory variant of generating biased randomness
that unpacks random bits from the seed as we need them.

\item \emph{Saving Transcript Hashes Can Defend Against Only Oblivious Corruptions.}
At first sight, since the adversary can observe the shared randomness,
it is natural that the hash collision analysis is valid only for oblivious corruptions.
However, the analysis in~\cite{Haeupler14} treated an adaptive adversary
as a collection of oblivious adversaries.  By slightly increasing the length
of the checksums, a union bound over the collection of oblivious adversaries
can still make the hash collision analysis work.

However, such an approach cannot work for the hashes of transcripts.  The reason is
that in~\cite{Haeupler14}, each party remembers its complete transcript history.  Hence,
even when a hash collision occurs for some checksum, the underlying
discrepancy can still be potentially discovered later.  On the other hand,
to carry out such a union bound for the hashes of transcripts,
the length of the hash would have to be as long as the transcript itself, which
defeats the purpose of achieving low memory in the first place.  This is a major reason
why our current approach only works for oblivious corruptions.

It was suggested in~\cite{Haeupler14} that encryption can
be used against an adaptive adversary that is computationally bounded.
Specifically, one could first encrypt the seed for the shared randomness.  However, even if the adversary does not
know the secret biased randomness initially, once hashes are being produced,
the adversary can learn some information about the biased randomness to make future corruptions.
Hence, perhaps as future work, more careful analysis is required to claim that
using cryptography can defend against adaptive adversaries that are computationally bounded.

\end{compactitem}

\noindent \textbf{Generalization to Client-Server Setting.}
After replacing the hashes for nodes in~\cite{abs-1805-06872}
with hashes for transcript prefixes,
the meeting point based backtracking approach will work for the simulation
of two-party protocols.  When we adapt this approach to a client-server interactive protocol,
the perspective of each Bob (Algorithm~\ref{alg:ComputeOblivious}) is essentially the same as if he is in a two-party protocol, because he can only see one Alice.
On the other hand, for Alice to proceed the simulation,
she needs to make sure that she and all other Bobs have a consistent transcript history;
moreover, if she needs to roll back the computation,
she also needs to make sure that there is a common meeting point.

After adapting the flow structure of the simulation for Alice,
the analysis of communication overhead and memory usage follows 
directly from the algorithm parameters.  The difficulty is how to choose
the parameters to ensure that the simulation is correct with the desired
probability.  When we adapt the potential function analysis in~\cite{Haeupler14}
to multiple number of Bobs, the constants will have a dependence on $m$,
which will eventually affect the communication overhead.  From Alice's perspective,
since there are $m$ channels, each round is $m$ times more likely to be corrupted;
hence, the overhead should be at least $1 + \Omega(\sqrt{m \epsilon})$,
while our current approach achieves $1 + \widetilde{O}(m^{2.5} \cdot \sqrt{\epsilon})$.

As we shall see, this dependence on~$m$ comes from the complicated potential analysis
in Section~\ref{sec:potential_lb} for low-memory usage, which is adapted
from~\cite{Haeupler14,abs-1805-06872}.  In particular, we make some of the arguments
to accommodate for unavailable meeting points more explicit.  As future work, the dependence
on $m$ can probably be improved by a better potential analysis.

\subsection{Application to Multi-Server Private Information Retrieval Schemes}
\label{sec:pir}

Private information retrieval (PIR)~\cite{ChorKGS98} allows a client to
outsource storage of some \textbf{read-only} data on
non-colluding servers such that the client
can access part of the data without each server knowing which part
of the data the client really needs.

In the seminal work~\cite{ChorKGS98},
some $N$-bit array $\Array[1..N]$ is stored in
each of $m$ servers.  To access some bit in $\Array$ 
indexed by~$i \in [N]$,
the client sends a (possibly randomized) message
to each of the $m$ servers,
and each of the servers responds with a message.
From the responding messages (and possibly together with the original
messages sent to the servers),
the client can decode the bit $\Array[i]$,
but each of the servers cannot learn the index~$i$, even with
unbounded computational power.
Since this paper concerns the case where the overhead is close to 1,
we consider PIR schemes with explicit constants in the guarantees.
The following results are for $m=2,4$ servers,
and the communication complexity refers to the total number
of bits transmitted to request one bit of $\Array$.

\begin{fact}[Multi-Server PIR Schemes~\cite{ChorKGS98,BeimelI01}]
\label{fact:pir}
There exist information-theoretic $m$-server PIR schemes to store an $N$-bit array such
that the communication complexity to request each bit
is:
\begin{compactitem}

\item $28 \sqrt[4]{N} + 4$ for $m=4$~\cite{ChorKGS98}, i.e.,
$7 \sqrt[4]{N} + 1$ bits between the client and each server.

\item $4 (6N)^{\frac{1}{3}} + 2$ for $m = 2$~\cite{BeimelI01}, i.e.,
$2 (6N)^{\frac{1}{3}} + 1$ bits between the client and each server.

\end{compactitem}
\end{fact}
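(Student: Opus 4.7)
The plan is to present the two classical PIR protocols from the cited works and verify the claimed constants. Both are instances of the ``cube'' template: the database is arranged on $[n]^d$ with $n^d \geq N$, the client sends random subsets of $[n]$ per dimension (modified at the target coordinate in a server-dependent way), each server replies with a single parity bit over an appropriate sub-cube of the database, and the client recovers the desired bit by XOR-combining the responses. Privacy holds because each server sees only a uniformly random tuple of subsets; correctness follows from a telescoping multilinear identity that expresses $\Array[i]$ as the XOR of the server replies.

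For the 4-server bound of Chor, Goldreich, Kushilevitz and Sudan, I would instantiate their cube scheme with $d=4$ and $n = \lceil N^{1/4}\rceil$, sending an $n$-bit indicator vector for each of four subsets of $[n]$ to each of the four servers with the prescribed XOR modifications depending on $(i_1, i_2, i_3, i_4)$, and having each server reply with a single bit that is the XOR of the database values indexed by the Cartesian product of the (possibly modified) subsets. A careful bit count following the exact encoding in~\cite{ChorKGS98} yields at most $7 N^{1/4} + 1$ bits on each client-server channel (query plus reply), and hence at most $28 N^{1/4} + 4$ bits in total over the four channels.

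For the 2-server bound of Beimel and Ishai, the analogue uses $d=3$ and $n = \lceil (6N)^{1/3}\rceil$, where the constant $6$ absorbs the overhead of embedding $N$ bits into a 3-cube of side $\lceil (6N)^{1/3}\rceil$ together with the particular subset encoding. Since only two servers are available, correctness cannot rely on a simple $2^d$-server XOR identity; instead, the Beimel-Ishai construction uses their polynomial/derivative trick (one server precomputing partial evaluations and the other supplying correction terms) so that each of the two client-to-server queries has length $2(6N)^{1/3}$ bits and each reply is a single bit, summing to exactly $4(6N)^{1/3} + 2$.

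The main obstacle is not conceptual but bookkeeping: the constants $7$, $28$, $6$, and $4$ are dictated by specific choices in the original constructions, namely the precise indicator-vector encoding of the subsets, the rounding convention $n = \lceil N^{1/d}\rceil$ (or $\lceil (6N)^{1/3}\rceil$), and the Beimel-Ishai two-server reduction trick. The proof is therefore a careful verification that these constants match what the original papers produce, rather than any new construction or new analysis beyond what is already in~\cite{ChorKGS98,BeimelI01}.
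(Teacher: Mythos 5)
The paper does not prove this Fact; it is stated as a bare citation to~\cite{ChorKGS98,BeimelI01} with the explicit constants lifted from those works, so there is no ``paper proof'' to compare against. You are attempting to reconstruct the underlying constructions, which is fine in principle, but your description of the $4$-server scheme is internally inconsistent and would not yield the claimed constant.

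You write that with $d=4$ and $n=\lceil N^{1/4}\rceil$ each of the four servers receives four $n$-bit indicator vectors and ``replies with a single bit.'' That is the basic $2^d$-server cube scheme, which for $d=4$ requires $16$ servers, not $4$. With single-bit replies your per-channel cost would be $4n+1 = 4N^{1/4}+1$, which contradicts the $7N^{1/4}+1$ figure you then assert. To get four servers and $d=4$ one must use the covering-code emulation of~\cite{ChorKGS98}: $\{0,1\}^4$ is partitioned into four blocks, each consisting of a codeword and three of its Hamming neighbors; a physical server answers for its own codeword with one bit and for each of its three assigned neighbors with an $n$-bit vector of candidate parities (since it does not know which coordinate of the relevant subset was toggled). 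The reply is thus $3n+1$ bits, giving $4n + (3n+1) = 7n+1$ per channel and $28N^{1/4}+4$ in total, matching the Fact. A similar caution applies to the $2$-server bound: the Beimel--Ishai scheme is not the naive cube scheme either, and its $2(6N)^{1/3}+1$ per-channel cost comes from a specific polynomial-based encoding; asserting that ``each reply is a single bit'' and that the constant $6$ ``absorbs overhead'' is plausible but not a verification. In short, the proposal conflates the $2^d$-server cube scheme with its covering-code reduction, and the bit-count claims are asserted rather than derived; the gap is precisely the emulation step that the Fact's constants are designed to reflect.
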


We use our technique to consider the case when the communication channel
between the client and each PIR server can be corrupted by oblivious noise.
Since each request induces only a constant number of messages (where each
message has $O(\sqrt[4]{n})$ bits), it follows that
the memory usage is of the same order as the transcript history.  However,
our low-memory technique will be useful when we consider a sequence of
PIR requests together.  One important property is that the PIR scheme
is history independent, i.e., neither the client nor the servers need to retain 
any information about previous requests.  This is crucial when the communication channel
is noisy, because corrupted requests in the past will not compromise the security of
future requests.  Our Theorem~\ref{th:main} gives the following corollary.

\begin{corollary}[Multi-Server PIR on Noisy Channels]
Suppose a client runs a program that needs
to make $T$ PIR requests to an $N$-bit outsourced array,
where each communication channel has oblivious error rate $\epsilon$
and the desired overall failure probability is $\delta$.  Then,
applying Theorem~\ref{th:main}, if
the parameters satisfy the hypothesis with (1) $n = T \cdot (7 \sqrt[4]{N} + 1)$
and $m=4$ servers, or (2)
$n = T \cdot (2 (6N)^{\frac{1}{3}} + 1)$ and $m = 2$ servers,
the program will run correctly except with probability~$\delta$, PIR security
is still maintained,
and the following are also achieved.

\begin{enumerate}

\item The number of bits transmitted between the client
and each server is $(1 + O(\sqrt{\epsilon})) n$.

\item If the original memory usage of the program is $M_c$ bits with noiseless PIR channels,
then the new memory usage is $O(\log \frac{1}{\delta} \log n + M_c \cdot \log n)$ bits.

\item In addition to storing the $N$-bit array,
%if the original memory usage of noiseless PIR scheme is $M_s$
the memory usage of each server is $O(\log \frac{1}{\delta} \log n + \frac{n}{T} \log n)$.
\end{enumerate}

\end{corollary}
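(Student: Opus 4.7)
The plan is to instantiate Theorem~\ref{th:main} on the protocol $\Pi$ obtained by concatenating the $T$ PIR requests into a single client--server interactive protocol, and then separately argue that privacy is preserved by the transformation.

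First, I would set up $\Pi$ as the interactive protocol in which Alice (the client) executes the program, issuing $T$ PIR queries in sequence, each handled by the $m$ Bobs (the servers) following the PIR scheme of Fact~\ref{fact:pir}. Concatenating the $T$ invocations, the number of bits transmitted between Alice and each Bob is exactly $n = T\cdot(7\sqrt[4]{N}+1)$ for $m=4$ or $n=T\cdot(2(6N)^{1/3}+1)$ for $m=2$. Since $m\in\{2,4\}$ is a small constant, the hypothesis $m\le\min\{n^{0.25},O(1/\epsilon)^{0.199}\}$ is satisfied for all sufficiently large $n$ within the admissible range of $\delta$, so Theorem~\ref{th:main} applies. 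Item~1 of the theorem immediately yields that $\Pi'$ correctly simulates $\Pi$ except with probability $\delta$, which (by Definition~\ref{defn:robust_sim}) means the state tapes of Alice and of every Bob reconstruct the entire execution, so in particular each Bob correctly answers each PIR request and Alice correctly decodes each bit.

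Next, the communication and memory bounds come from items~2 and~3 of Theorem~\ref{th:main} together with $m=O(1)$. Item~2 gives per-channel overhead $(1+O(m^{2.5}\sqrt{\log m})\sqrt{\epsilon})n=(1+O(\sqrt{\epsilon}))n$. For item~3, Alice has $\kappa=m=O(1)$ channels and original memory usage $M_c$ (the program's memory when run against noiseless PIR oracles; note $M_c$ includes $O(1)$ working memory for a single PIR invocation because the PIR scheme is history independent and request messages are $O(\sqrt[4]{N})$ long), giving $O(\log\frac{1}{\delta}\log n)+O(\log n)\cdot M_c$. Each server has $\kappa=1$ and original memory usage equal to the $N$-bit array plus $O(1)$ working memory for a single request of size $n/T$, so separating out the $N$-bit read-only storage, the remaining per-server memory is $O(\log\frac{1}{\delta}\log n+\frac{n}{T}\log n)$.

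Finally, for PIR privacy, the point is that each individual Bob's view in $\Pi'$ must be independent of the client's query indices. I would argue this as follows: the transcript on the channel between Alice and Bob $i$ in $\Pi'$ is a deterministic function of (a) the sequence of PIR messages Alice sends to Bob $i$ in $\Pi$, (b) Bob $i$'s responses, (c) the shared random seed for hashing, and (d) the oblivious noise pattern on that channel. Parts (c) and (d) are, by assumption, sampled independently of the client's indices; part (b) is a deterministic function of (a) and the $N$-bit array; and the joint distribution of the messages in (a) from Bob $i$'s perspective is, by the PIR security of Fact~\ref{fact:pir}, identically distributed across all choices of the client's index sequence (since each request is independently secure and the scheme is history independent across requests). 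Hence Bob $i$'s marginal view in $\Pi'$ is distributed identically across all possible index sequences, preserving information-theoretic PIR privacy.

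The main obstacle I expect is the privacy argument rather than the bookkeeping of parameters: one must be sure that the meeting-point backtracking and hash exchanges in $\Pi'$ never introduce a dependence on the queried indices that a single Bob could exploit. The history-independence of the PIR scheme and the oblivious-noise assumption together make this work, but the argument must carefully track that Alice's control messages (e.g., which hash/checksum she sends, or which meeting point she probes) are functions only of items (a)--(d) above and in particular do not break the per-request index-independence guaranteed by PIR security.
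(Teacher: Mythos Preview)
Your instantiation of Theorem~\ref{th:main} and the derivation of the communication and memory bounds are correct and match the paper, which in fact presents the corollary as an immediate consequence of Theorem~\ref{th:main} with no written proof beyond the preceding remark that history independence of the PIR scheme is the key enabling property.

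Your privacy argument, however, has a genuine gap. You assert that the transcript on the Alice--$B_i$ channel in $\Pi'$ is a deterministic function of (a) Alice's PIR messages to $B_i$, (b) $B_i$'s responses, (c) the shared hashing seed, and (d) the noise on \emph{that} channel only. This is false: in Algorithm~\ref{alg:AliComputeOblivious}, Alice's behaviour on channel $i$---whether she sends $(\bot,\bot,\bot,\bot)$ in the verification stage, whether she performs a real versus dummy epoch, and the evolution of her counters $\verif,\error,\Point$---is governed by short-hash comparisons across \emph{all} $m$ channels (see the branching at Line~\ref{algline:AliComputeAfterHash} and the subsequent $\forall i$/$\exists i$ tests). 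Whether such a comparison on channel $j\neq i$ reports a match, in an epoch where Alice's and $B_j$'s transcripts actually differ, depends on the long hash $H_{P,j}$, which is a function of Alice's PIR query to $B_j$ and hence of the secret index sequence. So $B_i$'s view in $\Pi'$ is not determined by channel-$i$ data plus index-independent randomness, and your reduction to single-server PIR security does not go through as written. To close the gap you would need to argue that the \emph{pattern} of match/mismatch events driving Alice's control flow is itself (statistically) independent of the PIR content---for instance, by first showing that whether $\sigma_{A,j}[1..p]=\sigma_{B_j}[1..p]$ is fixed by the noise pattern and the (inductively content-free) control flow, and then controlling the residual content dependence that enters only through short-hash collision probabilities on differing inputs. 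The paper does not work this out either, so you are not behind the paper here, but the argument you give does not yet establish the claim.
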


\section{Data Structure for Hashing Transcripts}
\label{sec:data_structure}

In this section, we show how each party can avoid storing the complete simulated transcript of the
original protocol~$\Pi$.
Instead, for each meeting point, a hash for the corresponding transcript will be sufficient.

\noindent \textbf{Notation.}  Recall that each epoch contains $r$ rounds of the original protocol $\Pi$,
which runs in at most $n$ rounds.  We use the array $\sigma[1..p]$ to denote a party's view of the transcript in $\Pi$
up to epoch~$p$ over
a (pair-wise) communication channel, i.e., $\sigma[i]$ (or $\sigma_i$) consists of the $r$ bits sent or received by that party in epoch~$i$
of the original protocol.  As discussed in Section~\ref{sec:overview},
a party needs to remember at least part of $\sigma$ to ensure correctness of simulation.
Moreover, as we shall see, the simulation is performed for $R = \ceil{\frac{n}{r}} + \Theta(m^5 n \epsilon)$ epochs, which 
is slightly larger than $\frac{n}{r}$ to ensure correct termination.

Each Bob is involved in only one communication channel. However, Alice is involved with $m$ Bobs, and needs to 
keep track of the $m$ communication channels.  
%For simplicity,
%we describe how each Bob maintains the data structure for his communication channel.

\noindent \textbf{Meeting Point.}  A meeting point is an index~$p$ that
represents the simulation at the end of epoch~$p$ of the original protocol.
The data structure for saving a meeting point consists of the following:

\begin{compactitem}

\item The index $p$ itself (which takes $\log_2 R$ bits).

\item For each involved communication channel,
an $o$-bit \emph{long} hash $H_p$ of its transcript $\sigma[1..p]$.
%,
%where $o = \Theta(\log \frac{1}{\delta})$.
We shall discuss the long hash in more details in Section~\ref{sec:long_hash}.

\item The state of the party at the end of epoch~$p$ of the original protocol~$\Pi$.

\end{compactitem}

\begin{claim}
For Alice, each meeting point takes $\log_2 R + M_A + m \cdot o $ bits,
where $M_A$ is the memory usage of Alice in the original $\Pi$.
\end{claim}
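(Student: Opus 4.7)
The claim is essentially an accounting exercise, directly unpacking the three-part data structure description given just above the statement. My plan is to simply itemize the three components of a meeting point from Alice's perspective and sum their sizes.

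First, I would note that the epoch index~$p$ ranges over $\{1, 2, \ldots, R\}$, so it can be encoded in $\lceil \log_2 R \rceil$ bits; following the convention implicit in the statement I would write this as $\log_2 R$ bits. Second, since Alice participates in exactly $m$ pairwise communication channels (one with each Bob $B_i$, $i \in [m]$), and for each such channel she stores one $o$-bit long hash $H_p$ of the transcript prefix $\sigma[1..p]$ on that channel, this contributes $m \cdot o$ bits in total. Third, the meeting point stores Alice's state at the end of epoch~$p$ of the original protocol~$\Pi$; by the definition of memory usage in the preliminaries (Section~\ref{sec:prelim}), any internal state of Alice in~$\Pi$ can be written down using $M_A$ bits.

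Summing the three contributions yields $\log_2 R + m \cdot o + M_A$ bits, which is exactly the claimed bound. The main thing to be careful about is the second component: one might be tempted to store a single hash of the ``global'' transcript history, but the data structure defined above the claim explicitly attaches one hash \emph{per involved channel}, and Alice is involved in all $m$ channels whereas each Bob is involved in only one. This per-channel convention is what drives the factor of $m$ in the hash term for Alice, and is the only subtle point in an otherwise purely additive calculation. I do not anticipate any technical obstacle; the statement is a direct corollary of the data structure definition and the definition of $M_A$.
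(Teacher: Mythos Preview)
Your proposal is correct and matches the paper's approach exactly: the paper does not give a separate proof for this claim, as it follows immediately from the three-item data structure description just above it (index $p$ costing $\log_2 R$ bits, one $o$-bit long hash per involved channel, and the stored internal state costing $M_A$ bits). Your itemization and summation is precisely the intended justification, and your remark about the per-channel hash convention being the only point of care is apt.
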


\subsection{Long Hash for Storing Transcript}
\label{sec:long_hash}

\noindent \textbf{Pre-shared Randomness.} Since the transcript can have $R$ epochs,
its length can be up to $R \cdot r$ bits.  Recall that we wish to produce an $o$-bit long hash,
where each hash bit is obtained using inner-product hash.  Therefore,
for each communication channel, we will need some pre-shared randomness $\mathcal{S}$
consisting of $o \cdot R \cdot r = \Theta(n \cdot o)$ bits between the corresponding two parties.

Observe that the communication overhead will be too large if $\mathcal{S}$ is transmitted directly.
As observed in~\cite{Haeupler14}, we do not need total independence for $\mathcal{S}$
and it is sufficient for $\mathcal{S}$ to be $\rho$-biased, where $\rho = 2^{-o}$.
The reader can refer to~\cite{Naor1993} for background on biased randomness.
As far as understanding this paper, one just needs to know that the two involving parties can agree on some shorter random seed $\core^*$, from which $\mathcal{S}$ can be extracted
such that the resulting hash collision probability is comparable to truly independent randomness.
The following result states the length of $\core^*$.

\begin{fact}
[Seed for Biased Randomness~\cite{Naor1993}]
\label{fact:generatebiasedrand}
Generating $\rho$-biased randomness of $q$ bits can be done using $\Theta(\log q + \log\frac{1}{\rho})$ independent random bits.
\end{fact}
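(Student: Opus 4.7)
The statement is the classical Naor--Naor $\rho$-biased generator bound, so my plan would be to exhibit a clean explicit construction meeting the upper bound and then note that the matching lower bound is a counting/coding-theoretic argument.

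I would start from the Fourier-analytic characterization: a distribution $D$ on $\{0,1\}^q$ is $\rho$-biased iff for every nonempty $S\subseteq[q]$ the parity bias $|\widehat{D}(S)| = |\mathbb{E}_{x\sim D}[(-1)^{\sum_{i\in S} x_i}]|$ is at most $\rho$. This reduces the problem to constructing a small sample space on which every nonempty subset-parity is near-uniform, and lets me bound bias by controlling a single scalar quantity per subset.

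For the construction, I would use the $\mathbb{F}_{2^\ell}$-based scheme of Alon--Goldreich--H\aa stad--Peralta (a realization of Naor--Naor's bound). Choose $\ell = \lceil \log_2 q + \log_2(1/\rho)\rceil + O(1)$ and fix an $\mathbb{F}_2$-linear identification $\{0,1\}^\ell \cong \mathbb{F}_{2^\ell}$. The seed is a pair $(\alpha,\beta)\in \mathbb{F}_{2^\ell}\times\mathbb{F}_{2^\ell}$ of total length $2\ell = \Theta(\log q + \log(1/\rho))$ bits, and the $i$-th output bit (for $i\in[q]$) is defined as $\langle \alpha^i,\beta\rangle$ with the power taken in the field and the inner product taken over $\mathbb{F}_2$.

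The bias bound is a one-line Schwartz--Zippel computation. Fix any nonempty $S\subseteq[q]$; by $\mathbb{F}_2$-linearity in the second argument, $\bigoplus_{i\in S}\langle\alpha^i,\beta\rangle = \langle P_S(\alpha),\beta\rangle$ where $P_S(X)=\sum_{i\in S}X^i\in\mathbb{F}_2[X]$ is a nonzero polynomial of degree less than $q$. Conditioning on $\alpha$: if $P_S(\alpha)\ne 0$ then over uniform $\beta$ the inner product is an unbiased coin (contributing $0$ to the bias), while if $P_S(\alpha)=0$ the inner product is identically $0$; hence the total bias is at most $\Pr_\alpha[P_S(\alpha)=0]\le q/2^\ell \le \rho$, as desired. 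The matching $\Omega(\log q + \log(1/\rho))$ lower bound follows from the standard observation that a $\rho$-biased distribution with support of size $2^s$ determines a $q$-dimensional $\rho$-balanced linear code of length $2^s$, together with a Plotkin-type inequality forcing $2^s \gtrsim q$ and $2^s \gtrsim 1/\rho$. The main obstacle I expect, in the small-memory setting of this paper, is not the existence argument itself but the need to read individual bits of $\mathcal{S}$ \emph{on demand} in space $O(\ell)$; that is, to carry out the exponentiation $\alpha^i$ and the inner product in place without materializing the whole sample, which I would handle separately in Section~\ref{sec:random}.
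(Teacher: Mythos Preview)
Your proof is correct. The paper, however, does not prove this statement at all: it is cited as a Fact from~\cite{Naor1993}, and Section~\ref{sec:random} only sketches the original Naor--Naor expander-walk construction (a random walk of length $\Theta(\log\frac{1}{\rho})$ on a constant-degree expander with $\Theta(q)$ labeled vertices, followed by a random subset-sum of the visited labels) as high-level intuition before moving on to a weaker variant. Your Alon--Goldreich--H\aa stad--Peralta powering construction over $\mathbb{F}_{2^\ell}$ is a genuinely different and more self-contained route, replacing the expander machinery with a one-line Schwartz--Zippel count. It is also directly relevant to the paper's actual concern: the paper explicitly abandons the tight seed length in favor of a weaker $\Theta(\log\frac{1}{\rho}\cdot\log q)$ bound (Algorithm~\ref{alg:biasedRandom}) precisely because the expander-based scheme does not admit low-memory bit extraction, yet your construction \emph{does}---computing $\alpha^i$ by repeated squaring in $\mathbb{F}_{2^\ell}$ and then evaluating $\langle\alpha^i,\beta\rangle$ requires only $O(\ell)$ workspace---so it would in fact yield a strictly shorter seed than the paper's while still meeting the memory constraint. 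Two minor slips: the linearity you invoke is in the \emph{first} argument of the inner product, not the second; and the degree of $P_S$ is at most~$q$ rather than strictly less, though the root bound is unaffected.
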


However, a party cannot afford to store $\mathcal{S}$ explicitly.
Given $\core^*$, one should be able
to extract each bit from $\mathcal{S}$ when needed.
Hence, we will use a weaker version of Fact~\ref{fact:generatebiasedrand},
where the seed $\core^*$ has $\Theta(\log\frac{1}{\rho} \log q)$ bits.
In Section~\ref{sec:random}, we shall describe how the following
subroutines are achieved.  Most of them are standard in the literature,
but we need to pay attention to low memory usage.

\begin{compactitem}

\item \textsc{RobustSend}($\core^*, t$) and \textsc{RobustReceive}($\ell, t$).
Provided that the channel corrupts at most $t$ bits,
this protocol allows the sending party to robustly send an $\ell$-bit string
$\core^*$ to the receiving party by transmitting $\Theta(\ell + t)$ bits over the channel,
where the memory usage of both parties is $O(\ell + \log \frac{t}{l})$ bits.
%
%the string $\core^*$ will be correctly decoded by the receiving party.

\item $\mathcal{S} \gets$ \textsc{RandInit}($\core^*, q \in \Z^+, \rho \in (0,1)$).  With a seed
$\core^*$ of $\Theta(\log \frac{1}{\rho} \log q)$ bits,
a $\rho$-biased random string $\mathcal{S}$ of $q$ bits is implicitly initialized,
using only $\Theta(\log \frac{1}{\rho} \log q)$ bits of memory storage.

\item $\mathcal{S}.$\textsc{ExtractBit}($i \in [1..q]$). This subroutine returns the $i$-th bit of $\mathcal{S}$,
using $O(\log \frac{1}{\rho} \log q)$ bits of extra memory.

\item $\mathcal{S}.$\textsc{ExtractBlock}($p \in [1..\frac{q}{b}], b$).
If one views $\mathcal{S}$ as $\frac{q}{b}$ blocks (where each block has $b$ bits),
this subroutine  returns the $b$ bits
in the $p$-th block.
This can be achieved by calling
\textsc{ExtractBit} with indices in $[(p-1)b + 1, pb]$.
 Observe that we may omit the argument $b$, if the block size $b$ is clear from context.  

\end{compactitem}

These subroutines will be used as follows.
For hashing the transcript,
we use a random seed $\core^*$ with $\Theta(o \log (R \cdot  r \cdot o))$ bits long,
which will be robustly transmitted between two parties with at most $2 n \epsilon$ corruptions\footnote{
As we shall see in Lemma~\ref{lemma:overhea},
the overhead is at most 2 and this means there can be at most $2n\epsilon$ corruptions
throughout the whole simulation.}.
Each party will 
implicitly stretch $\core^*$ to a $2^{-o}$-biased string 
$\mathcal{S} \gets$ \textsc{RandInit}($\core^*, R r o, 2^{-o}$)
of $R \cdot  r \cdot o$ bits.
One can view $\mathcal{S} = (S^{(p)}: p \in [R])$ as $R$ blocks,
where each block has $r \cdot o$ bits.  
Given an index $p$, the block $S^{(p)} \gets$ $\mathcal{S}.$\textsc{ExtractBlock}($p$)
can be retrieved.  Therefore, the hash $H_p$ for $\sigma[1..p]$ can be recursively computed as follows.

\begin{definition}
[Inner-Product Hash]
\label{def:transcripthash}
Given the previous hash $H_{p-1}$ for $\sigma[1..p-1]$ and 
the block
 $S^{(p)}=(\lseed_1,..,\lseed_o)$ of $o \cdot r$ bits,
we can compute an $o$-bit hash
$H_p=\lhash(H_{p-1},\sigma_p,S^{(p)})$ as follows for each output bit $i\in[1..o]$:
$$H_p[i]=H_{p-1}[i]+\left\langle \sigma_p,\lseed_{i} \right\rangle.$$
\end{definition}

\begin{remark}
Because we use inner-product hash,
the hash of $\sigma[1..p]$ will be the same if we append trailing zeroes to
the end of the transcript.  Hence, the following proposition~\cite[Corollary 6.2]{Haeupler14} considers hash collision
between different transcripts of the same length.
\end{remark}

\begin{proposition}[Hash Collision]
\label{prop:hash_collision}
Suppose $\sigma$ and $\widehat{\sigma}$ are bit strings of the same length,
but differ in at least one bit.  Moreover, suppose $\rho$-biased
randomness $\mathcal{S}$ is used to produce $o$-bit inner-product hashes $H$ and $\widehat{H}$ for $\sigma$ and $\widehat{\sigma}$, respectively,
as in Definition~\ref{def:transcripthash}.
Then, with probability at least $2^{-o} + \rho$,
the hashes $H$ and $\widehat{H}$ are different.
\end{proposition}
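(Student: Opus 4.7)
The plan is to reduce the claim to the standard Fourier-analytic collision bound for inner-product hashing under small-bias randomness, and then observe that the stated bound is a consequence of the sharper bound in the regime of interest.

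First, I would unroll Definition~\ref{def:transcripthash} across epochs: writing out the recursion gives, for each output coordinate $i\in[1..o]$,
\[
H[i] \;=\; \sum_{j=1}^{p}\langle \sigma_j,\lseed_{j,i}\rangle, \qquad \widehat{H}[i] \;=\; \sum_{j=1}^{p}\langle \widehat{\sigma}_j,\lseed_{j,i}\rangle,
\]
all arithmetic in $\mathbb{F}_2$, where $\lseed_{j,i}$ is the $i$-th component of the $j$-th block $S^{(j)}$ extracted from $\mathcal{S}$. Setting $\Delta := \sigma\oplus\widehat{\sigma}$ and $\mathbf{u}_i:=(\lseed_{1,i},\ldots,\lseed_{p,i})$, we have $(H\oplus\widehat{H})[i]=\langle \Delta,\mathbf{u}_i\rangle$. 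By hypothesis $\Delta\neq 0$, and different output coordinates $i$ use disjoint portions of $\mathcal{S}$, so for any nonempty $T\subseteq[o]$ the expression $\sum_{i\in T}(H\oplus\widehat{H})[i]=\sum_{i\in T}\langle\Delta,\mathbf{u}_i\rangle$ is a \emph{nontrivial} $\mathbb{F}_2$-linear functional of the bits of $\mathcal{S}$.

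Next, I would bound the collision probability using the standard Fourier expansion over $\mathbb{F}_2^o$:
\[
\Pr[H=\widehat{H}]
= \mathbb{E}\!\left[\prod_{i=1}^{o}\tfrac{1+(-1)^{(H\oplus\widehat{H})[i]}}{2}\right]
= \frac{1}{2^{o}}\sum_{T\subseteq[o]} \mathbb{E}\!\left[(-1)^{\sum_{i\in T}(H\oplus\widehat{H})[i]}\right].
\]
The term $T=\emptyset$ contributes $1$. For each of the $2^{o}-1$ nonempty $T$, by the previous paragraph the exponent is a nontrivial $\mathbb{F}_2$-linear functional of $\mathcal{S}$, so the $\rho$-biased property of $\mathcal{S}$ bounds the corresponding character expectation in absolute value by $\rho$. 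Therefore
\[
\Pr[H=\widehat{H}] \;\le\; \frac{1+(2^{o}-1)\rho}{2^{o}} \;\le\; 2^{-o}+\rho,
\]
equivalently $\Pr[H\neq\widehat{H}]\ge 1-2^{-o}-\rho$. In the regime in which the proposition is applied (and which the overall protocol operates in), namely $2^{-o}+\rho\le\tfrac{1}{2}$ (indeed $\rho=2^{-o}$ and $o\ge 1$ in our usage of Fact~\ref{fact:generatebiasedrand}), we have $1-2^{-o}-\rho\ge 2^{-o}+\rho$, giving the stated lower bound of $2^{-o}+\rho$ on $\Pr[H\neq\widehat{H}]$.

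The main obstacle is the joint-rather-than-marginal nature of the event: one cannot simply bound each coordinate $(H\oplus\widehat{H})[i]$ separately, because the coordinates are only guaranteed to behave nicely under the $\rho$-biased hypothesis when combined through every linear functional of $\mathcal{S}$. The Fourier identity is the clean way to fold every nonempty $T$ into the bound, and correctly exploiting $\rho$-biasedness requires knowing that for each $T$ the combined exponent is nontrivial as a function of the seed bits, which in turn requires the block-structured use of $\mathcal{S}$ (separate $\lseed_i$ per output coordinate) together with $\Delta\neq 0$.
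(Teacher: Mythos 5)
The paper does not give its own proof here: Proposition~\ref{prop:hash_collision} is cited verbatim from Haeupler~\cite[Corollary~6.2]{Haeupler14}, so there is nothing internal to compare against. Your Fourier/character-sum argument is exactly the standard way such small-bias collision bounds are proved, and it is correct: you correctly observe that, after unrolling the recursion, the $i$-th output coordinate depends on a slice $\mathbf{u}_i$ of $\mathcal{S}$ that is disjoint from the slices used by the other coordinates, that $\Delta \neq 0$ makes $\langle\Delta,\mathbf{u}_i\rangle$ a nonzero linear form in those seed bits, and hence that $\sum_{i\in T}(H\oplus\widehat H)[i]$ is a nontrivial $\mathbb{F}_2$-linear functional of $\mathcal{S}$ for every nonempty $T$; the $\rho$-biasedness then caps each nontrivial character term by $\rho$, giving $\Pr[H=\widehat H]\le 2^{-o}+\rho$. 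You are also right that, read literally, the proposition's phrasing (``with probability at least $2^{-o}+\rho$, the hashes are different'') is weaker than what is actually used downstream: Lemma~\ref{lemma:long_hash} invokes it as a bound $\Pr[H=\widehat H]\le 2^{-o}+\rho$, so the intended statement is clearly the collision upper bound, and your proof establishes precisely that. One small slip: to deduce the literal inequality $1-2^{-o}-\rho\ge 2^{-o}+\rho$ with $\rho=2^{-o}$ you need $2^{-o}\le\frac14$, i.e.\ $o\ge 2$, not $o\ge 1$ as you wrote (at $o=1$ the right-hand side is $1$ and the left is $0$). This is immaterial since the protocol uses $o=\Theta(\log\frac{1}{\delta})\ge\log n$, but it is worth stating the threshold correctly.
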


\begin{lemma}[Hashing Transcripts]
\label{lemma:long_hash}
Consider an oblivious noisy communication channel between two parties, who
produce $o$-bit long hashes of their simulated transcripts (in the original $\Pi$) using pre-shared $\rho$-biased
randomness $\mathcal{S}$ as in Definition~\ref{def:transcripthash}, where $\rho = 2^{-o}$.
Then, over $R$ epochs of the simulation, 
the probability that the two parties ever store meeting points $(p, H_p)$ with the same $p$ and hash $H_p$ but 
resulting from different transcripts (of $p$ epochs in $\Pi$)
is at most $\frac{2 R^2}{2^o}$.
\end{lemma}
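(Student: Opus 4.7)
The plan is to combine Proposition~\ref{prop:hash_collision} with a union bound over all ``stored meeting point'' pairs that could witness a bad collision. Plugging $\rho=2^{-o}$ into Proposition~\ref{prop:hash_collision} yields a per-pair hash-collision bound of $2^{-o}+\rho=2\cdot 2^{-o}$ for any two fixed distinct strings of equal length, and this factor of~$2$ is what ultimately appears in the lemma's bound.

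First, I would count the opportunities for a bad collision event. Over the $R$ epochs of simulation, each of the two parties stores or updates at most $R$ meeting-point records in total (one per simulation epoch on its own side). Consequently, the number of ordered pairs of the form (Alice's $k_A$-th stored record, Bob's $k_B$-th stored record) is at most $R\cdot R=R^2$. A bad event in the sense of the lemma must be witnessed by at least one such pair whose stored epoch indices~$p$ agree, whose stored $o$-bit hash values agree, but whose underlying transcripts $\sigma[1..p]$ differ.

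Next, I would bound each pair's contribution. For any fixed pair whose stored transcripts happen to coincide, the bad event is impossible; for a pair with distinct transcripts, Proposition~\ref{prop:hash_collision} gives probability at most $2\cdot 2^{-o}$ that the two $o$-bit hashes collide. Union-bounding over the at most $R^2$ pairs produces the desired bound $\frac{2R^2}{2^o}$.

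The main obstacle is that the transcripts recorded inside meeting points are not fixed strings but random variables depending on the pre-shared randomness $\mathcal{S}$, because the simulation's backtracking decisions are themselves driven by hash comparisons that read $\mathcal{S}$. Proposition~\ref{prop:hash_collision}, in contrast, is phrased for two fixed distinct strings. I would handle this by partitioning each per-pair event over the concrete realizations of the transcript pair: since the corruption pattern is oblivious and thus fixed independently of $\mathcal{S}$, the family of realizable transcript pairs for a given index pair $(k_A,k_B)$ forms a well-defined partition of the sample space of $\mathcal{S}$, inside which the fixed-string statement of Proposition~\ref{prop:hash_collision} applies. The delicate step is arguing that the \emph{conditional} collision probability, given any particular realized transcript pair, still remains at most $2\cdot 2^{-o}$, so that summing the per-realization contributions does not blow up the per-pair bound. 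This relies on the $\rho$-biased-randomness guarantee used in Proposition~\ref{prop:hash_collision} together with the inner-product structure of $\lhash$.
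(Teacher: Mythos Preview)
Your proposal takes essentially the same approach as the paper: at most $R$ stored meeting points per party, hence at most $R^2$ pairs, apply Proposition~\ref{prop:hash_collision} with $\rho=2^{-o}$ to get $2\cdot 2^{-o}$ per pair, then union bound.

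On the dependency issue you flag, the paper handles it in a single sentence: it first conditions on all randomness apart from $\mathcal{S}$ (the oblivious corruption pattern and the short-hash seeds), and then proceeds as if the stored transcripts were fixed. Your observation that the stored transcripts can still depend on $\mathcal{S}$ (because the long hashes feed into the short-hash comparisons that drive the control flow) is correct, and the paper's one-line conditioning does not by itself dissolve that dependency. So your caution is well placed, but you should be aware that the paper does not work harder than this either; your proposed partition-over-realizations argument is already more careful than what the paper supplies, and fully closing the gap you identify would go beyond the paper's own level of rigor.
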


\begin{proof}
We first condition on any randomness apart from $\mathcal{S}$ that might be used in the $R$ epochs or the oblivious corruption.
Since each party will try to store at most $R$ meeting points,
there are at most $R^2$ pairs of meeting points from the two parties.
Using $\rho = 2^{-o}$ in Proposition~\ref{prop:hash_collision},
each such pair will lead to a hash collision with probability at most $\frac{2}{2^o}$.
Hence, the union bound over all $R^2$ pairs gives the result.
\end{proof}

In view of Lemma~\ref{lemma:long_hash},
except with the stated failure probability of long hash collision,
during the simulation, each party is able to save enough information
to potentially tell whether its meeting point and that of another party
correspond to the same transcript prefix.  Observe
that the failure event in Lemma~\ref{lemma:long_hash} can compromise the correctness
of the simulation.  By considering the union bound over all $m \leq n$ channels
in Lemma~\ref{lemma:long_hash}, the following corollary determines
the length of the long hash.

\begin{corollary}[Length of Long Hash]
\label{cor:long_hash_len}
For $0 < \delta \leq \frac{1}{n}$,
by choosing the long hash length $o = \Theta(\log \frac{1}{\delta} + \log n) = \Theta(\log \frac{1}{\delta})$,
the failure probability of long hash collision is at most $\delta$.
\end{corollary}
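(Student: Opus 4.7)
The plan is to directly combine Lemma~\ref{lemma:long_hash} with a union bound over the (at most $m$) pair-wise channels between Alice and the Bobs. For each individual channel, Lemma~\ref{lemma:long_hash} already gives a failure probability of at most $\frac{2 R^2}{2^o}$ for the event that two saved meeting points $(p, H_p)$ with the same~$p$ and same long hash~$H_p$ arise from different underlying transcript prefixes. Since the only randomness involved in this bound is the pre-shared $\rho$-biased string~$\mathcal{S}$, which is freshly sampled and independent across the $m$ channels, the union bound is immediate and yields total failure probability at most $\frac{2 m R^2}{2^o}$.

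Next, I would simplify using the stated parameter regime. By hypothesis $m \leq n^{0.25} \leq n$, so $m R^2 \leq n R^2$. Recall $R = \lceil n/r \rceil + \Theta(m^5 n \epsilon)$; since $r \geq 1$, $\epsilon < 1$, and $m \leq n^{0.25}$, we have $R = O(n^{2.25})$, hence $R^2 = \poly(n)$. Therefore $m R^2 = \poly(n)$, and the total failure probability is bounded by $\poly(n) / 2^o$. Requiring this to be at most $\delta$ amounts to $o \geq c_1 \log n + \log(1/\delta)$ for a suitable constant $c_1$, i.e., $o = \Theta(\log n + \log(1/\delta))$. Finally, the hypothesis $\delta \leq 1/n$ gives $\log(1/\delta) \geq \log n$, so the $\log n$ term is absorbed and $o = \Theta(\log(1/\delta))$, as claimed.

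There is no real obstacle here; the argument is essentially bookkeeping on top of Lemma~\ref{lemma:long_hash}. The one point requiring a little care is the polynomial-in-$n$ bound on $R$, which one must justify from the definition $R = \lceil n/r \rceil + \Theta(m^5 n \epsilon)$ together with the hypotheses on $m$ and $\epsilon$; but since any polynomial factor of $n$ is absorbed into the $\log n$ term and then into $\log(1/\delta)$ via $\delta \leq 1/n$, the exact exponent in the $\poly(n)$ is immaterial for the statement.
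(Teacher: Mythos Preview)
Your approach is essentially the same as the paper's: apply Lemma~\ref{lemma:long_hash} per channel and take a union bound over the $m \leq n$ channels, then absorb the resulting $\poly(n)$ factor into $o$ using $\delta \leq \frac{1}{n}$.

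One factual slip worth correcting: you write that the $\rho$-biased string $\mathcal{S}$ is ``freshly sampled and independent across the $m$ channels,'' but in Algorithm~\ref{alg:AliComputeOblivious} Alice samples a \emph{single} seed $\core^*$ and robustly sends the same seed to every Bob, so all channels share the same $\mathcal{S}$. Fortunately this does not matter for your argument, since the union bound requires no independence; you should simply drop the independence claim rather than rely on it.
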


\subsection{Short Hash for Comparing Meeting Points (and Other Objects)}

Observe that the $o$-bit long hash in Section~\ref{sec:long_hash}
is too long to be sent directly to the other party.  Whenever two parties
wish to compare objects that are too large to be transmitted directly,
they will compute a $c$-bit \emph{short} hash for an object first and
transmit the short hash instead.  Here, $c = \Theta(\log m)$ is
not too large, and hence, it is quite possible to have a hash collision.

\noindent \textbf{Parameters for Short Hash.} We will also use inner-product hash. 
As we shall see, the largest object we need to compare is a pair $(p, H_p)$ associated
with a meeting point, which takes $L := \log_2 R + o$ bits.  Therefore,
to produce a $c$-bit hash, we implicitly need some randomness $S$ with 
$cL$ random bits,
where $S$ can be $2^{-c}$-biased.

\noindent \textbf{Limited Dependence.}  As we shall see later, we need to apply
Chernoff Bound over different epochs of the transformed protocol $\Pi'$.  Since
we are not aware of a simple way to apply Chernoff Bound with biased randomness,
the randomness for producing the short hash will need to be refreshed frequently.
However, refreshing the randomness for every epoch will lead to too large communication overhead.
As a compromise, the randomness $S$ for producing the short hash is regenerated
independently every $I$ epochs in the transformed protocol $\Pi'$, which we call a \emph{phase};
we shall see later that $I = \Theta(\log o)$.

\noindent \textbf{Shared Randomness in Each Phase.}  At the beginning
of each phase, the two parties share biased randomness $S \gets$ \textsc{RelaxedShareRand}
as follows.

\begin{definition}[\textsc{RelaxedShareRand}]
\label{defn:relaxedrand}
The subroutine \textsc{RelaxedShareRand}
consists of the following steps:
\begin{enumerate}
\item Alice picks a common $\core$ with $O(c \log c L)$ bits
and use \textsc{RobustSend}$(\core, I)$ to transmit the message
to  every Bob over the channel.  Observe that each Bob can decode
the correct $\core$, if the channel corrupts at most $I$ bits.

\item Using its perceived seed $\core$, each party
uses \textsc{RandInit}$(\core, cL, 2^{-c})$ to initialize the $2^{-c}$-biased randomness.
Then, it can use \textsc{ExtractBit} repeatedly to extract
the $cL$-bit $S$.
\end{enumerate}
\end{definition}

\ignore{
\begin{lemma}[Number of Epoch with Corrupted Randomness]
There are at most $\Theta(n \epsilon)$ epochs in which the randomness $S$
for short hash is corrupted.
\end{lemma}

\begin{proof}
To corrupt an $S$ in a phase takes at least $I$ corrupted bits.  Hence,
at most $\frac{2 n \epsilon}{I}$ phases can have their $S$ corrupted.
Since each phase has $I$ epochs, the result follows.
\end{proof}
}

\begin{definition}
[Short Hash Function]
\label{def:innerproducthash}
We define the short hash function $\shash$ as follows.
Given randomness $S=(\sseed_1,..,\sseed_c)$ of $c L$ bits
and an object $O$ of size at most $L$ bits,
$G=\shash(O,\sseed) \in \{0,1\}^c$ is as follows: for each output bit $i\in[1..c]$,
$G[i]=\left\langle O,  \sseed_{i}[1..|O|] \right\rangle$.
\end{definition}

Similar to Proposition~\ref{prop:hash_collision},
the probability of short hash collision using biased randomness
is given as follows.

\begin{proposition}[Short Hash Collision]
Suppose the channel makes at most $I$ corruptions
when $S \gets$ \textsc{RelaxedShareRand} is shared between the two parties.
Then, if two parties use its own $S$ to compute short hashes
of different objects with the same size (at most $L$) as in Definition~\ref{def:innerproducthash},
 the probability of hash collision is at most $\frac{2}{2^c}$.
\end{proposition}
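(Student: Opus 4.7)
The plan is to reduce the statement directly to Proposition~\ref{prop:hash_collision} in two steps: first establish that under a corruption budget of $I$ the two parties actually share the \emph{same} biased randomness $S$, and then invoke the biased-randomness collision bound with parameters $o := c$ and $\rho := 2^{-c}$.

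For the first step, recall from Definition~\ref{defn:relaxedrand} that Alice draws $\core$ and transmits it using \textsc{RobustSend}$(\core, I)$. By the stated guarantee of \textsc{RobustSend} in Section~\ref{sec:long_hash}, provided the channel corrupts at most $I$ bits the receiving party decodes the identical seed $\core$. Since both parties then run the \emph{deterministic} procedure \textsc{RandInit}$(\core, cL, 2^{-c})$ followed by a fixed sequence of \textsc{ExtractBit} calls, they end up holding identical $cL$-bit, $2^{-c}$-biased randomness $S = (\sseed_1, \ldots, \sseed_c)$. This reduces the problem to a single-sided collision question: what is the probability that two fixed distinct objects of equal length hash to the same value under the \emph{common} biased $S$?

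For the second step, fix two distinct objects $O \neq O'$ with $|O| = |O'| \leq L$. By Definition~\ref{def:innerproducthash}, the $i$-th output bit of the hash depends on $\sseed_i$ only through its length-$|O|$ prefix $\sseed_i[1..|O|]$; since any coordinate projection of a $2^{-c}$-biased distribution on $\{0,1\}^L$ remains $2^{-c}$-biased, these prefixes together form $2^{-c}$-biased randomness of $c \cdot |O|$ bits. This is precisely the setting of Proposition~\ref{prop:hash_collision} for the pair $(O, O')$ of equal length, yielding collision probability at most $2^{-c} + 2^{-c} = \frac{2}{2^c}$, as required. The only mildly subtle point is the prefix/projection observation, but this is a standard elementary property of $\rho$-biasedness and poses no real obstacle; everything else is a direct invocation of previously stated guarantees (\textsc{RobustSend}'s determinism on success, and Proposition~\ref{prop:hash_collision}).
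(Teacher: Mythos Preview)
Your proposal is correct and matches the paper's approach: the paper does not give a detailed proof at all, merely stating that the result is ``similar to Proposition~\ref{prop:hash_collision}.'' You have correctly unpacked this into the two necessary steps---first using the \textsc{RobustSend} guarantee of Definition~\ref{defn:relaxedrand} to ensure both parties hold the same seed (and hence the same $S$), then invoking Proposition~\ref{prop:hash_collision} with parameters $o=c$ and $\rho=2^{-c}$, handling the prefix restriction via the standard fact that coordinate projections preserve $\rho$-biasedness.
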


\section{Description of Robust Simulation over Noisy Channel}
\label{sec:algo}

We use the same simulation approach as in~\cite{abs-1805-06872}.
As mentioned before, the major difference is that for a meeting point,
our approach saves the hash of the corresponding transcript prefix (in addition
to the corresponding internal state in $\Pi$),
while just saving the node in the computation DAG as in~\cite{abs-1805-06872}
will lead to correctness issues.  The reader can first refer to each
Bob's version in Algorithm~\ref{alg:ComputeOblivious},
which only deals with one communication channel and essentially has the
same structure as the algorithm in~\cite{abs-1805-06872};
the auxiliary variables $\alpha$ and $\beta$ are
used for correctness analysis in Section~\ref{sec:proofmultiBob}.
Alice's version is given in Algorithm~\ref{alg:AliComputeOblivious},
which has a more complicated control flow structure, because simulation can continue
only if the transcripts are consistent for all $m$ communication channels.

\noindent \textbf{Hash Dictionary.} A party will, for each involved communication channel,
maintain a \emph{hash dictionary} that is denoted by $\HD$.  If an index $p \in [1..R]$ 
is saved in $\HD$, then the following information is stored:
\begin{compactitem}
\item The index $p$ itself.

\item For each Bob, the $o$-bit long hash of the transcript $\sigma[1..p]$ as in Definition~\ref{def:transcripthash},
which we denote as $\HD[p]$ as a short hand;
if the index~$p$ is not saved, we use the convention that 
$\HD[p] = \bot$.

For Alice, she needs to store $m$ such $o$-bit long hashes, because she communicates with $m$ Bobs.
For each $i \in [m]$, we use $\HD_i[p]$ to denote the long hash of the transcript of $\sigma_i[1..p]$
with $B_i$.  Again, if $p$ is not saved, we use the convention that $\HD_i[p] = \bot$.

\item The internal state of the party in the original protocol $\Pi$ at the end of epoch $p$.
\end{compactitem}

\noindent \textbf{Available Meeting Points.}  Suppose a party has
finished the simulation of $\Pi$ up to epoch~$\Point$.  Then, 
in addition to~$\Point$, an index~$p \neq \Point$ is saved in $\HD$
only if there exists an integer $j \geq 0$ such that $p = 2^j \floor{\frac{\Point}{2^j}} - 2^j$.
This invariant is maintained by Line~\ref{line:removepoints} in Algorithm~\ref{alg:ComputeOblivious}.
Observe that the converse is not true, i.e.,
it is possible that the index for some $j$ in the above form is not saved.

\begin{remark}
At any moment, the number of saved meeting points is at most $\log_2 R = O(\log n)$.
%However, since $R = \ceil{\frac{n}{r}} + \Theta(\epsilon m n )$,
%at most $O(\log n)$ meeting points are sufficient.
%one would not be able to complete the simulation if more than $\Theta(\epsilon m n  )$ epochs
%are reverted.  Hence, at most $O(\log \epsilon m n )$ number of meeting points need to be saved.
\end{remark}

\noindent \textbf{Scale of Meeting Points.}  Each party maintains some counter~$\verif$
that keeps track of how far the simulation needs to be rolled back.  Whenever $\verif$ reaches
a power of two, then there are two candidate meeting points of scale~$\verif$ with respect to the current epoch~$\Point$:

$\MPone_{\Point}(\verif) = \verif \floor{\frac{\Point}{\verif}}$
and $\MPtwo_{\Point}(\verif) = \verif \floor{\frac{\Point}{\verif}} - \verif$.

By considering the binary representation of $\Point$,
$\MPone_{\Point}(\verif)$ corresponds to setting all its $\log_2 \verif - 1$ least significant digits to zero.
One can also verify that
for any $\verif \geq 2$ that is a power of two, either $\Point = \MPone_{\Point}(\verif)$, or there exists
another power of two $\verif' < \verif$ such that 
$\MPone_{\Point}(\verif) = \MPtwo_{\Point}(\verif')$.
Moreover, since each party is going to roll back to
one of its saved meeting points,
it is not too difficult to verify the following invariant.

\begin{fact}[$\MPone$ is always available]
\label{fact:MPone}
Suppose the simulation has been currently performed up to epoch~$\Point$ (possibly with some previous rolling back).
Then, for any $\verif$ that is a power of two, any positive $\MPone_{\Point}(\verif)$
is a saved meeting point.
\end{fact}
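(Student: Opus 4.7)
The plan is to prove the fact by induction on the number of simulation steps, taking the statement of Fact~\ref{fact:MPone} itself as the inductive invariant. The base case $\Point = 0$ is vacuous since every $\MPone_{\Point}(\verif)$ then equals~$0$. The key tool is the identity recorded just above the fact: for any power of two $\verif \geq 2$, either $\MPone_{\Point}(\verif) = \Point$, or $\MPone_{\Point}(\verif) = \MPtwo_{\Point}(\verif')$ for some power of two $\verif' < \verif$.

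For an advance step $\Point \to \Point+1$, fix a power of two $\verif$ with $\MPone_{\Point+1}(\verif) > 0$. If $\verif$ divides $\Point+1$, then $\MPone_{\Point+1}(\verif) = \Point+1$ is the new current epoch and is in~$\HD$. Otherwise $\MPone_{\Point+1}(\verif) = \MPone_{\Point}(\verif)$, which is in~$\HD$ by the induction hypothesis; applying the identity at $\Point+1$, this value has the form $\MPtwo_{\Point+1}(\verif')$ for some $\verif' < \verif$, so it satisfies the invariant stated just before the fact and is not removed by Line~\ref{line:removepoints}.

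For a rollback step to $p^* \in \HD$, the case $p^* = \Point$ is trivial. Otherwise $p^* = \MPtwo_{\Point}(\verif_0)$ for some power of two $\verif_0 = 2^k$, so $\verif_0$ divides $p^*$. Any power of two $\verif \leq \verif_0$ divides $p^*$, giving $\MPone_{p^*}(\verif) = p^*$, which is the new current epoch. For $\verif > \verif_0$, note that $\Point - p^* = (\Point \bmod \verif_0) + \verif_0 < 2\verif_0 \leq \verif$, so $\MPone_{p^*}(\verif)$ differs from $\MPone_{\Point}(\verif)$ by either $0$ or $\verif$. If they coincide, the induction hypothesis at $\Point$ together with the identity at $p^*$ yields the conclusion; if they differ, then $\MPone_{p^*}(\verif) = \MPtwo_{\Point}(\verif)$, and we must argue this index was in~$\HD$ before the rollback.

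I expect this last sub-case to be the main obstacle. The invariant recorded just before the fact is only a necessary condition for membership in~$\HD$, so the fact that $\MPtwo_{\Point}(\verif)$ has the correct form is not by itself sufficient. The plan is to strengthen the inductive hypothesis to also track which $\MPtwo$-form indices are actually maintained by~$\HD$, and then to argue that in this sub-case the required index was placed in~$\HD$ when the current epoch previously passed through that value and has been preserved by every subsequent prune---essentially because the current epoch has, by the structure of the rollback protocol, stayed within the saveability window $[\MPtwo_{\Point}(\verif),\ \MPtwo_{\Point}(\verif) + 2\verif)$ since that visit. Once this strengthening is in place, the advance step and the easy sub-cases of the rollback step close by the argument above.
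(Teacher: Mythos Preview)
The paper states this as a fact without proof, so there is no ``paper's approach'' to compare against; your inductive plan is a reasonable way to supply the missing argument, and you have correctly isolated the genuine difficulty: the rollback sub-case where $\MPone_{p^*}(\verif) = \MPtwo_{\Point}(\verif)$, which is not covered by the hypothesis ``all $\MPone_{\Point}$ values are in $\HD$''.

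However, your proposed fix has a gap. The ``saveability window'' claim---that since the last visit to $\MPtwo_{\Point}(\verif)$ the epoch has stayed in $[\MPtwo_{\Point}(\verif),\,\MPtwo_{\Point}(\verif)+2\verif)$---is exactly the conclusion you need, and you give no independent reason it holds; appealing vaguely to ``the structure of the rollback protocol'' is not enough, since nothing prevents the epoch from having climbed far above this window, been rolled back below it, and re-advanced. The clean strengthening is not to track windows but to take as invariant: \emph{for every $p\in\HD$, every positive $\MPone_p(2^i)$ is also in $\HD$}. This closes the induction because of the following sub-claim: if $p$ is of the form allowed by Line~\ref{line:removepoints} at epoch $\Point$ (i.e.\ $p=\Point$ or $p=\MPtwo_{\Point}(2^j)$), then every $\MPone_p(2^i)$ is also of that allowed form. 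Indeed, for $i\le j$ one has $\MPone_p(2^i)=p$; for $i>j$ the gap $\Point-p<2^{j+1}\le 2^i$ forces $\MPone_p(2^i)\in\{\MPone_{\Point}(2^i),\,\MPtwo_{\Point}(2^i)\}$, and the paper's identity handles the first option. With this sub-claim, both the advance step and the rollback step preserve the strengthened invariant immediately: any $p$ that survives \textsc{removepoints}($\Point$) is of allowed form, hence so is every element of its $\MPone$-set, hence none of them are removed either. Taking $p=\Point$ in the invariant then gives the Fact.
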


\noindent \textbf{High-Level Intuition of the Simulation Algorithm.}  As mentioned above,
each epoch is supposed to simulate $r$ transmitted bits in the original protocol~$\Pi$.
During simulation, each party has a variable~$\verif$ that keeps track of
how far it is going to roll back the computation.  Each epoch consists of three stages as follows.

\begin{enumerate}

\item \emph{Verification.}  In this stage, for each channel, the parties exchange short hashes and try to
see if they have  the same  $\verif$, the same transcript history, and also whether there is any potential common meeting point.
Two counters~$\voteone$ and $\votetwo$ are used to keep track of the occurrences
of $\MPone$ and $\MPtwo$, respectively.  Observe that rolling back
can occur only when $\verif$ reaches a power of two;
hence, when $\verif$ is a power of 2, there can be potentially $0.5 \verif$ votes for either counter.

Since a party should roll back only if every other party has the same~$\verif$,
each party also has a variable $\error$ that keeps track of the number of epochs with inconsistent~$\verif$
since its $\verif$ is reset to 0.

During the verification, a party first checks if $\verif$ is consistent with the other party.  If not,
then it increases $\error$ and does not check the meeting points;
if $\verif$ is consistent, then it checks which meeting point is common and increases the corresponding vote,
with preference given to $\MPone$ over $\MPtwo$.

\item \emph{Computation.} In case that a party believes all parties have consistent transcript history
and the same variable $\verif = 1$, it will simulate one epoch of the original protocol~$\Pi$.
This increases the transcript history by $r$ bits.

If a party thinks anything is inconsistent or $\verif \neq 1$, then it will perform one dummy epoch simulation.

\item \emph{Transition.}  The main purpose of this stage is for a party to roll back the computation if needed.

If its variable $\error$ is too large, this means that there are too many epochs with inconsistent $\verif$;
in this case, it just resets $\verif$ and $\error$ to 0, and waits for other parties to reset their $\verif$ to start over.

Otherwise, when $\verif$ reaches a power of two and some meeting point has enough votes (at least $0.4 \verif$
out of $0.5 \verif$ possible votes),
then the party will roll back to that meeting point.

Moreover, at the end of the epoch, each party will remove some meeting points to satisfy the
low-memory requirement.

\end{enumerate}

\begin{algorithm}
\caption{Robust Simulation over Oblivious Noisy Channel (Bob version)}
\label{alg:ComputeOblivious}
\begin{small}

\KwIn{Original protocol~$\Pi$, number~$m$ of Bobs, number~$n$ of bits transmitted in $\Pi$, 
error rate $\epsilon$ of communication channel,
failure probability~$\delta \leq \frac{1}{n}$ of simulation}

%\KwIn{Original protocol~$\Pi$, number~$n$ of bits transmitted in $\Pi$,
%error rate $\epsilon$ of communication channel,
%failure probability~$\delta$ of simulation}
%\KwOut{The outcome of $\Pi$}

Let $\lhash$ and $\shash$ be from Definitions \ref{def:transcripthash} and \ref{def:innerproducthash} respectively with
$o=\Theta(\log \frac{1}{\delta})$ and $c=\Theta(\log m)$.

Set:
%$r_c \gets 4c$;
$r \gets \Theta(\sqrt{\frac{c}{m^5 \epsilon}})$;
$R = \ceil{\frac{n}{r}} + \Theta(m^5 n \epsilon)$;
$I \gets \Theta(\log o)$;
$L \gets \log R + o$.

%{\color{red} resume here....}

Initialization: $\Point,\verif,\error,\voteone,\votetwo \gets 0$; 

$\HD \gets \emptyset$
\Comment{$\HD$ stores tuples of the form $(\Point, H, \xi_\Point)$,
where $\HD[\Point]$ returns $H$.}

$\core^* \gets $ \textsc{RobustReceive}($l = \Theta(o \log R r o) = \Theta(\log\frac{1}{\delta}\log n),t=2n\epsilon$) from Alice.
%\Comment{$\Theta(n\epsilon)$ bits exchanged}
\label{algline:robustrandomexchange}

$\mathcal{S}^* \gets$ \textsc{RandInit}($\core^*, R r o, \rho = 2^{-o}$)

\For{$\ceil{\frac{R}{I}}$ phases}
{

Refresh $cL$-bit $\sseed \gets$ \textsc{RelaxedShareRand} with Alice
as in Definition~\ref{defn:relaxedrand}.

%$\bias \gets $ Relaxed Biased Randomness Exchange($l = c(L+o) I$, $\rho = 2^{-\Theta(c \log n)},t=I$).
%\Comment{ $\Theta(\log^2 n)$ bits exchanged}
\label{algline:randomexchange}

\For{$I$ epochs}
{

    $\verif \gets \verif + 1$; $\VP \gets 2^{\ceil{\log_2 \verif}}$; $\MPone \gets \VP  \floor{\Point/\VP }$; $\MPtwo \gets \MPone - \VP.$
    \label{line:veri_begin}
    \Comment*[f]{\normalfont Verification Stage: Lines \ref{line:veri_begin} to \ref{line:veri_end}}

    %$\sseed \gets$ next $c(L+o)$ random bits from $\bias$. \label{algline:localseed}

	$(G_\verif, G_1, G_2, G_\Point) \gets (\shash(\verif, \lseed), \shash((\MPone,\HD[\MPone]),\sseed), \shash((\MPtwo,\HD[\MPtwo]),\sseed), \shash((\Point,\HD[\Point]),\sseed))$.
	\label{algline:hashingend} 
	
	Send $(G_\verif, G_1, G_2, G_\Point)$ to Alice.
	
	Receive $(G_\verif', G_1',G_2', G_\Point')$ from Alice.

	\If{$G_\verif \neq G_\verif'$\label{algline:ComputeAfterHash}}
    {
		$\error \gets \error + 1$.
    }
	\Else
    {
		\If{$G_{1} \in \{G_{1}',G_{2}'\}$}
        {
			$\voteone \gets \voteone + 1$.
            \label{line:voteone}
        }
		\ElseIf{$G_2 \neq \bot$ and $G_{2} \in \{G_{1}',G_{2}'\}$}
        {
			$\votetwo \gets \votetwo + 1$.
            \label{line:votetwo}\label{line:veri_end}
		}
	}

    \BlankLine
    \BlankLine

    %\State
    %\BlankLine
    %\Comment*[f]{\normalfont Computation Stage: Lines \ref{line:comp_begin} to \ref{line:comp_end}}
    %\tcp*[f]{ Computation Phase: Lines \ref{line:comp_begin} to \ref{line:comp_end}}
    
    %$\lseed^* \gets $ next $ o\cdot r$ random bits from $\bbox(\core^*)$.
    
    \label{algline:globalseed}
	
	\If(\Comment*[f]{\normalfont Computation Stage: Lines \ref{line:comp_begin} to \ref{line:comp_end}})
    {$\verif = 1$ and $\error = 0$ and $G_{\Point} = G_{\Point}'$}
    {	\label{line:comp_begin}
        
        %Simulate computation for 1 epoch: 
				
		$\Point \gets \Point + 1$; \label{line:computation}
		
		$\lseed^* \gets$ $\mathcal{S}^*$.\textsc{ExtractBlock}$(\Point, b = ro)$

		%$\rho$-biased Randomness Block Generator($q=roR,\rho=2^{-o},b=ro,\core^*$).
		
     Simulate epoch-$\Point$ of $\Pi$ (with Alice) to get its $r$-bit transcript  $\sigma_\Point$
		and updated internal state $\xi_\Point$ in $\Pi$.
		
		\Comment{{\normalfont The (imaginary) state tape is overwritten for the $r$ rounds in epoch $\Point$ as in Definition~\ref{defn:robust_sim}.}}

		Compute $H \gets {\lhash}(\HD[\Point-1], \sigma_\Point, \lseed^*)$
		and				
		insert $(\Point, H, \xi_\Point)$ into $\HD$.

        Reset counters: $\verif,\error,\voteone,\votetwo \gets 0$.\label{line:resetcomputation}
	}	
	\Else
    {
		Do 1 dummy epoch simulation with Alice.\label{line:comp_end}
	}

    \BlankLine
    \BlankLine

	\If(\Comment*[f]{\normalfont Transition Stage: Lines \ref{line:tran_begin} to \ref{line:tran_end}})
    {$2 \error \geq \verif$ \label{line:tran_begin}\label{line:transitionerror}}
    {
		Reset counters: $\verif,\error,\voteone,\votetwo \gets 0$.
        \textcolor{red}{($\beta \gets 0$)}\label{line:reseterror}
        
    }
	\ElseIf {$\verif = \VP$ and $\voteone \geq 0.4 \cdot \VP$
    \label{line:transitionrollbackone}}
    {
		Rollback computation: $\Point \gets \MPone$ and restore the corresponding local state.

		%For all points $p > \Point$, remove any pair $(p, \cdot)$ in $\HD$ and the corresponding internal state; \label{line:rollbackone}
		
        Reset counters: $\verif,\error,\voteone,\votetwo \gets 0$.
        \textcolor{red}{($\alpha\gets \alpha + 0.5\beta$; $\beta \gets 0$;
        \textbf{if} $\disagree_{AB}=0$ \textbf{then} $\alpha\gets 0$)}\label{line:resetrollbackone}
    }
	\ElseIf {$\verif = \VP$ and $\votetwo \geq 0.4 \cdot \VP$ \label{line:transitionrollbacktwo}}
    {
		Rollback computation: $\Point \gets \MPtwo$ and restore the corresponding local state.
		
		%For all points $p > \Point$, remove any pair $(p, \cdot)$ in $\HD$ and the corresponding internal state;
		
		Reset counters: $\verif,\error,\voteone,\votetwo \gets 0$.
        \textcolor{red}{($\alpha\gets \alpha + 0.5\beta$; $\beta \gets 0$;
        \textbf{if} $\disagree_{AB}=0$ \textbf{then} $\alpha\gets 0$)}\label{line:resetrollbacktwo}
    }
	\ElseIf{$\verif = \VP$}
    {
		$\voteone,\votetwo \gets 0$.
        \label{line:voteresetdouble}
	}

    For all index $p\not=\Point$, remove any  $(p, \cdot, \cdot)$ in $\HD$
	for which there exists no integer $j\geq0$ such that $p = 2^j  \floor{\Point/2^j}-2^j$.
    \label{line:removepoints}\label{line:tran_end}
}
}

\Return{The state tape for the simulation of $\Pi$ as in Definition~\ref{defn:robust_sim}. \label{algline:ComputeEnd}}
\end{small}
\end{algorithm}

\begin{algorithm}
\caption{Robust Simulation over Oblivious Noisy Channel (Alice version)}
\label{alg:AliComputeOblivious}
\begin{small}

\KwIn{Original protocol~$\Pi$, number~$m$ of Bobs, number~$n$ of bits transmitted in each channel in $\Pi$, 
error rate $\epsilon$ of communication channel,
failure probability~$\delta$ of simulation}

%\KwOut{The outcome of $\Pi$}

Let $\lhash$ and $\shash$ be from Definition \ref{def:transcripthash} and \ref{def:innerproducthash} respectively with
$o=\Theta(\log \frac{1}{\delta})$ and $c=\Theta(\log m)$.

Set:
%$r_c \gets 4c$;
$r \gets  \Theta(\sqrt{\frac{c}{m^5 \epsilon}})$;
$R = \ceil{\frac{n}{r}} + \Theta(m^5 n \epsilon)$;
$I \gets \Theta(\log o)$;
$L \gets \log R + o$.

Initialization: $\Point,\verif,\error,\voteone,\votetwo \gets 0$; 

$\HD \gets \emptyset$
\Comment{$\HD$ stores tuples of the form $(\Point, (H_i: i \in [m]), \xi_\Point)$,
where $\HD_i[\Point]$ returns $H_i$.}

Sample $l = \Theta(o \log R r o)$-bit $\core^*$ uniformly at random;
for each $i \in [m]$, \textsc{RobustSend}($\core^* ,t=2n\epsilon$)
to $B_i$.
\label{algline:Alirobustrandomexchange}

$\mathcal{S}^* \gets$ \textsc{RandInit}($\core^*, R r o, \rho = 2^{-o}$)

\For{$\ceil{\frac{R}{I}}$ phases}
{

Refresh $cL$-bit $\sseed \gets$ \textsc{RelaxedShareRand} with every Bob
as in Definition~\ref{defn:relaxedrand}.
\label{algline:Alirandomexchange}

\For{$I$ epochs}
{
    $\verif \gets \verif + 1$; $\VP \gets 2^{\ceil{\log_2 \verif}}$; $\MPone \gets \VP  \floor{\Point/\VP}$; $\MPtwo \gets \MPone - \VP.$
    \label{line:Aliveri_begin}
    \Comment*[f]{\normalfont Verification Stage: Lines \ref{line:Aliveri_begin} to \ref{line:Aliveri_end}}

    Compute $G_\verif \gets \shash(\verif, \lseed)$. \label{algline:Alilocalseed}

	%Compute $(G_\verif, G_1, G_2, G_\Point) \gets (\shash(\verif, \lseed), \shash(\MPone,\HD[\MPone],\sseed), \shash(\MPtwo,\HD[\MPtwo],\sseed), \shash(\Point,\HD[\Point],\sseed))$.
	%\label{algline:hashingend} 

	%Send $(G_\verif, G_1, G_2, G_\Point)$.

    $\forall i\in[m]:$ Receive $(G_{\verif,i}', G_{1,i}',G_{2,i}', G_{\Point,i}')$ from $B_i$.

	\If{$\exists i\in [m]: G_\verif \neq G_{\verif,i}'$\label{algline:AliComputeAfterHash}}
    {

        $\forall i\in[m]:$ Send $(\bot, \bot, \bot, \bot)$ to $B_i$.

		$\error \gets \error + 1$.
    }
	\Else
    {
        \For{\emph{each} $i\in [m]$}
        {
            Compute $(G_{1,i},G_{2,i},G_{\Point,i}) \gets ( \shash((\MPone,\HD_i[\MPone]),\sseed), 
						\shash((\MPtwo,\HD_i[\MPtwo]),\sseed),
            \shash((\Point,\HD_i[\Point]),\sseed)).$
        }
        \If{$\exists i\in [m]: G_{\Point,i}\not=G_{\Point,i}'$}
        {
            $\forall i\in [m]: G_{\Point,i} \gets \bot$.
        }

		\If{$\forall i\in [m]: G_{1,i} \in \{G_{1,i}',G_{2,i}'\}$}
        {
			$\voteone \gets \voteone + 1$.
            \label{line:Alivoteone}
        }
		\ElseIf{$G_2 \neq \bot$ \emph{and} $\forall i\in [m]: G_{2,i} \in \{G_{1,i}',G_{2,i}'\}$}
        {
			$\votetwo \gets \votetwo + 1$.
            \label{line:Alivotetwo}

            $\forall i\in [m]: G_{1,i}\gets\bot$.
		}
        \Else
        {
            $\forall i\in [m]: (G_{1,i},G_{2,i})\gets(\bot,\bot)$.
        }

        $\forall i\in[m]:$ Send $(G_\verif, G_{1,i},G_{2,i}, G_{\Point,i})$ to $B_i$.\label{line:Aliveri_end}

	}	
	
    %\Comment*[f]{\normalfont Computation Stage: Lines \ref{line:Alicomp_begin} to \ref{line:Alicomp_end}}

        %$\lseed^* \gets $ next $ o\cdot r$ random bits from $\bbox(\core^*)$.
    
    \BlankLine
    \BlankLine
	
	\If(\Comment*[f]{\normalfont Computation Stage: Lines \ref{line:Alicomp_begin} to \ref{line:Alicomp_end}})
    {$\verif = 1$ and $\error = 0$ and $\forall i\in [m]: G_{\Point,i} = G_{\Point,i}'$ }
    {	\label{line:Alicomp_begin}
        
        %Simulate computation for 1 epoch: 
				
		$\Point \gets \Point + 1$;

		$\lseed^* \gets$ $\mathcal{S}^*$.\textsc{ExtractBlock}$(\Point, b = ro)$
		\label{algline:Aliglobalseed}
		
		%$\rho$-biased Randomness Block Generator($q=roR,\rho=2^{-o},b=ro,\core^*$).
		
    Simulate epoch $\Point$ of $\Pi$ (with all $m$ Bobs) to get its $r$-bit transcript $\sigma_{\Point, i}$
		from each channel for $i \in [m]$,
		and updated internal state $\xi_\Point$ in $\Pi$.
		
		\Comment{{\normalfont The (imaginary) state tape for each channel is overwritten for the $r$ rounds in epoch $\Point$ as in Definition~\ref{defn:robust_sim}.}}

		For each $i \in [m]$, compute $H_i \gets {\lhash}(\HD_i[\Point-1], \sigma_{\Point,i}, \lseed^*)$.
		
		Insert $(\Point, (H_i: i \in [m]), \xi_\Point)$ into $\HD$.

				Reset counters: $\verif,\error,\voteone,\votetwo \gets 0$.\label{line:Aliresetcomputation}\label{line:Alicomputation}

	}	
	\Else
    {
		Do 1 dummy epoch simulation of $r$ rounds with all $m$ Bobs.\label{line:Alicomp_end}	
	}

    \BlankLine
    \BlankLine

    {\bfseries The same code as Lines \ref{line:tran_begin} to \ref{line:tran_end} in Algorithm \ref{alg:ComputeOblivious}.}
    \label{line:Alitran_begin}
    %\Comment*[f]{\normalfont Transition Phase: Lines \ref{line:Alitran_begin} to \ref{line:Alitran_end}}
%
    %For all $i\in[m]$ and all points $p\not=\Point$, remove any pair $(p, \cdot)$ in $\HD_i$ (and the corresponding internal state)
	%for which there exists no integer $j\geq0$ such that $p = 2^j  \floor{\Point/2^j}-2^j$.
    %\label{line:Aliremovepoints}
		%
		\label{line:Alitran_end}
		
}
}

\Return{The state tapes for the simulation of $\Pi$ as in Definition~\ref{defn:robust_sim}.}
\end{small}
\end{algorithm}

From the description in Algorithms~\ref{alg:ComputeOblivious}
and~\ref{alg:AliComputeOblivious},
it is straightforward to analyze the simulation overhead and memory usage.

\begin{lemma}[Simulation Overhead]
\label{lemma:overhea}
%Recall that $m \leq O(\frac{1}{\epsilon})^{0.199}$.
The communication overhead of Algorithms~\ref{alg:ComputeOblivious}
and~\ref{alg:AliComputeOblivious} is at most
$1 + O(m^{2.5} \sqrt{\log m}) \cdot \sqrt{\epsilon} \leq 2$.
\end{lemma}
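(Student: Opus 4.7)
The plan is to itemize every contribution to the bits sent across a single Alice--Bob channel in $\Pi'$, sum them, and show the total is bounded by $(1+O(m^{2.5}\sqrt{\log m})\sqrt{\epsilon})\cdot n$. First I would fix a channel between Alice and some $B_i$ and separate the traffic into three classes: (i) the one-off initial robust seed exchange \textsc{RobustReceive}/\textsc{RobustSend} on $\core^*$ (Lines~\ref{algline:robustrandomexchange} and \ref{algline:Alirobustrandomexchange}); (ii) the per-phase \textsc{RelaxedShareRand} calls that refresh $\sseed$ (Lines~\ref{algline:randomexchange} and \ref{algline:Alirandomexchange}); (iii) the per-epoch traffic consisting of the four short hashes of $c$ bits each exchanged in both directions during verification, plus the $r$ bits spent in the computation stage (real or dummy simulation).

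Next I would plug in the parameters. The initial seed has length $\ell=\Theta(o\log(Rro))=\Theta(\log(1/\delta)\log n)$ and the corruption budget is $t=2n\epsilon$, so by the stated guarantee for \textsc{RobustSend} this costs $\Theta(\ell+t)=\Theta(\log(1/\delta)\log n+n\epsilon)$ bits. For each of the $\lceil R/I\rceil$ phases, \textsc{RelaxedShareRand} sends a $\Theta(c\log(cL))$-bit seed under a corruption budget $I$, contributing $\Theta(c\log(cL)+I)$ bits per phase; summed over all phases and using $I=\Theta(\log o)$ this is $\Theta(R)\cdot O(1) + O(R/I)\cdot\Theta(c\log(cL))$. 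Finally, each of the $R$ epochs contributes $r+\Theta(c)$ bits for real simulation plus short hashes. Therefore
\[
n' \;\le\; R\cdot r \;+\; R\cdot \Theta(c) \;+\; \frac{R}{I}\cdot\Theta\bigl(c\log(cL)\bigr) \;+\; \Theta\bigl(\log(1/\delta)\log n+n\epsilon\bigr).
\]
Using $R=\lceil n/r\rceil+\Theta(m^{5}n\epsilon)$, the leading term $R\cdot r$ gives $n+\Theta(m^{5}n\epsilon\cdot r)$, and $R\cdot c$ gives $\Theta(nc/r)+\Theta(m^{5}n\epsilon c)$. Dividing by $n$, the dominant contribution to the overhead is $1+O(m^{5}\epsilon r)+O(c/r)$, and the remaining terms are lower-order in the regime $\delta\ge\exp(-\Theta(n^{0.249}\epsilon))$ since $(\log(1/\delta)\log n)/n$ and $\epsilon\cdot c/I$ are dwarfed by $\sqrt{m^{5}c\epsilon}$.

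Then I balance the two main terms by the choice $r=\Theta(\sqrt{c/(m^{5}\epsilon)})$, which is exactly the setting in the algorithms; substituting gives $m^{5}\epsilon r=\Theta(\sqrt{m^{5}c\epsilon})$ and $c/r=\Theta(\sqrt{m^{5}c\epsilon})$, so the overhead is
\[
1+O\bigl(\sqrt{m^{5}c\epsilon}\bigr)\;=\;1+O\bigl(m^{2.5}\sqrt{\log m}\cdot\sqrt{\epsilon}\bigr),
\]
using $c=\Theta(\log m)$. For the ``$\leq 2$'' bound, I would invoke the hypothesis $m\leq O(1/\epsilon)^{0.199}$: then $m^{2.5}\sqrt{\epsilon}\leq O(\epsilon^{-0.4975})\cdot\epsilon^{0.5}=O(\epsilon^{0.0025})$, and multiplying by $\sqrt{\log m}=O(\sqrt{\log(1/\epsilon)})$ keeps the expression $o(1)$, so the overhead is at most $2$ for all sufficiently small $\epsilon$ (absorbing into the hidden constants).

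The main obstacle I anticipate is bookkeeping rather than depth: one must be careful that the ``per-phase seed refresh'' cost is really amortized across $I=\Theta(\log o)$ epochs (so it adds only $O(c\log(cL)/I)$ per epoch, which is $o(c)$ and therefore absorbed) and that the initial $\core^*$ exchange, whose additive cost $\Theta(\log(1/\delta)\log n + n\epsilon)$ looks possibly large, is in fact dominated by $n\cdot\sqrt{m^{5}c\epsilon}$ in the stated range of $\delta$. A secondary subtlety is that dummy epochs transmit the same $r$ bits as real epochs, so the $R\cdot r$ term is a genuine upper bound regardless of how often roll-backs occur, which is what makes this bound independent of the (more delicate) potential-function correctness argument in Section~\ref{sec:proofmultiBob}.
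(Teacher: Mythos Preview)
Your proposal is correct and follows essentially the same approach as the paper: itemize the one-time seed transmission, the per-phase randomness refresh, and the per-epoch hashes plus simulation bits, then balance $O(m^5\epsilon r)$ against $O(c/r)$ via the choice $r=\Theta(\sqrt{c/(m^5\epsilon)})$ and invoke $m\le O(1/\epsilon)^{0.199}$ for the $\leq 2$ bound. Your treatment is in fact slightly more explicit than the paper's about why the initial seed cost and the amortized per-phase cost are lower-order, and about why dummy epochs do not affect the count.
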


\begin{proof}
It suffices to consider Algorithm~\ref{alg:ComputeOblivious}
from one Bob's perspective.

Transmitting $\core^*$ robustly takes $O(\log \frac{1}{\delta} \log n + n \epsilon) = O(n \epsilon)$,
by Lemma~\ref{lem:numOfBitsRobust}.

The information transmitted in each epoch of $\Pi'$ consists of the following:
\begin{compactitem}

\item The simulation of one epoch in the original $\Pi$, which takes $r$ bits.

\item There are 8 short hashes which takes $8c$ bits in total, because each short hash is $c$-bit long.

\item For every $I$ epochs in a phase, each call of
the subroutine \textsc{RelaxedShareRand} in Definition~\ref{defn:relaxedrand}
takes $O(c \log c L + I) = O(c \log o + I)$ bits.
\end{compactitem}

Hence, the total number of bits transmitted is

$O(n \epsilon) + R \cdot \{r + 8c + O(\frac{c \log o}{I}) \}
\leq n + n \cdot O(m^5 \epsilon r + \frac{c}{r} + m^5 \epsilon c).
$

Choosing $r = \Theta(\sqrt{\frac{c}{m^5 \epsilon}}) = \Theta(\sqrt{\frac{\log m}{m^5 \epsilon}})$
and recalling that $m \leq O(\frac{1}{\epsilon})^{0.199}$,
the overhead is at most $1 + O(m^{2.5} \sqrt{\log m}) \cdot \sqrt{\epsilon} \leq 2$.
\ignore{
**********
The total round complexity of the main loop Algorithm \ref{alg:ComputeOblivious} is
\begin{align*}
mR (r + r_c) &= m[\ceil{n / r} + (2mC_6+2C_6+m)\Theta(mn \eps)] r (1 + \frac{r_c}{r})\\
&= mn [1 + \Theta(m^4r \eps)] (1 + \frac{r_c}{r})
= mn [1 + \Theta(m^4\sqrt{\eps\log m})].
\end{align*}
where by Algorithm \ref{alg:ComputeOblivious},  $r = \ceil{\sqrt{\frac{r_c}{\eps}}}$ and $r_c = \Theta(\log m)$.
Moreover, the communication performed by the robust randomness agreement is $\Theta(mn \sqrt{\eps})$ many rounds and therefore negligible. 
Thus, the communication rate of Algorithm \ref{alg:ComputeOblivious} is $1 - \Theta(m^4\sqrt{\eps\log m})$ as desired. 
}
\end{proof}

\ignore{
\begin{theorem}
[Theorem 7.1]
\label{the:mainOblivious}
Suppose any $n$-bit protocol $\Pi$ using any alphabet $\Sigma$. Algorithm \ref{alg:ComputeOblivious} is a computationally efficient randomized coding scheme which given $\Pi$, with probability at least $1 - \delta$, robustly simulates it over any oblivious adversarial error channel with alphabet $\Sigma$ and error rate $\eps$. The simulation uses $mn (1 + \Theta(m^4\sqrt{\eps\log m}))$ rounds and therefore achieves a communication rate of $1 - \Theta(m^4\sqrt{\eps\log m})$.
\end{theorem}
}

\begin{lemma}[Memory Usage]
Suppose the memory usage of Alice in the original protocol is $M_A$.
Then, the memory usage of Alice in the simulation in Algorithm~\ref{alg:AliComputeOblivious}
is $O(m \log \frac{1}{\delta} \log n) + O(\log n) \cdot M_A$.

The memory usage of each Bob is similar, as if $m = 1$.
\end{lemma}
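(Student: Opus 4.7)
The plan is to enumerate every piece of state that Alice must maintain during Algorithm~\ref{alg:AliComputeOblivious} and bound each component individually; the bound for Bob will then follow by specialising to $m=1$.

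First I would address the hash dictionary $\HD$, which is the dominant term. The crucial invariant (enforced by Line~\ref{line:removepoints}) is that every stored index $p \neq \Point$ has the form $p = 2^j \lfloor \Point/2^j\rfloor - 2^j$ for some $j\geq 0$; since $p \in [1..R]$ and $R = \poly(n)$, there are at most $\log_2 R = O(\log n)$ such indices at any moment. For each stored meeting point, I would itemise the storage: the index itself takes $O(\log n)$ bits, the $m$ long hashes use $m \cdot o = O(m \log\tfrac{1}{\delta})$ bits by Corollary~\ref{cor:long_hash_len}, and the snapshot of the original internal state $\xi_p$ takes $M_A$ bits. Multiplying by the number of stored meeting points yields $O(\log n) \cdot \bigl(O(\log n) + O(m \log\tfrac{1}{\delta}) + M_A\bigr)$. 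Because the assumed range $\delta \in [\exp(-\Theta(n^{0.249}\epsilon)), 1/n]$ forces $\log \tfrac{1}{\delta} = \Omega(\log n)$, the $\log^2 n$ term is absorbed, leaving exactly the claimed $O(m \log \tfrac{1}{\delta} \log n) + O(\log n)\cdot M_A$.

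Next I would account for the remaining ``working'' state. The scalar counters $\Point, \verif, \VP, \MPone, \MPtwo, \error, \voteone, \votetwo$ are all bounded by $R = \poly(n)$ and use only $O(\log n)$ bits in total. The long-hash seed $\core^*$ has length $\Theta(o \log(Rro)) = \Theta(\log\tfrac{1}{\delta}\log n)$ by Fact~\ref{fact:generatebiasedrand} and must be kept for the whole execution; this fits inside the claimed bound. The per-phase short-hash seed $\sseed$ has $cL = \Theta(\log m \cdot (\log n + \log\tfrac{1}{\delta}))$ bits, again within budget. For each round of communication with the $m$ Bobs the four short hashes sent and received take $O(mc) = O(m \log m)$ bits of buffer, which is dominated.

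The step that needs the most care is showing that the verification, computation, and transition stages do not silently require storing an entire epoch's transcript or the stretched random string $\mathcal{S}^*$. Here I would invoke the on-demand extraction subroutines of Section~\ref{sec:long_hash}: \textsc{RandInit} followed by \textsc{ExtractBit} lets us materialise individual bits of $S^{(\Point)}$ using only $O(\log\tfrac{1}{\rho}\log(Rro)) = O(\log\tfrac{1}{\delta}\log n)$ bits of scratch. So when simulating epoch $\Point$, we can stream: as each transmitted bit $\sigma_{\Point,i}[j]$ is produced by the oracle for $\Pi$, we extract the matching random bits and update the running inner-product hash $H_i$ in place, recording only the current $o$-bit partial hash (one per channel) and the $O(M_A)$-bit oracle state. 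Thus no epoch-length transcript ever resides in memory. The only mild obstacle is verifying that the robust transmission of $\core^*$ in Line~\ref{algline:Alirobustrandomexchange} does not exceed the budget, which follows from the memory guarantee of \textsc{RobustSend}/\textsc{RobustReceive} stated in Section~\ref{sec:long_hash}. Summing every component gives the claimed bound, and setting $m=1$ yields the matching bound for each Bob.
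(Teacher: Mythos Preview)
Your proposal is correct and follows essentially the same approach as the paper: bound the $O(\log n)$ meeting points in $\HD$ (each costing $O(m\,o) + M_A$ bits) and separately bound the seed $\core^*$ at $O(\log\tfrac{1}{\delta}\log n)$ bits. You are considerably more thorough than the paper's three-sentence proof---in particular, your explicit argument that the per-epoch transcript and the stretched randomness $\mathcal{S}^*$ can be streamed bit-by-bit (so that neither the $r$-bit $\sigma_\Point$ nor the $ro$-bit block $S^{(\Point)}$ is ever stored in full) addresses a point the paper glosses over, and your absorption of the $\log^2 n$ index cost via $\log\tfrac{1}{\delta}=\Omega(\log n)$ is a nice touch. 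One minor citation slip: the seed length $\Theta(o\log(Rro))$ you quote is the \emph{weaker} variant the paper actually uses (stated just after Fact~\ref{fact:generatebiasedrand}), not Fact~\ref{fact:generatebiasedrand} itself, but your numerical bound is the right one.
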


\begin{proof}
For processing the pre-shared randomness (for both long and short hashes),
the memory usage is dominated by handling $\core^*$, which takes $O(o \log n) = O(\log \frac{1}{\delta} \log n)$ bits.

Since Alice is involved in $m$ communication channels, each of the $O(\log n)$ meeting  point takes $O(m o)$ bits for
the $m$ long hashes, plus $M_A$ bits for the corresponding internal state in $\Pi$.

Therefore, the total memory usage is $O(m \log \frac{1}{\delta} \log n) + O(\log n) \cdot M_A$.
\end{proof}

\ignore{

\begin{table*}[!ht]
\caption{Algorithm Parameters}
\label{tab:AlgPara}
\centering
\begin{tabular}{cl}
\toprule
    Parameter & Description \\
\midrule
    $\epsilon$ & corruption rate  \\
    $r_c$ & size of hash in bits  \\
    $r$ & number of simulated exchanged bits per epoch  \\
    $R$ & total number of epochs \\
    $I$ & number of epochs per phase \\
    $L$ & number of bits to index an epoch \\
    $\sigma_p$ & information exchanged in epoch~$p$ \\
    $\HD$ & hash dictionary \\
\bottomrule
\end{tabular}
\end{table*} 
}

\section{Correctness of Simulation}
\label{sec:proofmultiBob}

\ignore{
\textcolor{red}{(New feature: Extend the algorithm to multiple Bobs.)}

Parameters used in Algorithm \ref{alg:ComputeOblivious} are listed in Table \ref{tab:AlgPara}.
The dictionary $\HD$ is supposed to store epoch-hash pairs $(p, H_p)$.
We use the convention that $\HD[p]$ returns $H_p$ if the pair is stored;
otherwise, $\HD[p]$ returns $\bot$.  Moreover, for the case that $(p, H_p)$ is stored in $\HD$,
we also assume implicitly that the party also stores its internal state after epoch~$p$ is simulated.
}

While the proofs for communication overhead and memory usage are quite straightforward
from the parameters in Algorithms~\ref{alg:ComputeOblivious}
and~\ref{alg:AliComputeOblivious},
the correctness of simulation is more subtle.  The following can be potential correctness issues:

\begin{compactitem}
\item After $R$ epochs of simulation, the original protocol $\Pi$ still has not reached the end.

\item The original protocol $\Pi$ has reached the end, but the transcripts of different parties are inconsistent
because of hash collision and/or corruption.
\end{compactitem}

\noindent \textbf{Proof Outline.}
We adapt the potential function argument in~\cite{Haeupler14,abs-1805-06872} to resolve the above issues.  The novel parts
of our analysis are as follows.
\begin{compactitem}
\item We consider multiple number~$m$ of Bobs.  Hence, the constants in the original potential function
now all have dependence on $m$, which will eventually show up in the communication overhead.

\item Similar to~\cite{abs-1805-06872}, we introduce extra variables to accommodate the case when meeting point $\MPtwo$ of some scale
is unavailable, and make some of the analysis more explicit.
\end{compactitem}

In view of Corollary~\ref{cor:long_hash_len},
the probability of long hash collision is small.  Hence,
we can use the same proof structure as in~\cite{Haeupler14}.
A potential function $\Phi$ is defined for the following proof strategy.

\begin{enumerate}

\item If a corruption or short hash collision does not happen in an epoch, the potential function increases by some amount.

\item If a corruption or short hash collision happens in an epoch, the potential function decreases by at most some amount.

\item Except with small probability, the number of short hash collisions is small and comparable to the number of corruptions.

\item Therefore, if the simulation is run with a sufficient number of epochs,
the final potential function will have a large value, except with small probability.
The potential $\Phi$ is designed such that this implies that the simulation of the original protocol is finished.

\end{enumerate}

%This is the sole part in which the analyses of Algorithm \ref{alg:ComputeOblivious} and Algorithm \ref{alg:ComputeAdv} differ. 

%For a small enough $\eps$ the potential is guaranteed to be linearly increasing which sufficies to show that the algorithms terminates correctly after the fixed number of iterations. 

\noindent \textbf{Potential Function.}  
We use a similar potential function $\Phi$ as in~\cite{Haeupler14,abs-1805-06872},
although some variables are renamed or decomposed for clarity;
it is defined with respect to
 the variables in 
Algorithms~\ref{alg:ComputeOblivious}
and~\ref{alg:AliComputeOblivious}, together with some 
additional auxiliary variables.
Recall that $A$ denotes Alice and
$B_i$ denotes each Bob.
We use $AB = \{A, B_1, \ldots, B_m\}$ to denote all parties;
when $AB$ is used as a subscript for a variable,
it means the summation of that variable over all parties.
For example, 
$\verif_{AB} = \sum_{i\in AB} \verif_{i}$.
%The potential $\Phi$ is based on the variables $\verif$, $\error$, and $\Trans$ of all parties.
%For these variables, we use a subscript $A$ or $Bi$ to denote the value of the variables for Alice or the $i$th Bob respectively.
%Let $AB=\{A,B1,...,Bm\}$ be the collection of all parties.
%We also denote with the subscript $AB$ the sum of all these variables, e.g., $\verif_{AB} = \sum_{i\in AB}\verif_{i}$. 

%To define the potential $\Phi$, we need the following intermediate quantities:

\noindent \emph{Transcript Prefix.}  Although
each party does not remember its complete transcript history,
we use $\sigma[1..\Point]$ to denote the transcript prefix
that leads to that party's current internal state 
in the simulation of $\Pi$.  In Alice's case,
we use $\sigma_{A,i}$ to denote the transcript corresponding
to communication with $B_i$.
Recall that because of Corollary~\ref{cor:long_hash_len},
we can assume that there is no long hash collision in the potential analysis.

\noindent \emph{Maximal Common Prefix.}
We define the maximal agreement parameter $\agree$ as 
$$\agree :=  \max \left\{l \in [1..\min\{\Point_A,\Point_{B_1},...,\Point_{B_m}\}] \ \text{s.t.}\  \forall i\in[m]: \sigma_{A,i}[1..l] = \sigma_{B_i}[1..l]\right\}.$$
We also define the disagreement parameter
 $\disagree_j :=\Point_j-\agree$, for each party $j \in AB$.

\begin{remark}
Observe that if $\agree$ reaches $\ceil{\frac{n}{r}}$,
then all parties have consistent transcripts of at least $n$ bits in the original protocol $\Pi$.
This means that the original protocol must have been completed,
and the corresponding state tapes will satisfy Definition~\ref{defn:robust_sim}.
\end{remark}

\noindent \emph{Bad Vote Counters.}  These are known as $\mathsf{BVC}$
in~\cite{Haeupler14,abs-1805-06872}.
For each party $j \in AB$, there is a variable $\beta_j$;
these variables keep track of short hash collisions and corruptions
related to meeting points.
Each of them can increase by at most one in each epoch;
moreover, when one of them increases by one, then \textbf{all of them}\footnote{
Increasing all $\beta$ variables together can simplify the proof.  Considering
each channel separately might get a better dependence on $m$, but we do not
see a simple way to achieve so.} 
must increase by one,
if at least one of the following happens during an epoch:

\begin{enumerate}
\item There exists some party~$j$ such that $\voteone$ of that party 
increases (Line~\ref{line:voteone} in Algorithm \ref{alg:ComputeOblivious} and Line~\ref{line:Alivoteone} in Algorithm \ref{alg:AliComputeOblivious}),
but its transcript $\sigma_j[1..\MPone]$ in some channel
actually does not match 
$\sigma_{j'}[1..\MPone]$ or $\sigma_{j'}[1..\MPtwo]$
of the other party~$j'$.
This happens due to short hash collision or corruption.

\item Similarly, 
there exists some party~$j$ such that its $\votetwo$ is increased due to short hash collision or corruption.

%we increase all $\beta$ values by one during an epoch if there exists a party such that $\votetwo$ of that party increases in Line~\ref{line:votetwo} without $\Trans[1,\MPtwo]$ matching either $\Trans[1,\MPone]$ or $\Trans[1,\MPtwo]$ of the other communicating party.

\item There exists some party whose $\voteone$ or $\votetwo$ does not increase because 
the short hash for some meeting point is corrupted during transmission; 
in other words, if there were no corruption for the short hashes of meeting points during transmission in that epoch,
then its $\voteone$ or $\votetwo$ would have increased.
\end{enumerate}

The $\beta$ variables for different parties will be decreased separately.
When a party $j \in AB$
resets every counter
(Line~\ref{line:reseterror}, \ref{line:resetrollbackone} or \ref{line:resetrollbacktwo} in Algorithm \ref{alg:ComputeOblivious}), we reset $\beta_j$ of that party to zero.

\noindent \textbf{Auxiliary Variables for Missing Meeting Points.}
In~\cite{abs-1805-06872}, a variable $L^-$ is defined to accommodate the analysis
of missing meeting points.  However, we find that the analysis is clearer, if
we consider two variables $\alpha$ and $\gamma$ as follows.

\noindent \emph{Bad Vote Accumulator.} These variables act as a buffer to 
accumulate values from the $\beta$ variables.
For each party $j \in AB$, there is a variable $\alpha_j$, which is modified in the following ways:
\begin{enumerate}
\item In Line~\ref{line:resetrollbackone} and \ref{line:resetrollbacktwo} in Algorithm \ref{alg:ComputeOblivious},
just before a party~$j$ resets $\beta_j$ to zero, 
the value of its $\alpha_j$ increases by half of~$\beta_j$.

\item  When a party $j$ does a meeting point transition
(Line~\ref{line:resetrollbackone} and \ref{line:resetrollbacktwo} in Algorithm \ref{alg:ComputeOblivious}) with $\disagree_{AB}=0$ after the transition, its  $\alpha_j$ is reset to zero.

\end{enumerate}

\noindent \emph{Corrupted Computation.}  We have a single variable $\gamma$
that keeps track of
simulation of $\Pi$ that is corrupted; it is modified as follows.

\begin{enumerate}

\item  During the computation stage of an epoch,
if $\exists j\in AB: \disagree_j$ increases by one, then we call this epoch suffers a corrupted computation and $\gamma$ is increased by one.

Note that the increase of $\disagree_i$ of a party $i$ caused by rollback does not count as a corrupted computation and $\gamma$ does not increase.

\item Suppose all parties do meeting point transitions with $\forall i\in AB: 0<2\disagree_{AB}<\verif_{i}=\verif$ and $\alpha_{AB}+\beta_{AB} < 0.1\verif$ before the transition and with $\disagree_{AB}=0$ after the transition.
In this case, the $\gamma$ value will decrease by $0.25\verif$.  As we shall see in Lemma~\ref{lem:gammaDecrease},
when $\gamma$ is decreased, it will never drop below 0.

\ignore{
\item
Let $\Point_{\max}$ be the maximum value that $\Point_i$ of any party $i$ has ever reached.
If all parties do one epoch simulation in the computation stage
and $\agree$ increases to $\Point_{\max}$,
then we will reset $\gamma=0$ after the transition stage (during which all parties will do error transition since $\verif=\error=0$).
}

\end{enumerate}

\noindent \textbf{Intuition for the New Variables $\alpha$ and $\gamma$.}
We explain briefly how these variables are used to handle the case when some $\MPtwo$ is missing.
Consider some index $p = (4n+1) 2^i$, which is removed from $\HD$ when
a party~$j$ reaches $\Point = p + 2^{i+1}$.  There are two scenarios in which
the party needs to consider $p$ as a potential meeting point, where we will
need either $\alpha_{AB}$ or $\gamma$ to be large for the potential analysis.
If $\agree$ is close to $\Point$, then this means that
to roll so further back to $p$, many available meeting points are overlooked
due to bad votes, which is accumulated by $\alpha_{AB}$;
if $\agree$ is not close to $\Point$, then this means that
there must be many corrupted computations by party~$j$,
which is counted by~$\gamma$.

We remark that the parties are not aware of the potential function,
whose purpose is to analyze the correctness of the simulation.
Now we can define the potential to be %

\begin{numcases}{\Phi=}
\agree - C_3  \disagree_{AB} + C_2  \verif_{AB}  - C_5  \error_{AB} - C_6 \alpha_{AB}-2C_6\beta_{AB}- C_7\gamma, \, \quad
\text{if } \forall i,j\in AB: \verif_i = \verif_{j}; \label{eq:potential_eq}\\
\agree - C_3 \disagree_{AB} - 0.9 C_4  \verif_{AB}  +  C_4  \error_{AB} -  C_6  \alpha_{AB}- C_6\beta_{AB}- C_7\gamma, \, \,
\text{otherwise.} \label{eq:potential_neq}
\end{numcases}
where the coefficients $C_2$ to $C_7$ are listed in Table \ref{tab:Constant}.

\begin{table*}[!ht]
\caption{Coefficients in Potential $\Phi$}
\label{tab:Constant}
\centering
\begin{tabular}{cl}
\toprule
    Constants & Value \\
\midrule
    $C_2$ & $1$  \\
    $C_3$ & $6 + 2m$ \\
    $C_4$ & $30 + 60m + 20 m^2$  \\
    $C_5$ & $6 + 80m + 116m^2 + 36m^3$ \\
    $C_6$ & $310m + 180m^2 + 18m^3$ \\
    $C_7$ & $12 + 4m$ \\
\bottomrule
\end{tabular}
\end{table*} 

%To weight the various contributions to the potential we use the constants 
%
%$$1 < C_2 < C_3 < C_4 < C_5 < C_6,$$
%
%which are chosen such that $C_i$ is sufficiently large depending only on $C_j$ with $j<i$.

The proof strategy can be summarized in the following technical lemma,
whose proof is deferred to Section~\ref{sec:potential_tech}.

\begin{lemma}[Technical Lemma for Potential Function]
\label{lemma:potential_tech}
The following statements are true for the potential function $\Phi$.
\begin{enumerate}

\item If there is no corruption or hash collision in an epoch,
the potential $\Phi$ increases by at least 1;
otherwise, the potential $\Phi$ decreases by at most $O(m^4)$.

\item Except with probability $\exp(-\Theta(\frac{n^{0.25} \epsilon}{I}))$,
the number of epochs with short hash collision is at most $O(m n \epsilon)$.

\item When the simulation terminates, the agreement parameter $\agree \geq \Phi$.
\end{enumerate}
\end{lemma}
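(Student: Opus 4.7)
\medskip

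\noindent\textbf{Proof Proposal for Lemma~\ref{lemma:potential_tech}.}

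The plan is to prove the three parts in order, since part~3 is the easiest consequence of the algebraic form of $\Phi$, part~2 is a self-contained probabilistic tail bound, and part~1 is the technical heart that drives the potential argument. Throughout, I will condition on the event (which holds except with probability $\delta$ by Corollary~\ref{cor:long_hash_len}) that no long-hash collision ever occurs, so that equal long hashes really do witness equal transcript prefixes.

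For part~1, I would proceed by a case analysis on what happens in a single epoch, split first by whether all parties currently share the same $\verif$ (so we are in formula~\eqref{eq:potential_eq}) or not (formula~\eqref{eq:potential_neq}). In the \emph{no-corruption, no-collision} case the short hashes $G_\verif$, $G_1$, $G_2$, $G_\Point$ faithfully reflect the parties' actual meeting-point data and transcript hashes, so I need to check in each sub-case of the algorithm (pure verification increment, successful simulation of a new epoch, error transition, rollback to $\MPone$, rollback to $\MPtwo$, or vote reset at a power of two) that the combined change in the terms $\agree$, $-C_3 \disagree_{AB}$, $C_2 \verif_{AB}$, $-C_5 \error_{AB}$, $-C_6 \alpha_{AB}$, $-2C_6 \beta_{AB}$, $-C_7 \gamma$ is at least $+1$. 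The constants in Table~\ref{tab:Constant} are chosen exactly so that: a successful $\Point$-step gains $1$ from $\agree$ while losing only bounded amounts from the reset of $\verif,\error,\voteone,\votetwo$; an error transition shrinks $C_5 \error_{AB}$ linearly in $\verif$; a rollback to a correct meeting point redeems the $\verif$ term against $C_3 \disagree_{AB}$; and crossing between the two branches of $\Phi$ (when $\verif$'s become equal or unequal) is absorbed by the gap between $C_2$ and $C_4$. For the \emph{corrupted or collided} case I need to upper-bound the damage: each corruption or collision can affect at most a constant number of parties, each of $\error$, $\beta$, $\alpha$, $\gamma$ can grow by at most $O(1)$, and $\disagree_{AB}$ can grow by at most $1$ per channel; multiplying by the coefficients of Table~\ref{tab:Constant} yields a loss of at most $O(m^4)$.

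The hard part will be the rollback sub-cases when some $\MPtwo$ has been removed from $\HD$, because this is precisely what the new variables $\alpha$ and $\gamma$ are designed to absorb. I plan to argue, exactly as sketched in the overview of ``Intuition for the New Variables'', that an index $p = (4n+1)2^i$ that has been forgotten can only re-enter the picture if either (i) many available meeting points have been bypassed due to bad votes, so $\alpha_{AB}$ already carries enough credit (it was fed by half of each $\beta$ at every rollback), or (ii) the simulation has diverged by many corrupted computations, in which case the reserve built into $\gamma$ (which increases by one per corrupted epoch and decreases by $0.25 \verif$ only when the transition is honest and large) is sufficient. Verifying that $\gamma$ never drops below $0$ (the promised Lemma on $\gamma$ decrease) needs its own small induction: each decrease event requires $\alpha_{AB}+\beta_{AB} < 0.1\verif$ and genuine realignment, and I would charge the $0.25\verif$ subtracted off to the $\gamma$-mass built up during the $\Theta(\verif)$ corrupted epochs that must have preceded the disagreement.

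For part~2, I use the fact that the biased seed $\core$ for the short hashes is regenerated independently in each of the $\lceil R/I\rceil$ phases (Definition~\ref{defn:relaxedrand}), and that within a phase the conditional collision probability per comparison is at most $\tfrac{2}{2^c} = O(1/\poly(m))$ by the $\rho$-biased inner-product hash bound. Because $c=\Theta(\log m)$ and each epoch performs $O(m)$ short-hash comparisons, the expected number of collision-epochs per phase is $O(I/m^{\Theta(1)})$, so across the $R=\Theta(n/r)$ phases the expected total is $O(n\epsilon)$ after our choice $r=\Theta(\sqrt{c/(m^5\epsilon)})$. The phase-level independence of the seeds (together with the oblivious adversary's commitment to a fixed corruption pattern) lets me apply a Chernoff bound phase-by-phase; using $I=\Theta(\log o)$ gives the claimed $\exp(-\Theta(n^{0.25}\epsilon/I))$ failure rate once I plug in $r$ and unroll the deviations. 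Part~3 then follows immediately by inspecting~\eqref{eq:potential_eq} and~\eqref{eq:potential_neq}: every other term in $\Phi$ has a non-positive coefficient or is of the form $+C_2 \verif_{AB}$ which is bounded by $C_4 \verif_{AB}$ in the other branch, so at termination $\Phi \leq \agree$, as required.
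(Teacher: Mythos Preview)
Your sketch of part~1 is broadly right and matches the paper's case analysis. The genuine gap is in part~3, which you claim ``follows immediately by inspecting~\eqref{eq:potential_eq} and~\eqref{eq:potential_neq}.'' It does not: in branch~\eqref{eq:potential_eq} the term $+C_2\verif_{AB}$ is genuinely positive and is not absorbed by ``the other branch'' (at termination we are in exactly one branch); in branch~\eqref{eq:potential_neq} the term $+C_4\error_{AB}$ is positive and requires the invariant $2\error<\verif$. The paper's proof of $\Phi\le\agree$ in branch~\eqref{eq:potential_eq} is substantial: if $\verif\le 2\disagree_{AB}$ then $C_2\verif_{AB}\le C_3\disagree_{AB}$ by the choice of constants, but if $\verif>2\disagree_{AB}$ one must look back to the last power-of-two value $\verif^*$ of~$\verif$, invoke Lemma~\ref{lem:haveATrueMP} to produce a true meeting point, and then (via exactly the missing-$\MPtwo$ machinery of Lemma~\ref{lem:valuesForMissingMP2} that you correctly identified as the hard part of part~1) argue that one of $\beta_{AB},\alpha_{AB},\gamma$ is $\Omega(\verif^*)$, which then offsets $C_2\verif_{AB}$. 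So part~3 is not a free consequence of the formula; it reuses the same key lemmas as part~1.

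This also undermines your claim that part~2 is ``self-contained.'' The paper proves part~2 \emph{using} parts~1 and~3: with $d$ the number of inconsistent epochs and $h$ the number of those with a short-hash collision, the two potential bounds combine to give $d\le(2+C^-)h+\Theta(mn\epsilon)$ (where $C^-=O(m^4)$ is the per-epoch damage from part~1), so in the bad event $h>\Theta(mn\epsilon)$ one has $h/d\ge 1/(2(2+C^-))\ge 2p$; this last inequality is exactly what fixes $c=\Theta(\log m)$, and Chernoff over the phase-independent seeds then bounds $\Pr[h\ge 2dp]$. Your proposed direct Chernoff over all $R$ epochs may be salvageable, but your one-line expectation estimate does not obviously land at $O(mn\epsilon)$ once $R=\Theta(n/r+m^5 n\epsilon)$ and $r=\Theta(\sqrt{c/(m^5\epsilon)})$ are substituted, and you must also address that the objects being hashed in phase~$i$ depend on the randomness of earlier phases.
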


\begin{lemma}[Correctness of Simulation]
\label{lemma:correct_sim}
Suppose the failure probability satisfies $\exp(-\Theta(\frac{n^{0.25}  \epsilon}{\log n\epsilon})) \leq \delta \leq \frac{1}{n}$.  
%Moreover, 
%the number $m$ of Bobs is small enough (e.g.,  at most $O(\frac{1}{\epsilon})^{0.1}$) such that
 %the number
%$R = \ceil{\frac{n}{r}} + \Theta(m^5 n \epsilon)$
%of epochs satisfies $R r \leq 2n$.
Then, except with probability~$\delta$,
Algorithms~\ref{alg:ComputeOblivious}
and~\ref{alg:AliComputeOblivious}
correctly simulate the protocol $\Pi$.

\end{lemma}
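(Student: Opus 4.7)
The plan is to combine the long hash collision bound from Corollary~\ref{cor:long_hash_len} with the three parts of Lemma~\ref{lemma:potential_tech} to show that the potential $\Phi$ grows enough over $R$ epochs to force $\agree \geq \ceil{n/r}$. By the remark preceding the technical lemma, this implies that every pair of parties agree on transcripts of at least $n$ bits, so the original $\Pi$ has terminated and the state tapes satisfy Definition~\ref{defn:robust_sim}.

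First, I would isolate two failure events to exclude. By Corollary~\ref{cor:long_hash_len}, the choice $o = \Theta(\log \frac{1}{\delta})$ makes the probability of \emph{any} long hash collision (across all $m$ channels and all $R$ epochs) at most $\delta/2$. By part~2 of Lemma~\ref{lemma:potential_tech}, the number of epochs containing a short hash collision exceeds $O(mn\epsilon)$ with probability at most $\exp(-\Theta(n^{0.25}\epsilon/I))$. Since $I = \Theta(\log o) = \Theta(\log\log \frac{1}{\delta})$, the hypothesis $\delta \geq \exp(-\Theta(n^{0.25}\epsilon/\log(n\epsilon)))$ is precisely what is needed to bound this tail by $\delta/2$; a union bound then gives total failure probability at most $\delta$, and we condition on the complement.

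Next, I would tally the bad epochs on which $\Phi$ can drop. By Lemma~\ref{lemma:overhea}, each of the $m$ channels transmits at most $2n$ bits over $\Pi'$, so each channel is corrupted in at most $2n\epsilon$ epochs, giving at most $O(mn\epsilon)$ epochs with any corruption. Together with the $O(mn\epsilon)$ short-hash-collision epochs, the total number of bad epochs is at most $O(mn\epsilon)$. By part~1 of the technical lemma, each good epoch increases $\Phi$ by at least $1$ and each bad epoch decreases it by at most $O(m^4)$, whence
\[
\Phi_{\text{final}} \;\geq\; \bigl(R - O(mn\epsilon)\bigr) \;-\; O(m^4)\cdot O(mn\epsilon) \;=\; R - O(m^5 n\epsilon).
\]
With $R = \ceil{n/r} + \Theta(m^5 n\epsilon)$ and a sufficiently large hidden constant, this yields $\Phi_{\text{final}} \geq \ceil{n/r}$, and part~3 of the technical lemma then gives $\agree \geq \Phi_{\text{final}} \geq \ceil{n/r}$, completing the argument.

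The main obstacle is the bookkeeping of the failure probability: one must track carefully that the stated range of $\delta$ exactly matches what is required for the short hash collision tail $\exp(-\Theta(n^{0.25}\epsilon/I))$, in which $I$ itself depends on $\log\log \frac{1}{\delta}$, to be absorbed into $\delta$. The structural part of the argument is immediate once the technical lemma is in hand, since all the conceptual work has been packaged into the design of the potential $\Phi$, whose properties are proved in Section~\ref{sec:potential_tech}.
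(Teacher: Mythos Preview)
Your proposal is correct and follows essentially the same approach as the paper's own proof: bound long hash collisions via Corollary~\ref{cor:long_hash_len}, bound the short hash collision tail via part~2 of Lemma~\ref{lemma:potential_tech} using $I = \Theta(\log o) = \Theta(\log\log\frac{1}{\delta})$ and the assumed lower bound on $\delta$, union-bound these two events, count bad epochs (corruptions plus short hash collisions) as $O(mn\epsilon)$ using the overhead bound of Lemma~\ref{lemma:overhea}, and then apply parts~1 and~3 of the technical lemma together with the choice of $R$ to conclude $\agree \geq \ceil{n/r}$. Your identification of the failure-probability bookkeeping as the only delicate point matches the paper's treatment.
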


\begin{proof}
By Corollary~\ref{cor:long_hash_len},
by choosing the length of the long hash to be $o = \Theta(\log \frac{1}{\delta})$,
the probability of a long hash collision is at most $\frac{\delta}{2}$.

From the second statement of Lemma~\ref{lemma:potential_tech},
the probability of having more than $\Theta(mn \epsilon)$ epochs with short
hash collision is at most $\exp(-\Theta(\frac{n^{0.25} \epsilon}{I})) \leq
\exp(-\Theta(\frac{n^{0.25} \epsilon }{\log \log \frac{1}{\delta}}))$,
since $I = O(\log o)$.  Since we have $\exp(-\Theta(\frac{n^{0.25}  \epsilon}{\log n\epsilon})) \leq \delta$,
it follows that the constant in the big-theta can be chosen such that
$\exp(-\Theta(\frac{ n^{0.25} \epsilon }{\log \log \frac{1}{\delta}})) \leq \frac{\delta}{2}$.

Therefore, by the union bound, except with probability $\delta$,
there is no long hash collision and the number of epochs with short hash collisions
is at most $O(mn \epsilon)$.
Moreover, 
%since $R r \leq 2n$, 
since the overhead is at most 2 from Lemma~\ref{lemma:overhea},
the $\epsilon$ error rate of the $m$ channels
implies that there can be at most $O(m n \epsilon)$ epochs with corruption.

From the first statement of Lemma~\ref{lemma:potential_tech},
by having a large enough big-theta constant in $R = \ceil{\frac{n}{r}} + \Theta(m^5 n \epsilon)$,
this implies that finally the potential $\Phi$ is at least $\ceil{\frac{n}{r}}$.

By the third statement of Lemma~\ref{lemma:potential_tech},
all parties have an agreed simulation of at least $\ceil{\frac{n}{r}}$ epochs
of $\Pi$, which means the simulation of $\Pi$ is completed.
\end{proof}

\section{Technical Proofs for Potential Function}
\label{sec:potential_tech}

This section gives the detailed proof of Lemma~\ref{lemma:potential_tech},
whose results are adapted from~\cite{Haeupler14,abs-1805-06872} to multiple
number of Bobs.  For completeness, we give the full proof here, and emphasize
which parts are novel.

\subsection{Lower Bound on Potential Function}
\label{sec:potential_lb}

This is the first statement of Lemma~\ref{lemma:potential_tech},
which states how fast the potential function grows during the simulation.

In
Algorithms~\ref{alg:ComputeOblivious}
and~\ref{alg:AliComputeOblivious},
an epoch is
\emph{consistent} if $\disagree_{AB}=0$ and $\forall i,j\in AB: \verif_{i}=\verif_{j}$ at the beginning of the epoch; otherwise, the epoch is \emph{inconsistent}.

As in~\cite{Haeupler14}, we decompose each epoch into three stages:
verification, computation and transition.
The first lemma analyzes how the potential function changes
when the algorithm goes through the verification and computation stages.

\begin{lemma}[Verification and Computation Stages: Similar to
Lemma 7.3 in~\cite{Haeupler14}]
\label{lem:potentialIncreaseVCPhase}
Fix some epoch and suppose all parties run their verification and computation stages (e.g., Line \ref{line:veri_begin} to \ref{line:comp_end} in Algorithm~\ref{alg:ComputeOblivious} and Line \ref{line:Aliveri_begin} to \ref{line:Alicomp_end} in Algorithm~\ref{alg:AliComputeOblivious}).
If there exists at least one corruption or short hash collision,
the potential decreases by at most $O(m^4)$.
Furthermore, if the corresponding epoch is consistent and no corruption occurs, the potential increases by at least one.
If the corresponding epoch is inconsistent and no corruption or short hash collision occurs, the potential increases by at least two.
\end{lemma}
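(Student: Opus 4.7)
The plan is a case analysis on the epoch configuration: (i) consistent with no corruption or short hash collision, (ii) inconsistent with no corruption or short hash collision, and (iii) at least one corruption or short hash collision. For each case, I would track how the verification and computation stages perturb each component of $\Phi$, while carefully handling the switch between the two formulas \eqref{eq:potential_eq} and \eqref{eq:potential_neq}. Note that in these two stages the quantities $\agree$, $\disagree_j$, $\alpha_j$, $\gamma$ are unaffected except through the global simulation step in the computation stage (which modifies $\Point$ and $\agree$), so the bookkeeping essentially reduces to tracking $\verif, \error, \voteone, \votetwo, \beta$.

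For part (b), starting from a consistent epoch (all $\verif_i$ equal and $\disagree_{AB}=0$), the assumption that no corruption or hash collision occurs, combined with the fact that transcripts actually agree up to $\Point$, implies that every $G_\verif, G_1, G_\Point$ matches between Alice and each Bob. In verification, $\verif_{AB}$ increments by $m+1$ with no $\error$ or $\beta$ changes; in computation, because $\verif=1$, $\error=0$, and all $G_\Point$ agree, $\Point$ increments for all parties (raising $\agree$ by $1$), all counters reset, and the formula stays \eqref{eq:potential_eq} before and after. The net change to $\Phi$ is exactly $+1$ from $\agree$.

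For part (c), I would split on whether the starting configuration has equal $\verif_i$ but $\disagree_{AB}>0$, or unequal $\verif_i$. In the first sub-case, formula \eqref{eq:potential_eq} applies throughout; no-hash-collision guarantees that some $G_\Point$ genuinely fails to match (since transcripts truly differ), so computation is blocked and $\verif_{AB}$ grows by $m+1$, contributing $C_2(m+1) \ge 2$. In the second sub-case, formula \eqref{eq:potential_neq} applies to at least one end of the epoch; the parties whose $\verif_i$ disagrees with a peer will faithfully observe $G_\verif \ne G_\verif'$ and bump $\error$, so with $C_4$ sufficiently larger than $C_2$ and $0.9 C_4$, the gain of $+C_4$ per erroneous party dominates any loss from formula-switching and from the $-0.9C_4 \cdot \verif_{AB}$ term. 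The coefficients in Table~\ref{tab:Constant} are calibrated precisely so this inequality holds with slack $\ge 2$.

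For part (a), I would bound each adversarial primitive's maximum damage: a single corruption or short-hash collision can flip one party's $\error$ status, swap the winner of a vote, or suppress/inject a meeting-point detection, which in turn perturbs $\verif, \error$ by $O(m)$, perturbs $\beta_{AB}$ by at most $m+1$ (since all $\beta_j$ increment together), and may toggle the formula for one or more parties. Multiplying each delta by its coefficient $C_2,\ldots,C_7 = O(m^3)$ and summing across the $m+1$ parties yields an $O(m^4)$ worst-case decrease. The principal obstacle will be in the second sub-case of part (c) and in the formula-transition accounting for part (a): when different parties sit on different sides of \eqref{eq:potential_eq}/\eqref{eq:potential_neq} and the sides exchange during the epoch, one must verify that the chosen gaps $C_4 - 0.9 C_4$, $C_5 - C_4$, and $2C_6 - C_6$ are wide enough to absorb the swap, which is exactly why the constants scale polynomially in~$m$.
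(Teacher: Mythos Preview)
Your proposal follows essentially the same case decomposition as the paper and would succeed, but there are two points worth flagging.

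First, in part (b) you implicitly assume that a consistent epoch always begins with $\verif^*_i=0$, so that after verification $\verif=1$ and actual simulation occurs. This is not forced by the definition of ``consistent'' (which only requires $\disagree_{AB}=0$ and all $\verif_i$ equal); the common value $\verif^*$ may be positive, in which case no one simulates and the gain comes from $C_2\,\Delta\verif_{AB}=(m+1)C_2\ge 1$ rather than from $\Delta\agree$. The paper handles this in its ``all parties do dummy and $\verif_i=\verif_j$'' branch. It is a one-line fix, but your sentence ``The net change to $\Phi$ is exactly $+1$ from $\agree$'' is literally false in that sub-case.

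Second, for part (a) the paper does not bound the damage ``per primitive'' as you sketch; instead it performs a four-way case split on whether the $\verif_i$ are all equal \emph{before} and \emph{after} the computation stage (eq$\to$eq, eq$\to$neq, neq$\to$neq, with neq$\to$eq shown impossible). This is cleaner because the formula switch is exactly what makes the accounting delicate: when \eqref{eq:potential_eq} flips to \eqref{eq:potential_neq}, the $+C_2\verif_{AB}$ term becomes $-0.9C_4\verif_{AB}$ and $-2C_6\beta_{AB}$ becomes $-C_6\beta_{AB}$, so you must know that in the switching case all $\verif^*_i=0$ (hence $\verif_{AB},\beta_{AB}\le m+1$ afterward) to keep the loss $O(m^4)$. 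Your heuristic ``multiply each $O(1)$ delta by its $O(m^3)$ coefficient'' does give the right order, but only after you observe this structural fact about when switching can occur; the paper's organization makes that observation explicit rather than leaving it buried.
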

\begin{proof}
We consider the following two cases:
\begin{enumerate}
\item
First, we consider the case when there exists at least one corruption or short hash collision.
We use the notations with a superscript $*$ to denote the values before the verification stage
and those without a superscript to denote the values after the computation stage.
We also use a $\Delta$ in front of any variable to denote the change of value to this variable.
During the verification stage and the computation stage,
$\alpha_{AB}$ does not change and thus we can ignore $\alpha_{AB}$.
Let $\sset_2$, $\sset_1$ and $\sset_0$ denote the set of parties in $AB$ with $\verif\geq2$, $\verif=1$ and $\verif=0$, respectively,
at the end of the computation stage.

Then, we consider the following scenarios:

\begin{enumerate}
\item $\exists i, j\in AB: \verif^*_i \not= \verif^*_j$ and 
$\forall i,j \in AB: \verif_i = \verif_j$.

Observe that for each party, the variable $\verif$ is incremented by one during the verification stage;
moreover, the variable will be reset to 0 only if $\verif^*_i$ is initially 0.

Hence, it follows that there exist $i \neq j \in AB: \verif_i \neq \verif_j$.
Therefore, this case is impossible.

\item
\label{case:neq2neq}
$\exists i,j\in AB: \verif^*_i \not= \verif^*_j$
and $\exists i,j\in AB: \verif_i \not= \verif_j$.

Then, 
both the potential before the verification stage and after the computation stage can be evaluated by \eqref{eq:potential_neq}. 

For each party $i\in S_0$,
$i$ should do one epoch simulation in the computation stage;
thus, the potential decreases the most
when $\Delta\disagree_i,\Delta\beta_i,\Delta\gamma=1$.
For each party $i\in S_1\cup S_2$,
$i$ does 1 dummy epoch simulation in the computation stage
and thus the potential decreases the most
when $\Delta\verif_i,\Delta\beta_i=1$.

Therefore,
the decrease of the potential is at most
$0.9C_4(|S_1|+|S_2|)+(m+1)C_6+C_3|S_0|+C_7 = O(m^4)$.

\item
\label{case:eq2neq}
$\forall i,j\in AB: \verif^*_i = \verif^*_j$
and $\exists i,j\in AB: \verif_i \not= \verif_j$.

Observe that if the common value $\verif^*_i > 0$,
then it follows that they must have the common value $\verif_i = \verif^*_i  + 1 \geq 2$
at the end of the computation stage.
Hence, we must have $\forall i\in AB: \verif^*_i = 0$,
as well as $\error^*_i = \beta^*_i = 0$.

Moreover, in order to have $\verif_i \not= \verif_j$,
there must be at least one party that does actual epoch simulation
and at least one party that does dummy simulation in the computation stage.
Then, from \eqref{eq:potential_eq} the potential before the verification stage is
$$
\Phi^*=
\agree^* - C_3  \disagree_{AB}^* + C_2  \verif_{AB}^*  - C_5  \error_{AB}^* -2C_6\beta_{AB}^* - C_7\gamma^*
$$
and from \eqref{eq:potential_neq} the potential after the computation stage is
$$
\Phi=
\agree - C_3 \disagree_{AB} - 0.9 C_4  \verif_{AB}  +  C_4  \error_{AB} - C_6\beta_{AB}- C_7\gamma.
$$

Recall that $\sset_1$ and $\sset_0$ denote the set of parties in $AB$ with $\verif=1$ and $\verif=0$, respectively;
in this case, $\sset_2$ is empty.

For each party $i\in S_0$,
$i$ does actual simulation in the computation stage
and thus the potential decreases the most
when $\Delta\disagree_i, \Delta\beta_i,\Delta\gamma=1$.
For each party $i\in S_1$,
$i$ should do dummy epoch communication in the computation stage
and thus the potential decreases the most
when $\Delta\verif_i,\Delta\beta_i=1$.
Therefore,
the decrease of the potential is at most
$0.9C_4|S_1| + (m+1)C_6 + C_3|S_0| + C_7 = O(m^4)$.

\item
$\forall i,j\in AB: \verif^*_i = \verif^*_j$
and $\forall i,j\in AB: \verif_i = \verif_j$.

Then, the parties should either all do one actual epoch simulation
or all do one epoch dummy simulation in the computation stage.
Moreover, both the potential before the verification stage and after the computation stage can be evaluated by \eqref{eq:potential_eq}.

If all parties do one actual epoch simulation,
then we should have $\forall i\in AB: \verif^*_i = \verif_i = \error^*_i = \error_i = \beta^*_i=0$.
The worst case is for all parties~$i$, $\Delta \disagree_i = \Delta \beta_i = \Delta \gamma = 1$.
%Thus, later in the transition stage all parties should reset all counters in Line \ref{line:reseterror} and we can ignore $\beta_{AB}$.
Therefore,
the decrease of the potential is at most
$(m+1)C_3 + (m+1) C_6 + C_7 = O(m^4)$.

If all parties do dummy simulation,
then we should have $\forall i\in AB: \Delta\verif_i = 1$.
Since $C_5<2C_6$ and only at most one of $\error_i$ and $\beta_i$ can increase,
the potential decreases the most
when $\Delta\beta_i=1$.
Therefore,
the decrease of the potential is at most
$(m+1)(2C_6 - C_2) = O(m^4)$.
\end{enumerate}
Therefore, the potential decreases at most $O(m^4)$.

\item
If no corruption or short hash collision occurs,
then the $\disagree_{AB}$, $\alpha_{AB}$, $\beta_{AB}$ and $\gamma$ values do not change.
Moreover, either all parties perform an actual epoch simulation or
all perform a dummy epoch simulation.
Thus, we can ignore the corresponding parts in the potential.
Then, we only need to consider the following three scenarios:
\begin{enumerate}

\item All parties do one actual epoch simulation. 

Then, we must have $\forall i\in AB: \verif^*_i = \verif_i =  \error^*_i = \error_i = 0$.
Moreover, since there is no corruption or hash collision,
it means $\disagree_{AB}=0$.

Hence, $\agree $ increases by one during the computation stage, which implies the potential increases by one.
Note that the epoch is consistent in this scenario.

\item All parties perform one dummy epoch simulation and $\forall i,j\in AB: \verif_i = \verif_j$.

Observe that for all parties~$i$, $\verif_i = \verif^*_i + 1$,
and $\error_{AB}$ and $\agree $ do not change.
Thus, the potential in \eqref{eq:potential_eq} increases by $(m+1)C_2\geq 2$ for $C_2\geq \frac{2}{m+1}$.
Note that the epoch can be either consistent or inconsistent in this scenario.

\item All parties do one dummy epoch communications and $\exists i,j\in AB: \verif_i \not= \verif_{j}$.

Then, $\Delta \verif_i = \Delta \error_i = 1$ for each $i\in AB$, while $\agree $ does not change.
Thus, the potential in \eqref{eq:potential_neq} increases $(m+1)(-0.9C_4+C_4)\geq 2$ for $C_4\geq \frac{20}{m+1}$.
Note that the epoch is inconsistent in this scenario.
\end{enumerate}
Therefore, the potential increases as required with the $C_2$ and $C_4$ listed in Table \ref{tab:Constant}.
\end{enumerate}
After considering the above two cases, we complete the proof for the statement.
\end{proof}

\noindent \textbf{Transition Stage.}
We define some notions relevant to the transition stage. 
We call a transition due to Line~\ref{line:transitionerror} an \textbf{error transition}, a transition due to Line~\ref{line:transitionrollbackone} or Line~\ref{line:transitionrollbacktwo} a \textbf{meeting point transition}
and any of the remaining transitions a \textbf{normal transition}.

For a party~$i \in AB$ at position $\Point_i$, its set of meeting points
is defined to be
 
$\MP_i := \{ 2^j \cdot \floor{\frac{\Point_i}{2^j}} - 2^j \geq 0: j \geq 0 \} \cup \{\Point_i\}$.

Recall that because of rolling back,
a party~$i$ might not have saved all its meeting points in $\MP_i$,
in which case a meeting point is called \emph{missing} or \emph{unavailable}.
An index $p$ is a \textbf{true} meeting point if $p \in \cap_{i \in AB} \MP_i$
and $p \leq \agree$;
otherwise, it is a \textbf{false} meeting point.

%
%
 %as the maximal common meeting point $\Point$ among Alice and Bobs such that $\Point\leq \agree$ in the current epoch, otherwise it is a \textbf{false meeting point}.\
Before we analyze what happens to the potential function during the transition stage,
we establish some new properties, some of which are relevant to multiple number of Bobs.

\begin{lemma}[Beginning of Transition Stage]\label{lem:ELessThanK}
At the beginning of the transition stage,  a party must have {\normalfont$\error\leq 0.5(\verif+1)$}.
\end{lemma}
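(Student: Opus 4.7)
The plan is to prove this invariant by a short induction on the number of epochs since the most recent time $\verif$ was reset to $0$. The key preliminary observation is purely about the code: every line that assigns $\verif \gets 0$ (namely Lines~\ref{line:reseterror}, \ref{line:resetrollbackone}, \ref{line:resetrollbacktwo}, together with initialization) also assigns $\error \gets 0$ in the same line, so $\verif$ and $\error$ are always reset in lockstep. Between two consecutive resets, in each epoch the verification stage increments $\verif$ by exactly one (Line~\ref{line:veri_begin}) and increments $\error$ by at most one (only at Line~\ref{algline:ComputeAfterHash}); the computation stage touches neither variable. Therefore, if $t$ epochs have completed their verification stage since the last reset, then the value of $\verif$ at the start of the current transition stage is exactly $t$, and $\error \in [0,t]$.

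I would then run the induction on~$t$. For the base case $t=1$, we have $\verif=1$ and $\error\in\{0,1\}$, so $\error \le 1 = 0.5(\verif+1)$. For the inductive step, suppose the invariant holds at the start of the transition stage of the previous epoch (say $\verif=t$, $\error=e$ with $e\le 0.5(t+1)$), and suppose no reset fired during that transition stage — otherwise we would be in the base case for the current epoch. In particular, the error-transition guard at Line~\ref{line:transitionerror} did not trigger, giving $2e < t$. Since both are integers, this sharpens to $e \le 0.5(t-1)$. In the following epoch the verification stage increments $\verif$ to $t+1$ and increases $\error$ by at most $1$, so the new value satisfies $\error \le 0.5(t-1)+1 = 0.5(t+1) = 0.5\verif_{\text{new}} \le 0.5(\verif_{\text{new}}+1)$, closing the induction.

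The argument is essentially bookkeeping, and the only subtle point — the place where one must be careful — is in justifying the strict inequality $2e<t$ at the end of the previous transition stage. One has to confirm that all three non-normal branches (the error transition and the two meeting-point transitions) reset $\verif$ and therefore cannot have fired; only the normal branch (Line~\ref{line:voteresetdouble}) preserves $\verif$ and $\error$, and that branch is reached only after the error-transition guard fails. Once this is spelled out, the integrality conversion from $2\error < \verif$ to $\error \le 0.5(\verif-1)$ is what provides exactly the slack needed to absorb the at-most-one increment to $\error$ in the next verification stage, which is why the bound in the statement has the additive $+1$.
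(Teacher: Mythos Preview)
Your approach mirrors the paper's: both rely on the observation that at the end of the previous epoch either a reset just zeroed $\verif$ and $\error$ together, or the error-transition guard at Line~\ref{line:transitionerror} failed and hence $2\error'<\verif'$. The paper phrases this as a direct one-step case analysis rather than an induction on the number of epochs since the last reset, but the substance is the same.

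There is, however, a small but real gap. Your claim that ``the computation stage touches neither variable'' is false: when the guard at Line~\ref{line:comp_begin} succeeds, Line~\ref{line:resetcomputation} (and Line~\ref{line:Aliresetcomputation} for Alice) resets both $\verif$ and $\error$ to $0$. You omitted this line from your enumeration of reset locations. As a consequence your base case is incomplete: if actual simulation happens in the current epoch, then at the start of the transition stage $\verif=\error=0$, not $\verif=1$ as you assert. The fix is trivial --- this case immediately gives $\error=0\le 0.5=0.5(\verif+1)$ --- and the paper handles it explicitly as its Case~1. Your inductive step is unaffected by this oversight, since there the post-verification value is $\verif=t+1\ge 2$, which fails the computation-stage guard $\verif=1$, so the computation stage genuinely cannot touch the counters in that branch.
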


\begin{proof}
We consider the following two cases:
\begin{enumerate}
\item If the party do one epoch communication right before the transition stage, then both $\verif$ and $\error$ are reset to 0
and we have $\error\leq 0.5(\verif+1)$.

\item 
Otherwise, by Algorithms \ref{alg:ComputeOblivious} and \ref{alg:AliComputeOblivious}, at the end of each epoch, we have $2\error'<\verif'$ or $\error'=\verif'=0$.
In the next epoch before the transition stage,
$\verif'$ will increase by one and $\error'$ will increase by one or stay the same.

If $2\error'<\verif'$, then we have 
$2(\error-1)\leq2\error'<\verif'=\verif-1$, i.e. $\error< 0.5(\verif+1)$ at the beginning of the transition stage.

If $\error'=\verif'=0$, then we have 
$\verif=1$ and 
$\error=0$ or 1 at the beginning of the transition stage.
Thus, we have $\error\leq 0.5(\verif+1)$ again. 
\end{enumerate}
Therefore, we complete the proof for the statement.
\end{proof}

\begin{lemma}[Meeting Point Transitions]\label{lem:changeOfAgree}
For any non-empty subset $\sset\subseteq AB$,
suppose only the parties in $\sset$ do meeting point transitions. 
Then, during the transition stage, 
the decrease of $\agree$ is at most
$2\max_{i\in\sset}\verif_{i}-1$
and the increase of $\disagree_{AB}$ is at most
$m(2\max_{i\in\sset}\verif_{i}-1)$.
\end{lemma}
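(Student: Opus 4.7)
The plan is to first bound how much any single party in $\sset$ moves its pointer $\Point_i$ backward, then propagate these bounds to $\agree$ and $\disagree_{AB}$ separately. For party~$i \in \sset$ with verification counter $\verif_i$, a meeting point transition sends $\Point_i$ to either $\MPone_{\Point_i}(\verif_i)$ or $\MPtwo_{\Point_i}(\verif_i)$, which differ from $\Point_i$ by at most $\verif_i - 1$ or $2\verif_i - 1$ respectively. Hence, writing $\Point_i'$ for the new value, we always have $\Point_i - \Point_i' \leq 2\verif_i - 1 \leq 2\max_{i \in \sset}\verif_i - 1$, and $\Point_j' = \Point_j$ for $j \notin \sset$.

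For the decrease of $\agree$, the key observation is that rolling back does not alter the stored prefix of the transcript at positions $\leq \Point_i'$; therefore every transcript disagreement that was present below position $\min_{i \in AB} \Point_i'$ is still present, and no new disagreement is created below that position. Consequently the new agreement parameter satisfies $\agree' \geq \min\bigl(\agree,\ \min_{i \in AB}\Point_i'\bigr)$. Since $\Point_j' = \Point_j \geq \agree$ for $j \notin \sset$, it suffices to focus on $\sset$: for each $i \in \sset$, $\Point_i' \geq (\agree + \disagree_i) - (2\verif_i - 1) \geq \agree - (2\verif_i - 1)$, so $\agree - \agree' \leq 2\max_{i \in \sset}\verif_i - 1$, giving the first bound.

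For $\disagree_{AB}$, we write $\disagree_{AB} = \sum_{j \in AB} \Point_j - (m+1)\agree$ (recall $|AB| = m+1$), so
\[
\disagree_{AB}' - \disagree_{AB} \;=\; (m+1)(\agree - \agree') \;-\; \sum_{i \in \sset} (\Point_i - \Point_i').
\]
If $\agree' = \agree$ the right-hand side is non-positive, so assume $\Delta_\agree := \agree - \agree' > 0$. Since $\agree^{*}$ (the maximal matching length, ignoring the $\min \Point_j$ constraint) cannot decrease when bits are deleted from transcripts, the drop must be caused by the $\min_j \Point_j$ constraint: there exists $i^* \in AB$ with $\Point_{i^*}' \leq \agree'$. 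Because $\Point_j' \geq \agree$ for $j \notin \sset$, this witness $i^*$ must lie in $\sset$, and it contributes $\Point_{i^*} - \Point_{i^*}' \geq \agree - \agree' = \Delta_\agree$ to the sum $\sum_{i \in \sset}(\Point_i - \Point_i')$. Therefore
\[
\disagree_{AB}' - \disagree_{AB} \;\leq\; (m+1)\Delta_\agree - \Delta_\agree \;=\; m\Delta_\agree \;\leq\; m(2\max_{i \in \sset}\verif_i - 1),
\]
which is the second bound. The only step requiring some care is identifying the witness~$i^*$ and verifying that it lies in $\sset$ (so that the $-\sum_{i \in \sset}(\Point_i - \Point_i')$ term actually cancels one factor of $\Delta_\agree$ against the $(m+1)\Delta_\agree$ term); this is the main obstacle but follows directly from the invariant $\agree \leq \min_j \Point_j$ prior to the transition.
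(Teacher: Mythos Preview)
Your proof is correct and follows essentially the same structure as the paper's: bound each party's rollback by $2\verif_i - 1$, deduce the bound on $\Delta\agree$, and then derive the bound on $\Delta\disagree_{AB}$. The paper's own proof is much terser---it simply asserts that ``the increase of $\disagree_{AB}$ can only be at most $m$ times the decrease of $\agree$'' without justification---whereas you actually prove this inequality via the identity $\disagree_{AB} = \sum_{j}\Point_j - (m+1)\agree$ and the witness argument identifying some $i^*\in\sset$ with $\Point_{i^*} - \Point_{i^*}' \geq \Delta_\agree$, which is a genuine improvement in rigor over the paper's presentation.
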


\begin{proof}
Observe that the increase of $\disagree_{AB}$ can only be at most $m$ times the decrease of $\agree$ when rollback happens.
Thus, it suffices to find the upper bound of the decrease of $\agree$.

Since only the parties in $\sset$ do meeting point transitions,
each party $i \in \sset$ can roll back at most $2\verif_i-1$.
Thus, $\agree$ can only decrease by at most $2\max_{i\in\sset}\verif_{i}-1$,
which happens when $\agree$ equals the $\Point$ of the party in $\sset$ with maximal $\verif$.
In addition, the increase of $\disagree_{AB}$ is at most
$m(2\max_{i\in\sset}\verif_{i}-1)$.
%Since $2\max_{i\in\sset}\{\verif_{i}\}-1\leq 2\verif_{AB}-1$,
%the decrease of $\agree$ is also at most $2\verif_{AB}-1$.
%If $\verif_A=\verif_B=\verif$, then the decrease of $\agree$ is at most $2\verif-1$.
\end{proof}

\begin{lemma}
[True Meeting Point]
\label{lem:haveATrueMP}
Suppose $\verif=2^u$ for some non-negative integer $u$.
If $\forall i\in AB: \verif_i=\verif> \disagree_{AB}$,
then there is a true meeting point that is a multiple of $\verif$.
\end{lemma}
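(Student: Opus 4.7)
The plan is to exhibit $q\verif$, with $q := \lfloor \agree/\verif\rfloor$, as the desired true meeting point. By construction it is a nonnegative multiple of $\verif$ and satisfies $q\verif \leq \agree$, so it remains to verify that $q\verif \in \MP_i$ for every party $i \in AB$.

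First I would localize each $\Point_i$ in a narrow window. Since $\disagree_i \leq \disagree_{AB} < \verif$ and $\Point_i = \agree + \disagree_i$, we have $\agree \leq \Point_i < \agree + \verif$. Combined with $q\verif \leq \agree < (q+1)\verif$, this yields $q\verif \leq \Point_i < (q+2)\verif$, so $\lfloor \Point_i/\verif\rfloor \in \{q, q+1\}$. This matches $q\verif$ to either $\MPone_i$ or $\MPtwo_i$ at the current scale.

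The remainder is a case split. In the case $\lfloor \Point_i/\verif\rfloor = q+1$, plugging $j = u$ into the definition of $\MP_i$ gives $2^u\lfloor \Point_i/2^u\rfloor - 2^u = q\verif$, and since $\Point_i \geq \verif$ this value is nonnegative, so $q\verif \in \MP_i$. In the case $\lfloor \Point_i/\verif\rfloor = q$, write $\Point_i = q\verif + \Delta$ with $0 \leq \Delta < \verif$. If $\Delta = 0$, then $q\verif = \Point_i \in \MP_i$ directly. If $\Delta > 0$, taking $j = \lfloor \log_2 \Delta \rfloor \leq u-1$ gives $\lfloor \Delta/2^j\rfloor = 1$, and a short computation produces $2^j\lfloor \Point_i/2^j\rfloor - 2^j = q \cdot 2^u + 0 = q\verif \geq 0$, placing $q\verif$ in $\MP_i$ through a strictly smaller scale. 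This second subcase is essentially the same calculation that underlies Fact~\ref{fact:MPone}.

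The main subtlety to watch is the boundary case $q = 0$ with $0 < \Point_i < \verif$: here $\MPone_i = 0$, but the scale-$\verif$ instance of the formula gives $-\verif < 0$, which is excluded from $\MP_i$. The $\Delta > 0$ argument above still resolves it, now with $j = \lfloor \log_2 \Point_i \rfloor$, witnessing $0 \in \MP_i$ via a smaller power of two; this is why the case split should be phrased around $\lfloor \Point_i/\verif\rfloor$ before any attempt to read off $q\verif$ from the scale-$\verif$ formula. Once these checks hold uniformly over $i \in AB$, we obtain $q\verif \in \bigcap_{i \in AB} \MP_i$, and together with $q\verif \leq \agree$ this completes the lemma.
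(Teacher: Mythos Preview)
Your proof is correct and follows essentially the same approach as the paper: both identify $q\verif$ (the paper writes $w\verif$) as the desired true meeting point by localizing every $\Point_i$ in $[q\verif,(q+2)\verif)$ and then splitting on whether $\lfloor\Point_i/\verif\rfloor$ equals $q$ or $q+1$. The only difference is that you make the $\lfloor\Point_i/\verif\rfloor=q$ case fully explicit by exhibiting the smaller scale $2^j$ with $j=\lfloor\log_2\Delta\rfloor$, whereas the paper simply notes that $w\verif$ is $\MPone$ for those parties and implicitly relies on the earlier observation (just before Fact~\ref{fact:MPone}) that any $\MPone$ is either $\Point$ itself or an $\MPtwo$ at a smaller scale.
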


\ignore{
\begin{figure}[h]
    \centering
    \includegraphics[width=2.5in]{mp_case.pdf}
    \caption{Example.}
    \label{fig:mp_case}
\end{figure}
}

\begin{proof}
Recall that  $\Point_i$ is the current position of party $i\in AB$.
Let $w$ be a non-negative integer such that $w\verif \leq\agree <(w+1)\verif$.
Then, we have $\agree +\disagree_{AB} <(w+2)\verif$,
and $\forall i\in AB: w\verif \leq \Point_i <(w+2)\verif$.
Let $\sset_1$ denote the set of parties $i$
such that $w\verif \leq \Point_i <(w+1)\verif$
and $\sset_2$ denote the set of parties $i$
such that $(w+1)\verif \leq \Point_i <(w+2)\verif$.
Then, for each party $i\in \sset_1$,
the party $i$ should have $\MPone=w\verif$
and $\MPtwo=(w-1)\verif$.
For each party $i\in \sset_2$,
the party $i$ should have $\MPone=(w+1)\verif$
and $\MPtwo=w\verif$.
Hence, all parties should have a true meeting point $w\verif$.
\end{proof}

\begin{lemma}
[Removing Meeting Points]
\label{lem:missMP}
Suppose $p=(2w+1)2^u$ for some non-negative integers $w$ and $u$. Then, we have:
\begin{enumerate}
\item For any integer $p'\in(p,p+2^{u+1})$, there exists an integer $v\in[0,u]$ such that $p = 2^v  \floor{\frac{p'}{2^v }}-2^v$.
\item For any integer $p'\geq p+2^{u+1}$, there exists no integer $v\geq0$ such that $p = 2^v  \floor{\frac{p'}{2^v }}-2^v$.
\end{enumerate}
In other words, to remove meeting point $p=(2w+1)2^u$, a party should reach $\Point=p+2^{u+1}$.
\end{lemma}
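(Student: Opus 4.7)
}
The key observation is that the condition $p = 2^{v}\lfloor p'/2^{v}\rfloor - 2^{v}$ can be rewritten as $2^{v}\lfloor p'/2^{v}\rfloor = p + 2^{v}$. Since the left-hand side is automatically a multiple of $2^{v}$, this forces two things: (i) $2^{v}$ must divide $p$, so that $2^{v} \mid (p+2^{v})$; and (ii) $p'$ must lie in the half-open interval $[\,p+2^{v},\, p+2^{v+1}\,)$, because $\lfloor p'/2^{v}\rfloor = (p+2^{v})/2^{v}$ holds exactly when $(p+2^{v})/2^{v} \leq p'/2^{v} < (p+2^{v})/2^{v}+1$. The plan is to use this reformulation together with the fact that $p=(2w+1)2^{u}$ has $2$-adic valuation exactly $u$.

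For the first claim, I would set $d := p' - p \in (0, 2^{u+1})$ and take $v := \lfloor \log_{2} d \rfloor$, the unique non-negative integer with $2^{v} \leq d < 2^{v+1}$. Since $d < 2^{u+1}$, we get $v \leq u$, hence $2^{v}\mid 2^{u}\mid p$, so condition (i) is satisfied. Condition (ii) follows immediately from the definition of $v$: $p+2^{v} \leq p+d = p' < p+2^{v+1}$. This produces the required $v\in[0,u]$.

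For the second claim, I would split on whether $v \leq u$ or $v > u$. If $v \leq u$, then (ii) requires $p' < p+2^{v+1} \leq p+2^{u+1}$, which contradicts the hypothesis $p' \geq p+2^{u+1}$. If $v > u$, then (i) requires $2^{v}\mid p=(2w+1)2^{u}$; but the $2$-adic valuation of $p$ is exactly $u < v$, so this is impossible. Either way no $v \geq 0$ can satisfy the defining equation, establishing the second claim.

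The argument is essentially a bookkeeping exercise with powers of two, so there is no deep obstacle; the only subtle point is remembering that the hypothesis $p = (2w+1)2^{u}$ pins down the $2$-adic valuation of $p$, which is what rules out large $v$ in the second claim. The final sentence of the lemma statement (``to remove meeting point $p$, a party should reach $\Point=p+2^{u+1}$'') then follows by combining the two claims with the cleanup rule on Line \ref{line:removepoints} of Algorithm \ref{alg:ComputeOblivious}: as long as the current $\Point$ lies in $(p,p+2^{u+1})$, some valid $v$ witnesses that $p$ must be retained, whereas once $\Point$ reaches $p+2^{u+1}$ no such $v$ exists and $p$ is dropped.
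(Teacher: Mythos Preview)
Your proof is correct and follows essentially the same approach as the paper's: both set $\Delta p = p'-p$, choose $v=\lfloor\log_2\Delta p\rfloor$ for Case~1, and handle Case~2 by splitting on $v\le u$ versus $v>u$, the latter dispatched via the $2$-adic valuation of $p$. Your explicit reformulation into conditions (i) and (ii) is slightly cleaner than the paper's more computational presentation, but the underlying argument is the same.
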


\begin{proof}
Let $\Delta p=p'-p$.
For Case 1,
considering $v\in[0,u]$,
we have
$2^v  \floor{\frac{p'}{2^v }}-2^v
=p + 2^v \floor{\frac{\Delta p}{2^v }} -2^v$.
Then, since $1\leq\Delta p\leq 2^{u+1}-1$,
there always exists an integer $v\in[0,u]$
such that $\floor{\frac{\Delta p}{2^v }}=1$
and thus $2^v  \floor{\frac{p'}{2^v }}-2^v =p$.

For Case 2,
if $v\in[0,u]$,
then we have
$2^v  \floor{\frac{p'}{2^v }}-2^v
=p + 2^v \floor{\frac{\Delta p}{2^v }} -2^v
\geq p + 2^v \floor{\frac{2^{u+1}}{2^v }} -2^v
=p + 2^{u+1} -2^v>p
$.
Thus, 
we should consider $v\geq u+1$.
Suppose there exists an integer
$v\geq u+1$ such that $p = 2^v  \floor{\frac{p'}{2^v }}-2^v$.
Then, we should have $p=w'2^v=(2w+1)2^u$
for a positive integer $w'$,
which is a contradiction
since $2w+1$ is not even.
Therefore, there exists no integer $v\geq0$ such that $p = 2^v  \floor{\frac{p'}{2^v }}-2^v$.
\end{proof}

The following lemma gives an explicit analysis
of what happens when some meeting point is unavailable.

\begin{lemma}
[Implication of Missing True Meeting Points]
\label{lem:valuesForMissingMP2}
Suppose $\verif=2^u$ for some non-negative integer $u$, and
$\forall i\in AB: \verif_i = \verif> \disagree_{AB}$.
Moreover, suppose any true meeting point
that is a multiple of $\verif$ (as promised in Lemma~\ref{lem:haveATrueMP})
is not saved by at least one party, i.e.,
for that party, $\HD[\MPtwo] = \bot$.

Then, we have $\alpha_{AB} \geq 0.2\verif$ or $\gamma\geq0.5\verif$.
\end{lemma}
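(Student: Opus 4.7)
I plan to trace the history of the party $i \in AB$ whose dictionary is missing the true meeting point, and use the structure of meeting-point addition/removal together with the dynamics of $\alpha$, $\beta$, $\gamma$ to force the desired dichotomy. First I would pin down party $i$'s configuration. Since $\MPone_i$ is always saved (Fact~\ref{fact:MPone}) and the missing value must equal the true meeting point $w\verif$ supplied by Lemma~\ref{lem:haveATrueMP}, the only possibility is $\HD_i[\MPtwo_i] = \bot$ with $\MPtwo_i = w\verif$, which forces $\Point_i \in [(w+1)\verif,(w+2)\verif)$. Writing $w\verif = (2w'+1)2^{u'}$ with $u' \geq u$, Lemma~\ref{lem:missMP} says that this meeting point was removed at the first moment $T_0$ when party $i$'s position satisfied $\Point_i \geq w\verif + 2^{u'+1}$, and has not been re-added since (which would have required $\Point_i$ to first drop below $w\verif$ and then be re-reached by a computation epoch).

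Next I would analyze what happened between $T_0$ and now. At $T_0$, $\Point_i \geq w\verif + 2^{u'+1} \geq (w+2)\verif$, while currently $\Point_i < (w+2)\verif$; hence meeting-point transitions of party $i$ occurred in this interval. Each such transition at some scale $2^v$ dumps $0.5\beta_i$ into $\alpha_i$ before resetting $\beta_i$; moreover $\alpha_i$ is reset to $0$ only at a transition after which $\disagree_{AB} = 0$. The key dichotomy is whether such a full reset of $\alpha_i$ ever occurs in the interval $[T_0,\text{now}]$.

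In the sub-case where $\alpha_i$ is never reset in this interval, every half-dump persists. I would argue that to traverse the range $[p, p + 2^{u'+1})$ downward by meeting-point transitions while the true meeting point $w\verif$ remains missing, the sequence of transitions must rely on votes coming from hash collisions or corruptions of meeting-point hashes; charging one unit of $\beta_i$ per such bad vote and using that any single-scale transition requires at least $0.4\cdot 2^v$ votes out of $0.5\cdot 2^v$ possible ones at scale $2^v$, I expect to conclude $\alpha_{AB} \geq \alpha_i \geq 0.2\verif$. In the complementary sub-case, there is a moment $T^* \in [T_0,\text{now}]$ at which a meeting-point transition left $\disagree_{AB}=0$, so every party's position equaled $\agree^{(T^*)}$; but currently $\Point_i \geq (w+1)\verif > \agree^{(\text{now})}$, so after $T^*$ party $i$ performed enough net computation epochs to advance $\Point_i$ past $\agree$. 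Each such advance is a corrupted computation, incrementing $\gamma$; matching the number of these increments against the only permitted reduction of $\gamma$ (at most $0.25\verif$ per genuine group transition with $\alpha_{AB}+\beta_{AB}<0.1\verif$ and $\disagree_{AB}=0$ after) should yield $\gamma \geq 0.5\verif$.

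The main obstacle I anticipate is the quantitative bookkeeping in the two sub-cases. The rollback sequence can comprise transitions at multiple scales and be interleaved with forward computation, and matching the precise constants $0.2$ and $0.5$ against the $0.4$ vote threshold and the $0.25\verif$ allowed decrease of $\gamma$ in Algorithms~\ref{alg:ComputeOblivious} and~\ref{alg:AliComputeOblivious} requires simultaneously tracking, for each transition in $[T_0,\text{now}]$, its scale $2^v$, the accumulated $\beta_i$ at the moment of the transition, whether the transition was truthful or driven by bad votes, and (in the second sub-case) distinguishing computation epochs that genuinely extend $\agree$ from those that only enlarge $\disagree$.
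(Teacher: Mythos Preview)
Your high-level dichotomy (either enough bad votes have been banked in $\alpha$, or enough corrupted computations have been banked in $\gamma$) is the right one, but the way you split the cases and the party you track in the first sub-case do not match the paper and, more importantly, leave a real gap.

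In your ``no reset'' sub-case you assert that party~$i$'s rollback transitions from $\Point_i \geq (w+2)\verif$ down into $[(w+1)\verif,(w+2)\verif)$ ``must rely on votes coming from hash collisions or corruptions.'' This is not justified: party~$i$ can perfectly well roll back to its $\MPone$ at some scale (which is always saved, and may genuinely match the other side's transcript) without a single bad vote being cast. So $\beta_i$, and hence $\alpha_i$, need not grow at all along $i$'s own rollback path. The paper does \emph{not} extract the $\alpha_{AB}\geq 0.2\verif$ bound from party~$i$'s rollbacks. Instead, it pivots to $\agree$: it fixes $t_x$ as the last time the missing-MP party rolled back from $\geq p_0$ into $[p_2,p_0)$, and $t_D$ as the last moment with $\disagree_{AB}=0$ before $t_x$, with $p_D=\agree$ at $t_D$. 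When $p_D$ is high (in $[p_1,p_0)$), some \emph{other} party~$j$ must have rolled back so as to drag $\agree$ down below $p_2$; the only destination available to $j$ in the relevant range is $p_4=w\verif$, which is missing for party~$x$, so $j$'s $0.4\verif$ votes for that transition are all bad, yielding $\beta_j\geq 0.4\verif$ and hence (after the half-dump) $\alpha_{AB}\geq 0.2\verif$. Your proposal never identifies this second party~$j$, and without it the $\alpha$ branch does not close.

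Your second sub-case is closer in spirit to the paper's Case~4 ($p_D<p_1$), but note that the paper's criterion is the \emph{position} $p_D$ of the last global agreement, not whether $\alpha_i$ was reset; this is what makes the $\gamma\geq 0.5\verif$ count clean (party~$x$ must climb at least $p_0-p_1=\verif/2$ while $\disagree_{AB}>0$). The paper also needs a short recursive argument to show that this banked $\gamma$ is not later consumed by a $0.25\verif'$ decrease: any such decrease is itself an instance of the lemma at a smaller scale, and under the hypothesis $\alpha_{AB}+\beta_{AB}<0.1\verif'$ one is forced back into Case~4, producing a fresh disjoint $\gamma$-deposit of $\geq 0.25\verif'$ that pays for the withdrawal. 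You allude to the consumption issue but do not propose this recursion.
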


\begin{figure}[h]
\centering
\includegraphics[width=3.5in]{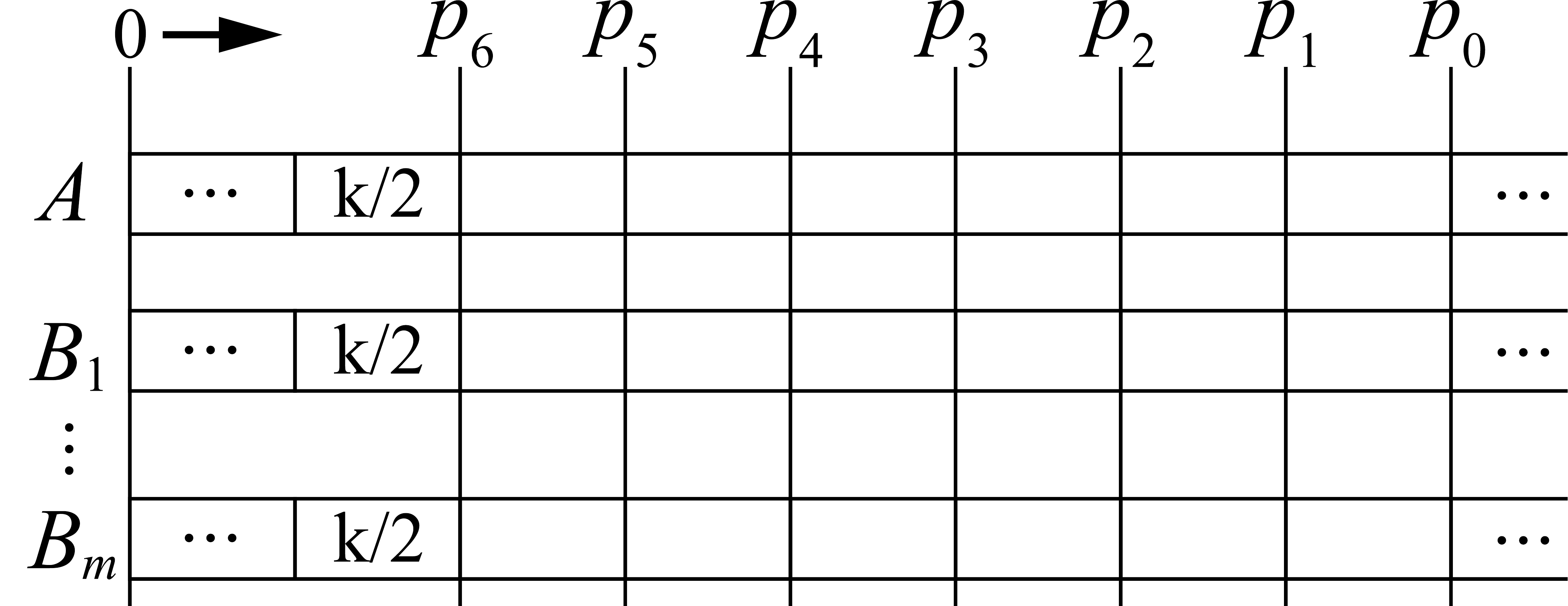}
\caption{Position illustration.}
\label{fig:missingMP}
\end{figure}

\begin{proof}
Recall that  $\Point_i$ is the current position of party $i\in AB$.
Let $w$ be a non-negative integer such that $w\verif \leq\agree <(w+1)\verif$.
According to the proof of Lemma \ref{lem:haveATrueMP},
let $\sset_2$ denote the set of parties $i$
such that $(w+1)\verif \leq \Point_i <(w+2)\verif$.
Then, there exists at least one party in $\sset_2$
whose $\MPtwo$ at scale~$\verif$ (which is $w \verif$) is missing.
Since $w \verif$ is missing, by Fact~\ref{fact:MPone},
$w \verif$ cannot be meeting point one of any other scale,
which implies that $w$ is odd; this allows us to use Lemma~\ref{lem:missMP} later.

Let $\sset_m\subseteq \sset_2$
denote the set of parties whose $\MPtwo$ is missing.
In Figure~\ref{fig:missingMP},
let $p_4$ be $w\verif$ and each block has length $\frac{\verif}{2}$.
Then, we have $\agree \in[p_4,p_2)$.
For each party $j\in \sset_m$,
we have $\Point_j\in[p_2,p_0)$, $\MPone=p_2$ and $\MPtwo=p_4$ is missing.

Let $t$ be the epoch when $\forall i\in AB: \verif_i = \verif> \disagree_{AB}$ but there is no true meeting point
that is a multiple of $\verif$ and is saved by all parties.
By Lemma \ref{lem:missMP},
each party $j\in \sset_m$ should reach at least $p_0$ before epoch $t$ to remove $p_4$.
For each party $j\in \sset_m$, let $t_j$ be the last epoch when $j$ rolls back from a position at least $p_0$ to $[p_2,p_0)$ before epoch $t$.
Suppose party $x$ is one of the parties with minimum $t_x$ in $\sset_m$.
Let $t_D$ be the last epoch when $D_{AB}=0$ before epoch $t_x$
and $p_D$ be the $\agree$ value at epoch $t_D$.

\begin{claim}
\label{clm:onlyInP2P0}
For each party $j\in \sset_m$, the party $j$ has no meeting point in $[p_4,p_2)$ during the epoch period $[t_j,t]$.
Moreover, the party $j$ can only be in $[p_2,p_0)$ during the epoch period $(t_j,t]$.
\end{claim}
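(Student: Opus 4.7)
The plan is to prove both parts of the claim jointly by induction on the epochs in $[t_j,t]$, maintaining two invariants: \textbf{(I1)} party $j$'s hash dictionary $\HD$ stores no entry with index in $[p_4,p_2)$; and \textbf{(I2)} $\Point_j\in[p_2,p_0)$ (applying to the state after the transition stage during $(t_j,t]$). Invariant (I1) immediately gives Part~1, and (I2) gives Part~2.

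For the base case at $t_j^+$: I would first argue that the rollback at $t_j$ lands at some position in $[p_2,p_0)$. This uses the observation that $w$ is odd---which follows because $p_4=w\verif$ is missing and hence by Fact~\ref{fact:MPone} cannot be the $\MPone$ of any scale---and an enumeration of possible $\MPone/\MPtwo$ values from $\Point'\geq p_0$ at the various scales. For invariant (I1), I analyze what remains in $\HD$ after the rollback: at $\Point'\geq p_0$ the entries in $\HD$ are precisely the indices $p$ satisfying $p=2^{k}\floor{\Point'/2^{k}}-2^{k}$ for some $k\geq 0$, and I verify via Lemma~\ref{lem:missMP} that no such $p$ lies in $(p_4,p_2)$: writing any candidate $p'=(2w'+1)2^{u'}$, the bound $p'-p_4\in[1,\verif-1]$ forces $u'<u$, so $p'+2^{u'+1}\leq p_0$, meaning $p'$ was already removed when party $j$'s position earlier crossed $p_0$. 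Combined with $p_4$ missing by $j\in\sset_m$, no index in $[p_4,p_2)$ is saved at $t_j^+$.

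For the inductive step at an epoch $t'\in(t_j,t]$: in the computation stage, $\Point_j$ either stays or increments by one, and it cannot reach $p_0$, since a later rollback from $\geq p_0$ would then be needed to return to $[p_2,p_0)$ at epoch $t$, contradicting the maximality of $t_j$; any newly added meeting point is at the current $\Point_j\geq p_2$, preserving (I1). In the transition stage, rollbacks happen at scales $\verif'=2^{k}$ which are powers of two. Because $\verif_j$ resets to $0$ after each rollback, error transition, or actual simulation and increments by one per epoch, while $\verif_j=2^{u}$ at epoch $t$, no rollback within $(t_j,t]$ can occur at a scale exceeding $\verif=2^{u}$. For $\verif'\leq\verif$, $\MPone$ is a multiple of $\verif'$ in $[p_2,p_0)$ (since $\verif'\mid p_2$ and $\MPone\leq\Point_j<p_0$), so a rollback to $\MPone$ preserves (I2); a landing at $\MPtwo<p_2$ would require $\MPtwo\in[p_4,p_2)$ to be saved, contradicting (I1) by the inductive hypothesis, so such a rollback is impossible. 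Hence both invariants persist.

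The main obstacle is the transition-stage analysis, specifically the simultaneous handling of (i) a timing argument that rules out rollback scales larger than $\verif$ in $(t_j,t]$, and (ii) the inductive coupling between (I1) and (I2), which is needed to forbid rollbacks to $\MPtwo\in[p_4,p_2)$. Careful bookkeeping of $\verif_j$'s evolution between consecutive resets, combined with Lemma~\ref{lem:missMP} and Fact~\ref{fact:MPone}, should resolve these subtleties.
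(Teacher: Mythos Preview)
Your proposal has a genuine gap in the transition-stage analysis. The timing argument you sketch—``since $\verif_j=2^{u}$ at epoch $t$ and $\verif_j$ resets after each rollback/error transition/simulation, no rollback within $(t_j,t]$ can occur at a scale exceeding $2^{u}$''—does not follow. Nothing prevents $\verif_j$ from growing past $2^{u}$, triggering a rollback at a larger scale, resetting to $0$, and then growing back to $2^{u}$ by epoch~$t$; the interval $(t_j,t]$ can easily be long enough to accommodate this. Concretely, at scale $\verif'=2\verif=2^{u+1}$ with $\Point_j\in[p_2,p_0)$ and $w$ odd, one computes $\MPone=p_2$ and $\MPtwo=(w-1)\verif=p_6$. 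Since $p_6<p_4$, your invariant (I1) does not forbid $p_6$ from being saved, so a rollback to $\MPtwo=p_6$ is not excluded by your inductive hypothesis, and (I2) would break.

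The paper's argument sidesteps this by using a stronger fact: right after $t_j$, party $j$ has no saved meeting point in all of $(p_6,p_2)$, not just $[p_4,p_2)$. Your own base-case reasoning via Lemma~\ref{lem:missMP} extends verbatim to $(p_6,p_4)$: any $p'$ there has $u'<u$, so $p'+2^{u'+1}<p_0$ and $p'$ was removed when $j$ was at $\geq p_0$. With this stronger property, any rollback landing below $p_2$ (at \emph{any} scale) must land at $\leq p_6$. But then, to return to $[p_2,p_0)$ by epoch $t$, party $j$ must go forward through $p_4$, re-saving it; removing $p_4$ again requires reaching $p_0$ (Lemma~\ref{lem:missMP}), which yields a rollback from $\geq p_0$ after $t_j$, contradicting the maximality of $t_j$. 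This closes the gap with no timing argument at all. You can either adopt this contradiction directly, or repair your induction by strengthening (I1) to ``no meeting points in $(p_6,p_2)$'' and handling any rollback to $\leq p_6$ via the $p_4$-recovery contradiction.
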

\begin{proof}
For each party $j\in \sset_m$, by Lemma \ref{lem:missMP},
$j$ has at most one meeting point $p_6 := (w-1) \verif$
in $[p_6,p_2)$ to roll back right after epoch $t_j$.
Thus, $j$ can only roll back to a position at most $p_6$ and go forward to recover meeting points in $[p_4,p_2)$ .
But then $j$ also recovers $p_4$ and must reach at least $p_0$ again to remove $p_4$, which violates the definition of $t_j$.
Therefore, $j$ cannot roll back to less than $p_2$
and can only be in $[p_2,p_0)$.
\end{proof}

Then, we consider the following cases:

\begin{enumerate}
\item
If $p_{D}\geq p_0$, then $\agree$ cannot be in $[p_4,p_2)$ during the epoch period $(t_D,t)$.
We prove this result by contradiction.
First, the party $x$ can only be in a position at least $p_0$ during the epoch period $[t_D,t_x]$ since $p_{D}\geq p_0$.
Then, by Claim \ref{clm:onlyInP2P0}, $x$ can only be in $[p_2,p_0)$ after $t_x$.
Thus, $x$ cannot decrease $\agree$ to be in $[p_4,p_2)$.
Suppose party $j\not=x$ is the first one that rolls back and decreases $\agree$ to be in $[p_4,p_2)$  after $t_D$.
Then, by Lemma \ref{lem:missMP}
and $p_D\geq p_0$, at first $j$ only has at most one meeting point $p_6$
in $[p_6,p_2)$ to roll back after $t_D$.
Thus, $j$ can only first roll back to a position at most $p_6$
and thus set $\agree \leq p_6$.
Since $\agree$ should finally be in $[p_4,p_2)$,
$\agree$ should grow into $[p_4,p_2)$ after $j$ rolls back.
However, 
$x$ can only be in $[p_2,p_0)$ during the epoch period $[t_x,t]$.
Without $x$, $\agree$ cannot grow up to $[p_4,p_2)$ after $j$ rolls back,
which is a contradiction.
Therefore, we must have $p_D<p_0$.

\item
If $p_{D}\in [p_1,p_0)$ and $\agree\geq p_2$ at epoch $t_x$,
then there exists a party $j\not=x$
such that $j$ rolls back and sets $\agree \in [p_4,p_2)$
during the time period $(t_x,t]$.
By Lemma \ref{lem:missMP}
and $p_D\in [p_1,p_0)$, at first $j$ only has one meeting point $p_4$
in $[p_4,p_2)$ to roll back after $t_D$.
Thus, $j$ can only first roll back to a position at most $p_4$
before setting $\agree \in [p_4,p_2)$.
Since by Claim \ref{clm:onlyInP2P0} $x$ can only be in $[p_2,p_0)$ after $t_x$,
$j$ can only roll back to a position at most $p_4$.

Observe that for party~$x$, if there is an available meeting point at most $p_4$, then it is at most $p_6$.
If party~$j$ rolls back to $p_6$, this will cause $\agree$ to drop to at most $p_6$, 
which causes a contradiction as in the previous case.  
%\hubert{Is this true?}
Hence, party~$j$ has to roll back to $p_4$, which is missing for party~$x$.
Therefore, there must be at least
$0.4\verif$ bad votes,
i.e. $\beta_j\geq 0.4\verif$.
Then, according to how $\alpha$ accumulates bad votes from $\beta$, we have $\alpha_{AB} \geq 0.2\verif$ at epoch $t$.
Notice that after $j$ rolls back, $\disagree_{AB} \not= 0$ until $t$.

\item If $p_{D}\in [p_1,p_0)$ and $\agree < p_2$ at epoch $t_x$,
then there exists a party $j\in AB$
such that $j$ rolls back and sets $\agree < p_2$
during the epoch period $(t_D,t_x]$.
By Lemma \ref{lem:missMP}
and $p_D\in [p_1,p_0)$, at first $j$ only has one meeting point $p_4$
in $[p_4,p_2)$ to roll back after $t_D$.
Thus, we conclude that $j$ can only first roll back to a position at most $p_4$
before setting $\agree < p_2$;
as in the previous case, actually $j$ has to roll back to $p_4$.  
%\hubert{Is this true?}

Since $\disagree_{AB} \not= 0$ during $(t_D,t_x]$ according to the definition of $t_D$,
$j$ can only roll back because of at least
$0.4\verif$ bad votes,
i.e. $\beta_j\geq 0.4\verif$.
Then, according to how $\alpha$ accumulates bad votes from $\beta$, we have $\alpha_{AB} \geq 0.2\verif$ at epoch $t$.
Notice that after $j$ rolls back, $\disagree_{AB} \not= 0$ until $t$.

\item If $p_{D}< p_1$, then the party $x$ should suffer at least
$\frac{\verif}{2}$ corrupted computations to reach at least $p_0$ during the epoch period $(t_D,t_x]$, i.e. $\gamma\geq 0.5\verif$.

Remark that this portion of $\gamma$ will not decrease until $t$.
Recall that the condition for $\gamma$ to decrease
is as follows:
\emph{
All parties do meeting point transitions with $\forall i\in AB: 0<2\disagree_{AB}<\verif_{i}=\verif'$ and $\alpha_{AB}+\beta_{AB} < 0.1\verif'$ before the transition and with $\disagree_{AB}=0$ after the transition.  Then, $\gamma$ decreases by $0.25\verif'$.
}

According to the definition of $t_D$,
$\disagree_{AB}=0$ can only happen again during $(t_x,t)$.
Then, the epoch satisfying the above condition can only happen during $(t_x,t)$.
Since $\forall i\in AB: 2\disagree_{AB}<\verif_{i}=\verif'$, by Lemma \ref{lem:haveATrueMP},
there should be another epoch $t'\in(t_x,t)$ with $\forall i\in AB: \verif_i=\frac{\verif'}{2} > \disagree_{AB}$ such that
all parties should have a true meeting point.
Moreover, that true meeting point is not saved by at least one party.
Otherwise, the parties should suffer more than $0.1\cdot\frac{\verif}{2}$ bad votes to that true meeting point, i.e. $\forall i\in AB: \beta_i>0.05\verif'$ and $\gamma$ does not decrease.
Then, we can introduce another instance of the above case analysis.
Since one of the condition is $\alpha_{AB}+\beta_{AB} < 0.1\verif'$,
we can directly go to case 4 showing that there should be some party $x'$
(with similar definition to $x$) suffering at least $0.5\cdot\frac{\verif'}{2}=0.25\verif'$ corrupted computations during $(t_{D'},t_{x'}]$ (with similar definition to $t_{D},t_{x}$).
Observe that $(t_D,t_x]\cap(t_{D'},t_{x'}]=\emptyset$ according the definition of $t_D$ and $t_{D'}$.
Thus, the decrease $0.25\verif'$ of $\gamma$ will not cost the portion $0.5\verif$ obtained during $(t_D,t_x]$.

\end{enumerate}

Therefore, we conclude that 
$\alpha_{AB} \geq 0.2\verif$ or $\gamma\geq0.5\verif$.
\end{proof}

We are ready to prove the first statement of Lemma~\ref{lemma:potential_tech}.

\begin{lemma}[Lower Bound on Potential Function:
Similar to Lemma 7.4 in~\cite{Haeupler14}]
\label{lem:potentialIncreaseEpoch}
In every epoch,
if there exists at least one corruption or short hash collision,
the potential decreases at most by $(m+1)(2C_6-C_2) = O(m^4)$.
Furthermore, if the corresponding epoch is consistent and no corruption occurs,
the potential increases by at least one.
If the corresponding epoch is inconsistent and no corruption or short hash collision occurs,
the potential increases by at least two.
\end{lemma}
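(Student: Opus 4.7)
The plan is to combine Lemma~\ref{lem:potentialIncreaseVCPhase}, which already covers the verification and computation stages, with a fresh case analysis for the transition stage, and then add the two bounds. The lemma splits naturally into a \emph{clean} case (no corruption and no short hash collision this epoch) and a \emph{dirty} case (at least one of them occurred). In both cases I would first invoke Lemma~\ref{lem:potentialIncreaseVCPhase} to absorb the change through the first two stages, and then argue separately about the transition stage using Lemmas~\ref{lem:ELessThanK}, \ref{lem:changeOfAgree}, \ref{lem:haveATrueMP}, \ref{lem:missMP} and \ref{lem:valuesForMissingMP2}.

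For the clean case, Lemma~\ref{lem:potentialIncreaseVCPhase} already gives the required increase by $\geq 1$ (consistent) or $\geq 2$ (inconsistent) through verification and computation. What remains is to check the transition stage cannot cancel this gain. Since no hash is corrupted or collides, every party perceives the true $\verif_{AB}$ and the true meeting-point votes, so either all parties perform the same error/normal transition (which only zeroes counters and leaves $\agree, \disagree_{AB}, \alpha_{AB}, \gamma$ untouched, using Lemma~\ref{lem:ELessThanK} to bound the $\error$ and $\verif$ resets against each other) or all parties roll back to a common true meeting point (for which $\agree$ is unchanged, $\disagree_{AB}$ drops to $0$, $\alpha_{AB}$ is reset per the trigger condition, and $\beta_{AB}$ also drops). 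Inspecting \eqref{eq:potential_eq} and \eqref{eq:potential_neq} term-by-term with the constants of Table~\ref{tab:Constant} shows each path is non-decreasing, so the clean bound survives the transition stage.

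For the dirty case the total drop across the three stages must stay below $(m+1)(2C_6-C_2) = O(m^4)$. Lemma~\ref{lem:potentialIncreaseVCPhase} already supplies this bound for stages one and two, so I would argue that the transition stage, \emph{when examined net of the already-accounted-for damage}, is non-negative. The three dangerous events are: (i) a meeting-point rollback, which Lemma~\ref{lem:changeOfAgree} caps at $\Delta\agree\geq -(2\verif-1)$ and $\Delta\disagree_{AB}\leq m(2\verif-1)$, so the associated loss of $C_3\cdot m(2\verif-1)+\ldots$ must be charged against the simultaneous drop in $2C_6\beta_{AB}$ (a rollback through Lines \ref{line:transitionrollbackone}/\ref{line:transitionrollbacktwo} requires $\voteone$ or $\votetwo$ to exceed $0.4\VP$, which in the dirty case forces a matching accumulation of bad votes); (ii) the $\beta\to\alpha$ transfer at rate $\tfrac{1}{2}$ with coefficient change $-2C_6\to -C_6$, which is manifestly non-decreasing; and (iii) the $\verif$ reset, which is paired with a reset of $\error$ dominated by it via Lemma~\ref{lem:ELessThanK}. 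The subtle sub-case is when the epoch's $\verif$ and $\error$ configuration forces us to swap between formula~\eqref{eq:potential_eq} and formula~\eqref{eq:potential_neq}; this I would verify by a direct arithmetic check using the values of $C_2,\ldots,C_7$.

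The hardest step, and the reason the auxiliary variables $\alpha$ and $\gamma$ were introduced, is handling a meeting-point rollback at scale $\verif$ when the relevant true $\MPtwo$ has already been evicted from some party's $\HD$ (cf.\ Lemma~\ref{lem:missMP}). Here $\agree$ can drop sharply without any compensating token in $\beta_{AB}$, which would break the straightforward charging scheme. Lemma~\ref{lem:valuesForMissingMP2} is precisely the tool to recover the bound: it guarantees $\alpha_{AB}\geq 0.2\verif$ or $\gamma\geq 0.5\verif$ at the start of such an epoch, and with $C_6,C_7$ chosen as in Table~\ref{tab:Constant} the terms $-C_6\alpha_{AB}$ and $-C_7\gamma$ are large enough to dominate $C_3\cdot m(2\verif-1)$. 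After invoking this lemma the remaining work is purely bookkeeping of the constants, and summing the verification/computation bound with the transition bound yields the stated $(m+1)(2C_6-C_2)$ worst-case decrease.
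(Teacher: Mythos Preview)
Your high-level decomposition matches the paper: invoke Lemma~\ref{lem:potentialIncreaseVCPhase} for the verification/computation stages and then do a case analysis on the transition stage. However, the clean case contains a genuine gap. You assert that, absent corruption or short-hash collision \emph{in this epoch}, ``every party perceives the true $\verif_{AB}$ and the true meeting-point votes, so either all parties perform the same error/normal transition \ldots\ or all parties roll back to a common true meeting point.'' This is false: the parties' local state (their $\verif$, $\error$, $\Point$, and accumulated votes) can already be inconsistent because of corruptions in \emph{previous} epochs, so even a perfectly clean epoch can see some parties do error transitions, others meeting-point transitions, and yet others normal transitions. In particular, the transition stage alone can strictly \emph{decrease} $\Phi$ in a clean epoch. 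The paper handles exactly this in its Case~2(b) ($\exists i,j:\verif_i\neq\verif_j$, at least one non-normal transition, no corruption/collision): there $\Delta\Phi$ for the transition is allowed to be negative, and the proof explicitly folds in the $+0.1C_4|AB|$ coming from the preceding verification/computation stage to push the net epoch change to at least~$2$. Your argument that the transition is ``non-decreasing'' in the clean case therefore does not go through as stated.

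A second, smaller gap is in your dirty-case charging: you write that a rollback ``requires $\voteone$ or $\votetwo$ to exceed $0.4\VP$, which in the dirty case forces a matching accumulation of bad votes.'' Crossing the $0.4\VP$ threshold does not by itself imply the votes were bad; if the party rolls back to a \emph{true} common meeting point, those votes are legitimate and $\beta$ need not be large. The paper's actual argument (Cases~4 and~5) is more delicate: it distinguishes whether a true saved meeting point exists for everyone, and in each sub-case identifies which party's failure (to transition, or to avoid a false transition) forces $\beta_i>0.1\verif$. Your use of Lemma~\ref{lem:valuesForMissingMP2} for the missing-$\MPtwo$ sub-case is the right idea, but note that the paper applies it at scale $\verif/2$ (yielding $\alpha_{AB}\geq 0.1\verif$ or $\gamma\geq 0.25\verif$) and must also account for the rule that decreases $\gamma$ precisely in this situation; the bookkeeping there is not automatic.
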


\begin{proof}
We denote with $\agree $, $\disagree $, $\verif$, $\error$, $\alpha$, $\beta$ and $\gamma$ the values right before the transition stage 
and denote with $\agree' $, $\disagree' $, $\verif'$, $\error'$, $\alpha'$, $\beta'$ and $\gamma'$ the values after the transition stage.
We also use a $\Delta$ in front of any variable to denote the change of value to this variable during the transition stage. %If only one party is transitioning we assume, without loss of generality, that Alice is transitioning. 

Given Lemma \ref{lem:potentialIncreaseVCPhase}, it suffices to show that a transition stage never decreases the potential, i.e. $\Delta\Phi\geq0$ . We show exactly this, except for one case, in which the potential decreases by a small constant. In case of the epoch being error and short hash collision free, this constant is shown to be less than the increase of the preceding computation and verification stage and the total potential increase for that (inconsistent) epoch is at least two.% (if it was error free). 

We now make the following case distinction according to which combination of transition(s) occurred in the epoch and whether or not the parties agreed in their $\verif$ parameter before the transition:

\begin{enumerate}
%If only Error Transitions occurred we distinguish between the case where the parties agreed in their $\verif$ parameter before the transition or not and if not whether one or both parties transitioned.
\item All parties do normal transitions.

Then, no matter $\forall i,j\in AB:\verif_i = \verif_{j}$ or not, all variables measured in the potential do not change. Thus, $\Delta\Phi=0$.

\item $\exists i,j\in AB:\verif_i \not= \verif_{j}$ and at least one party does non-normal transition.

Then, we use \eqref{eq:potential_neq} to evaluate the potential before the transition stage.
In addition, the $\gamma$ does not change,
so we can ignore the $\gamma$ part in the potential.
From \eqref{eq:potential_neq}, the potential before the transition stage is
$$
\Phi=\agree - C_3 \disagree_{AB} - 0.9 C_4  \sum_{i\in AB}\verif_{i}  +  C_4  \sum_{i\in AB}\error_{i} -  C_6  \sum_{i\in AB}\alpha_{i} -  C_6  \sum_{i\in AB}\beta_{i}.
$$

Let $\sset\subseteq AB$ be the set of parties that do non-normal transitions.
If $\sset \not= AB$, then at least one party does normal transition (and 
in this case, also at least one party does non-normal transition).
Hence,
 $\exists i,j\in AB:\verif_i' \not= \verif_{j}'$ after the transition stage and from \eqref{eq:potential_neq} the potential is
$$
\Phi'=\agree' - C_3 \disagree'_{AB} - 0.9 C_4 \sum_{i\in AB \setminus \sset} \verif_{i}  +  C_4 \sum_{i\in AB \setminus \sset} \error_{i} -  C_6  \sum_{i\in AB \setminus \sset}\alpha_{i} -  C_6  \sum_{i\in \sset}\alpha_{i}' -  C_6  \sum_{i\in AB \setminus \sset}\beta_{i}.
$$
If $\sset = AB$, then $\forall i\in AB:\verif_i' = \error_{i}' = \beta_{i}' =0$ after the transition stage and from \eqref{eq:potential_eq} the potential is
$$
\Phi'=\agree' - C_3 \disagree'_{AB} -  C_6  \sum_{i\in \sset}\alpha_{i}'.
$$

Observe that in the first equation for $\Phi'$,
when $S = AB$, it reduces to the second equation for $\Phi'$;
hence, we can just use the first equation for $\Phi'$.

We consider the following cases.

\begin{enumerate}

\item Suppose in this epoch, there exists at least one corruption or short hash collision.

The change of the potential can be expressed:
$$
\Delta\Phi=\Delta\agree - C_3 \Delta\disagree_{AB} + 0.9 C_4 \sum_{i\in \sset} \verif_{i}  -  C_4 \sum_{i\in \sset} \error_{i} -  C_6  \sum_{i\in  \sset}\Delta\alpha_{i} +  C_6  \sum_{i\in \sset}\beta_{i}.
$$
According to how $\alpha$ accumulates bad votes from $\beta$, we have $\forall i\in \sset: \Delta\alpha_{i}=\alpha_{i}'-\alpha_{i}\leq 0.5\beta_{i}$.
Then, we have $-  C_6  \sum_{i\in  \sset}\Delta\alpha_{i} +  C_6  \sum_{i\in \sset}\beta_{i} \geq - 0.5 C_6 \sum_{i\in  \sset} \beta_{i} +  C_6  \sum_{i\in \sset}\beta_{i} \geq 0$.
By Lemma \ref{lem:ELessThanK}, we also have $\error_i \leq 0.5(\verif_i + 1)$ for each $i\in \sset$.
Moreover, by Lemma \ref{lem:changeOfAgree}, the decrease of $\agree$ is at most $2\max_{i\in \sset}\verif_i-1$
and the increase of $\disagree_{AB}$ is at most $m(2\max_{i\in \sset}\verif_i-1)$.
Thus, we have

\begin{align*} 
\Delta\Phi & \geq  -(2\max_{i\in \sset}\verif_i-1) - mC_3 (2\max_{i\in \sset}\verif_i-1) + 0.9 C_4 \sum_{i\in \sset} \verif_{i} - 0.5 C_4 \sum_{i\in \sset} (\verif_i + 1) \\
& =  - (2m C_3 + 2) \max_{i\in \sset}\verif_i + (m C_3 + 1) + 0.4 C_4 \sum_{i \in S} \verif_i - 0.5 C_4 \cdot |S|.
\end{align*}

Now, we further consider the following different scenarios:
\begin{enumerate}
\item $\max_{i\in \sset}\verif_i \geq 2$.

Let $\sset_2$, $\sset_1$ and $\sset_0$ denote the set of parties in $S$ with $\verif\geq2$, $\verif=1$ and $\verif=0$ respectively.
Then, with $C_4 \geq 5(1+mC_3)$, the coefficient of $\max_{i\in \sset}\verif_i$ is non-negative.
Hence, we can obtain a lower bound of $\Delta\Phi$ by replacing $\max_{i\in \sset}\verif_i$ with 2,
and truncate values of $\verif_i$ to 2:
$$
\Delta\Phi\geq 0.3 C_4|S_2| -3 - 3mC_3 - 0.1 C_4 |S_1| - 0.5 C_4 |S_0|.
$$

According the Case \ref{case:neq2neq} in the proof of Lemma \ref{lem:potentialIncreaseVCPhase},
the potential can only decrease at most by
\begin{align*}
&3+3mC_3+(m+1)C_6+C_7+(0.5C_4+C_3)|S_0|+C_4|S_1|+0.6C_4|S_2|\\
\leq&3+3mC_3+(m+1)C_6+C_7+(m+1)C_4\leq(m+1)(2C_6-C_2)
\end{align*}
in the corresponding epoch.

\item
$\max_{i\in \sset}\verif_i \leq 1$.

Then, we can obtain a lower bound of $\Delta\Phi$ by replacing $\max_{i\in \sset}\verif_i$ with 1:
$$
\Delta\Phi\geq -1 - mC_3 - 0.1 C_4 |S_1| - 0.5 C_4 |S_0|
$$
where $\sset_1$ and $\sset_0$ denote the set of parties in $S$ with $\verif=1$ and $\verif=0$ respectively.
According the Case \ref{case:eq2neq} in the proof of Lemma \ref{lem:potentialIncreaseVCPhase},
the potential can only decrease at most by
\begin{align*}
&1+mC_3+(m+1)C_6+C_7+(0.5C_4+C_3)|S_0|+C_4|S_1|\\
\leq&1+mC_3+(m+1)C_6+C_7+(m+1)C_4\leq(m+1)(2C_6-C_2)
\end{align*}
in the corresponding epoch.

\ignore{
\item
If no error or short hash collision occurs and $\min_{i\in \sset}\verif_i \geq 2$, then we have
$$
\Delta\Phi \geq - 3 - 3mC_3 + 0.3C_4|\sset| \geq 0
$$
with $C_4 \geq 10(1 + mC_3)/|\sset|$.
}

\end{enumerate}

\item
Suppose no corruption or short hash collision occurs.

%and $\min_{i\in \sset}\verif_i = 1$,
Then, $\Delta\Phi$ can be negative.
We need to include the increase of the potential during the preceding
computation and verification stages to make sure the total increase is at least 2, since this epoch is inconsistent.
We use the notations with a superscript $*$ to denote the values at the beginning of an epoch.
Since no corruption or short hash collision happens and $\exists i,j\in AB:\verif_i \not= \verif_{j}$,
all parties go to the branch that increases $\error$ and do one dummy epoch simulation.

This also implies $\min_{i\in AB}\verif_i \geq 1$.
Then, we have $\forall i\in AB: \verif^*_i = \verif_i-1$
and $\error^*_i = \error_i-1$ while the other parameters do not change.
Thus, from \eqref{eq:potential_neq} we have
$$
\Phi^*=\agree - C_3 \disagree_{AB} - 0.9 C_4  \sum_{i\in AB}(\verif_{i}-1)  +  C_4  \sum_{i\in AB}(\error_{i}-1) -  C_6  \sum_{i\in AB}\alpha_{i} -  C_6  \sum_{i\in AB}\beta_{i}.
$$
Then, the change of the potential in this epoch is
$$
\Phi' - \Phi^* = \Delta\agree - C_3 \Delta\disagree_{AB}
+ 0.9 C_4  \sum_{i\in S}\verif_{i} - C_4  \sum_{i\in S}\error_{i}
+ 0.1 C_4 |AB| -  C_6  \sum_{i\in  \sset}\Delta\alpha_{i} +  C_6  \sum_{i\in \sset}\beta_{i}.
$$
According to how $\alpha$ accumulates bad votes from $\beta$, we have $\forall i\in \sset: \Delta\alpha_{i}=\alpha_{i}'-\alpha_{i}\leq 0.5\beta_{i}$.
Then, we have
$$-  C_6  \sum_{i\in  \sset}\Delta\alpha_{i} +  C_6  \sum_{i\in \sset}\beta_{i} \geq - 0.5 C_6 \sum_{i\in  \sset} \beta_{i} +  C_6  \sum_{i\in \sset}\beta_{i} \geq 0.
$$
By Lemma \ref{lem:ELessThanK}, we also have $\error_i \leq 0.5(\verif_i + 1)$ for each $i\in \sset$.
Moreover, by Lemma \ref{lem:changeOfAgree}, the decrease of $\agree$ is at most $2\max_{i\in \sset}\verif_i-1$
and the increase of $\disagree_{AB}$ is at most $m(2\max_{i\in \sset}\verif_i-1)$.
Thus, we have 
$$
\Phi' - \Phi^* \geq -(2\max_{i\in \sset}\verif_i-1)
- mC_3(2\max_{i\in \sset}\verif_i-1)
+0.4 C_4 \sum_{i\in \sset}\verif_i
-0.4 C_4 |\sset|
+0.1 C_4 |AB\setminus\sset|.
$$

If $S=AB$,
then we have $\max_{i\in \sset}\verif_i\geq2$ since $\exists i,j\in AB:\verif_i \not= \verif_{j}$.
With $C_4 \geq 5(1+mC_3)$,
we can obtain a lower bound of $\Phi' - \Phi^*$ by replacing $\max_{i\in \sset}\verif_i$ with 2.
Then, we have 
$$
\Phi' - \Phi^* \geq -3
-3 mC_3
+0.4 C_4
\geq 2,
$$
with $C_4\geq5(5+3mC_3)/2$.

Otherwise, if $S\not=AB$,
then we have $\max_{i\in \sset}\verif_i\geq1$.
With $C_4 \geq 5(1+mC_3)$,
we can obtain a lower bound of $\Phi' - \Phi^*$ by replacing $\max_{i\in \sset}\verif_i$ with 1.
Then, we have 
$$
\Phi' - \Phi^* \geq -1
- mC_3
+0.1 C_4
\geq 2,
$$
with $C_4\geq10(3+mC_3)$.
\end{enumerate}

\item $\forall i,j\in AB: \verif_i = \verif_j = \verif$ and at least one party does error transition.

Then,
we can ignore the $\gamma$ part in the potential
since $\gamma$ does not change in this case.
From \eqref{eq:potential_eq} the potential before the transition stage is
$$
\Phi=\agree - C_3 \disagree_{AB} + C_2 \sum_{i\in AB} \verif_{i}  -  C_5  \sum_{i\in AB} \error_{i}
-C_6 \sum_{i\in AB} \alpha_{i} - 2C_6 \sum_{i\in AB} \beta_{i}.
$$
Let $\sset_e,\sset_m,\sset_n \subseteq AB$ be the set of parties that do error transitions, meeting point transitions and normal transitions respectively.
Then, we consider the following two scenarios:

\begin{enumerate}
\item
If $\sset_n\not=\emptyset$,
then we should use \eqref{eq:potential_neq} for the potential.
Hence,  the potential after the transition stage is
$$
\Phi'=\agree' - C_3 \disagree_{AB}' - 0.9 C_4
\sum_{i\in \sset_n} \verif_i + C_4 \sum_{i\in \sset_n} \error_i - C_6 \sum_{i\in \sset_m} \alpha_{i}' - C_6 \sum_{i\in AB\setminus\sset_m} \alpha_{i} - C_6\sum_{i\in \sset_n} \beta_{i}.
$$
The change of the potential consists of two parts: 
\begin{align*}
\Delta\Phi=&\Delta\agree - C_3\Delta\disagree_{AB} - 0.9 C_4
\sum_{i\in \sset_n} \verif_i - C_2 \sum_{i\in AB} \verif_{i}
+ C_4 \sum_{i\in \sset_n} \error_i  +C_5  \sum_{i\in AB} \error_{i}\\
&- C_6 \sum_{i\in \sset_m} \Delta\alpha_{i} + C_6\sum_{i\in \sset_n} \beta_{i} + 2C_6\sum_{i\in AB\setminus\sset_n} \beta_{i}.
\end{align*}

We analyze the second part first.
According to how $\alpha$ accumulates bad votes from $\beta$, we have $\forall i\in \sset_m: \Delta\alpha_{i}=\alpha_{i}'-\alpha_{i}\leq 0.5\beta_{i}$.
Then, we have
$$
- C_6 \sum_{i\in \sset_m} \Delta\alpha_{i} + C_6\sum_{i\in \sset_n} \beta_{i} + 2C_6\sum_{i\in AB\setminus\sset_n} \beta_{i} \geq 1.5 C_6 \sum_{i\in \sset_m} \beta_{i} + C_6\sum_{i\in \sset_n} \beta_{i} + 2C_6\sum_{i\in \sset_e} \beta_{i} \geq 0.$$

Hence, it remains to analyze the first part.
By the condition of the error transition, we also have $2\error_i \geq \verif$ for each $i\in \sset_e$.
Moreover, by Lemma \ref{lem:changeOfAgree}, the decrease of $\agree$ is at most $2\verif-1$
and the increase of $\disagree_{AB}$ is at most $m(2\verif-1)$.
Thus, we have
\begin{align*}
\Delta\Phi\geq& -(2\verif-1) - m C_3 \cdot (2\verif-1) - 0.9 C_4 \cdot
|\sset_n| \verif - (m+1) C_2 \verif\\
&+ C_4 \sum_{i\in \sset_n} \error_i  + C_5  \sum_{i\in AB\setminus\sset_e} \error_{i}
+ 0.5 C_5  \cdot |\sset_e| \cdot \verif\\
=& 1 + mC_3 + C_4 \sum_{i\in \sset_n} \error_i + C_5  \sum_{i\in AB\setminus\sset_e} \error_{i}\\
&+[0.5C_5 \cdot |\sset_e| -2-2mC_3-0.9C_4 \cdot |\sset_n| -(m+1)C_2]\verif
\geq 0
\end{align*}
with $C_5 \geq [4 + 4mC_3+1.8C_4|\sset_n| +2(m+1)C_2]/|\sset_e|$.

\item
If $\sset_n = \emptyset$,
then we should use \eqref{eq:potential_eq} for the potential.
Hence,
the potential after the transition stage is
$$
\Phi'=\agree' - C_3 \disagree_{AB}' - C_6 \sum_{i\in \sset_m} \alpha_{i}' - C_6 \sum_{i\in \sset_e} \alpha_{i}.
$$
Then, the change of the potential is 
\begin{align*}
\Delta\Phi=&\Delta\agree - C_3 \Delta \disagree_{AB}- C_2 \sum_{i\in AB} \verif_{i}  +  C_5  \sum_{i\in AB} \error_{i}
-C_6 \sum_{i\in \sset_m} \Delta\alpha_{i} + 2C_6 \sum_{i\in AB} \beta_{i}.
\end{align*}
According to how $\alpha$ accumulates bad votes from $\beta$, we have $\forall i\in \sset_m: \Delta\alpha_{i}=\alpha_{i}'-\alpha_{i}\leq 0.5\beta_{i}$.
Then, we have
$$
- C_6 \sum_{i\in \sset_m} \Delta\alpha_{i} + 2C_6\sum_{i\in AB} \beta_{i}
\geq 1.5 C_6 \sum_{i\in \sset_m} \beta_{i} + 2C_6\sum_{i\in \sset_e} \beta_{i} \geq 0.$$
By the condition of the error transition, we also have $2\error_i \geq \verif$ for each $i\in \sset_e$.
Moreover, by Lemma \ref{lem:changeOfAgree}, the decrease of $\agree$ is at most $2\verif-1$
and the increase of $\disagree_{AB}$ is at most $m(2\verif-1)$.
Thus, we have
\begin{align*}
\Delta\Phi\geq& -(2\verif-1) - mC_3(2\verif-1) - (m+1) C_2 \verif  +  0.5 C_5  |\sset_e| \verif
+  C_5  \sum_{i\in \sset_m} \error_{i}\\
\geq& 1 + mC_3 +  C_5  \sum_{i\in \sset_m} \error_{i}
+[0.5 C_5  |\sset_e| -2 -2mC_3 - (m+1) C_2]\verif
\geq 0
\end{align*}
with $C_5 \geq [4 + 4mC_3 +2(m+1)C_2]/|\sset_e|$.
\end{enumerate}

\item $\forall i,j\in AB: \verif_i = \verif_j = \verif$, and no error transition occurs, and at least one party does meeting point transition but not all.

Then,
we can ignore the $\gamma$ part in the potential since $\gamma$ does not change.
From \eqref{eq:potential_eq} the potential before the transition stage is
$$
\Phi=\agree - C_3 \disagree_{AB} + C_2 \sum_{i\in AB} \verif_{i}  -  C_5  \sum_{i\in AB} \error_{i}
-C_6 \sum_{i\in AB} \alpha_{i} - 2C_6 \sum_{i\in AB} \beta_{i}.
$$
Let $\sset_m,\sset_n \subseteq AB$ be the set of parties that do meeting point transitions and normal transitions respectively.
Then,  after the transition stage,
we should use \eqref{eq:potential_neq} for the potential:

$$
\Phi'=\agree' - C_3 \disagree_{AB}' - 0.9 C_4
\sum_{i\in \sset_n} \verif_i + C_4 \sum_{i\in \sset_n} \error_i - C_6 \sum_{i\in \sset_m} \alpha_{i}' - C_6 \sum_{i\in \sset_n} \alpha_{i} - C_6\sum_{i\in \sset_n} \beta_{i}.
$$
Then, the change of the potential consists of two parts: 
\begin{align*}
\Delta\Phi=&\Delta\agree - C_3\Delta\disagree_{AB} - 0.9 C_4
\sum_{i\in \sset_n} \verif_i - C_2 \sum_{i\in AB} \verif_{i}
+ C_4 \sum_{i\in \sset_n} \error_i  +C_5  \sum_{i\in AB} \error_{i}\\
&- C_6 \sum_{i\in \sset_m} \Delta\alpha_{i} + C_6\sum_{i\in \sset_n} \beta_{i} + 2C_6\sum_{i\in \sset_m} \beta_{i}.
\end{align*}

We first show that the second part is non-negative.
According to how $\alpha$ accumulates bad votes from $\beta$, we have $\forall i\in \sset_m: \Delta\alpha_{i}=\alpha_{i}'-\alpha_{i}\leq 0.5\beta_{i}$.
Then, we have
$$
- C_6 \sum_{i\in \sset_m} \Delta\alpha_{i} + C_6\sum_{i\in \sset_n} \beta_{i} + 2C_6\sum_{i\in \sset_m} \beta_{i}
\geq  C_6\sum_{i\in \sset_n} \beta_{i} + 1.5C_6\sum_{i\in \sset_m} \beta_{i} \geq 0.
$$

Hence, it suffices to analyze the first part of $\Delta \Phi$.
By Lemma \ref{lem:changeOfAgree}, the decrease of $\agree$ is at most $2\verif-1$
and the increase of $\disagree_{AB}$ is at most $m(2\verif-1)$.
Thus, we have
\begin{align*}
\Delta\Phi\geq& -(2\verif-1) - mC_3(2\verif-1) - 0.9 C_4
|\sset_n| \verif - (m+1) C_2 \verif\\
&+ C_4 \sum_{i\in \sset_n} \error_i  + C_5  \sum_{i\in AB} \error_{i} + C_6\sum_{i\in \sset_n} \beta_{i} + 1.5C_6\sum_{i\in \sset_m} \beta_{i}\\
=& 1 + mC_3 + C_4 \sum_{i\in \sset_n} \error_i + C_5  \sum_{i\in AB} \error_{i} + C_6\sum_{i\in \sset_n} \beta_{i} + 1.5C_6\sum_{i\in \sset_m} \beta_{i}\\
&+[-2-2mC_3-0.9C_4|\sset_n| -(m+1)C_2]\verif.
\end{align*}

\begin{enumerate}
\item
%If $\exists i\in \sset_m:$ party $i$ correctly transitions to a true meeting point,
There exists a true meeting point that every party has saved.

Then, each party in $\sset_n$ fails to transition to that meeting point, because of totally more than $0.1\verif$ bad votes.
Recalling that all $\beta_i$'s increase together, we have $\forall i\in AB: \beta_i > 0.1\verif$.

\ignore{
\item
%If $\exists i\in \sset_m:$ party $i$ correctly transitions to a true meeting point,
If there exists a true meeting point and Alice does normal transition,
then Alice should fail to transition to that true meeting point because of more than $0.1\verif$ bad votes to Alice, i.e. $\forall i\in AB: \beta_i > 0.1\verif$.

\item
%If $\exists i\in \sset_m:$ party $i$ correctly transitions to a true meeting point,
If there exists a true meeting point and Alice transitions to a false meeting point,
then Alice should suffer at least $0.4\verif$ bad votes to that false meeting point, i.e. i.e. $\forall i\in AB: \beta_i\geq0.4\verif$.
}

\item
%If $\forall i\in \sset_m:$ party $i$ transitions to a false meeting point,
There is no true meeting point that everyone has saved.

Then, each party in $\sset_m$ should suffer at least $0.4\verif$ bad votes to that meeting point, i.e. $\forall i\in AB: \beta_i\geq0.4\verif$.
\end{enumerate}
Therefore, we have $\forall i\in AB: \beta_i > 0.1\verif$.
Then, we have $\Delta\Phi\geq 0$
with $C_6 \geq [40+40mC_3+18C_4|\sset_n| +20(m+1)C_2]/(2|\sset_n|+3|\sset_m|)$.

\item The remaining case is that $\forall i,j\in AB: \verif_i = \verif_j = \verif$ and all parties do meeting point transitions.

Before the transition stage,
we should use \eqref{eq:potential_eq} for the potential:
$$
\Phi=\agree - C_3 \disagree_{AB} + C_2  \verif_{AB}  -  C_5  \error_{AB} -C_6\alpha_{AB}-2C_6\beta_{AB} - C_7\gamma.
$$

We further consider the following cases.

\begin{enumerate}
\item $\disagree'_{AB}  \neq 0$.

This means Alice and some Bob transition to different meeting points.

From Alice's perspective, there are two candidates $\MPone$ and $\MPtwo$.
Observe that among $0.5 \verif$ votes, Alice must give $0.4 \verif$ votes for a candidate
point, in order for that transition to happen.  Since all parties prefer $\MPone$ if both are available,
it follows that if Alice votes for one candidate in an epoch with no corruption or hash collision,
then all other parties should vote for the same candidate.

Hence, it follows there must be at least $0.4 \verif$ bad votes,
if Alice and some Bob transition to different meeting points.
%This means at least one of all parties transitions to a false meeting point,
%then that party must suffer at least $0.4\verif$ bad votes on that false meeting point.
Thus, we have $\forall i\in AB: \beta_i\geq0.4\verif$ and the $\gamma$ does not change.
Then, from \eqref{eq:potential_eq} the potential after the transition stage is
$$
\Phi'=\agree' - C_3 \disagree'_{AB} -C_6\alpha'_{AB}- C_7\gamma.
$$
According to how $\alpha$ accumulates bad votes from $\beta$,
we have $\alpha'_{AB}-\alpha_{AB}= 0.5\beta_{AB}$.
By Lemma \ref{lem:changeOfAgree}, the decrease of $\agree$ is at most $2\verif-1$
and the increase of $\disagree_{AB}$ is at most $m(2\verif-1)$.
Then, we have
\begin{align*}
\Delta \Phi &=
\Delta\agree-C_3\Delta\disagree_{AB}-C_2\verif_{AB}
+C_5\error_{AB}-C_6(\alpha'_{AB}-\alpha_{AB})+2C_6\beta_{AB}
\\
&\geq -(2\verif-1)-mC_3(2\verif-1)-(m+1)C_2\verif
+C_5\error_{AB}-0.5C_6\beta_{AB}+2C_6\beta_{AB}
\\
&\geq1+C_3+C_5\error_{AB}+[0.6(m+1)C_6-2-2mC_3-(m+1)C_2]\verif
\geq0,
\end{align*}
with $C_6\geq [10+5(m+1)C_2+10mC_3]/3(m+1)$.

\item
$\disagree'_{AB}  = 0$ and $\verif \leq 2 \disagree_{AB}$.

Then,
the $\gamma$ does not change.
From \eqref{eq:potential_eq} the potential after the transition stage is
$$
\Phi'=\agree' - C_7\gamma.
$$
By Lemma \ref{lem:changeOfAgree}, the decrease of $\agree$ is at most $2\verif-1$.
Then, we have
\begin{align*}
\Delta \Phi =&\Delta\agree+C_3\disagree_{AB}-C_2\verif_{AB}+C_5\error_{AB}
+C_6\alpha_{AB}+2C_6\beta_{AB}
\\
\geq&-(2\verif-1)
+0.5C_3\verif -(m+1)C_2\verif+C_5\error_{AB}
+C_6\alpha_{AB}+2C_6\beta_{AB}
\\
=& 1+C_5\error_{AB}
+C_6\alpha_{AB}+2C_6\beta_{AB}
+[0.5C_3-2 -(m+1)C_2]\verif
\geq0
\end{align*}
with $C_3\geq 4+2(m+1)C_2$. 

\item
$\disagree'_{AB}  = 0$ and $\verif > 2 \disagree_{AB}=0$.
% To consider the case when $\disagree_{AB}=0$ and $\verif = 1$, since there is no past epoch with $\verif = 1/2$.
% Also to adapt the newly added condition ($\disagree_{AB}\not=0$) for $\gamma$ to decrease:
% otherwise, everytime when $\disagree_{AB}=0$ and $\verif = 1$, $\gamma$ will decrease without the past epoch with $\verif = 1/2<1!$ to make sure $\gamma$ to be large enough, i.e. $\gamma$ may decrease below zero.

Then, it is easy to see that all parties should have the same true meeting point
$\MPone$ in the past $\verif$ epochs.
Moreover, the $\gamma$ does not change.
We further consider two cases.
\begin{enumerate}
\item $\verif\geq2$ or some party transitions to $\MPtwo$.

If $\verif\geq2$, then all parties failed to transition to their $\MPone$ in the first half of the past $\verif$ epochs because of more than $0.2\cdot\frac{\verif}{2}$ bad votes.
If some party transitions to $\MPtwo$,
then that party should suffer at least $0.4\verif$ bad votes to $\MPtwo$.

Therefore, we have $\forall i\in AB: \beta_{i} > 0.1 \verif$, i.e. $\beta_{AB} > 0.1(m+1) \verif$.
From \eqref{eq:potential_eq} the potential after the transition stage is
$$
\Phi'=\agree' - C_7\gamma.
$$
By Lemma \ref{lem:changeOfAgree}, the decrease of $\agree$ is at most $2\verif-1$.
Then, we have
\begin{align*}
\Delta \Phi =&\Delta\agree-C_2\verif_{AB}+C_5\error_{AB}
+C_6\alpha_{AB}+2C_6\beta_{AB}
\\
\geq&-(2\verif-1)
-(m+1)C_2\verif+C_5\error_{AB}
+C_6\alpha_{AB}+0.2(m+1)C_6\verif
\\
=& 1+C_5\error_{AB}
+C_6\alpha_{AB}
+[0.2(m+1)C_6-2 -(m+1)C_2]\verif
\geq0
\end{align*}
with $C_6\geq 10/(m+1)+5C_2$.

\item $\verif=1$ and all parties transition to $\MPone$.

In this case,
we need to include the increase of the potential during the preceding
computation and verification stages.
Since no party does error transition, we should have $\forall i\in AB: \error_{i}=0$.
Moreover, all parties should do one dummy epoch simulation in the preceding computation stage.
Considering $\verif=1$, $\error=0$ and $\disagree_{AB}=0$, this means this epoch must suffer some corruption,
otherwise all parties should do one actual epoch simulation.
Thus, we only need to make sure the potential does not decrease too much in this epoch.
Since all parties do dummy simulation and transition to $\MPone$(=$\Point$) when $\verif=1$, $\agree_{AB}$, $\disagree_{AB}$ and $\gamma$ do not change.
According to \eqref{eq:potential_eq}, the only variable left that may decrease the potential is $\verif_{AB}$, since $\verif_{AB}$, $\error_{AB}$, $\alpha_{AB}$ and $\beta_{AB}$ are reset to zero after the transition.
But it is easy to see that all parties have $\verif_{AB}=0$ before the epoch.
Therefore, the potential does not decrease.

\end{enumerate}

\item
\label{case:gammeDecrease}
$\disagree'_{AB}  = 0$ and $\verif > 2 \disagree_{AB}>0$.

Then, by Lemma \ref{lem:haveATrueMP},
when $\forall i\in AB: \verif_i=\frac{\verif}{2} > \disagree_{AB}$,
all parties should have a true meeting point that is a multiple of $\frac{\verif}{2}$ in the past.

We further consider two cases.
\begin{enumerate}

\item Every party has saved that true meeting point.

A failed transition to it means that
there are more than $0.1 \cdot \frac{\verif}{2}$ bad votes.
Therefore, we have $\forall i\in AB: \beta_{i} > 0.05 \verif$, i.e. $\beta_{AB} > 0.05(m+1) \verif \geq 0.1\verif$.
Thus, the $\gamma$ does not change.

From \eqref{eq:potential_eq} the potential after the transition stage is
$$
\Phi'=\agree' - C_7\gamma.
$$
By Lemma \ref{lem:changeOfAgree}, the decrease of $\agree$ is at most $2\verif-1$.
Then, we have
\begin{align*}
\Delta \Phi =&\Delta\agree+C_3\disagree_{AB}-C_2\verif_{AB}+C_5\error_{AB}
+C_6\alpha_{AB}+2C_6\beta_{AB}
\\
\geq&-(2\verif-1)
+C_3\disagree_{AB} -(m+1)C_2\verif+C_5\error_{AB}
+C_6\alpha_{AB}+0.1(m+1)C_6\verif
\\
=& 1+C_3\disagree_{AB}+C_5\error_{AB}
+C_6\alpha_{AB}
+[0.1(m+1)C_6-2 -(m+1)C_2]\verif
\geq0
\end{align*}
with $C_6\geq 20/(m+1)+10C_2$.

\item  Some party has not saved that true meeting point.
%If $\disagree'_{AB}  = 0$ and $\verif > 2 \disagree_{AB} $
%but there was no true meeting point in the past when $\forall i\in AB: \verif_i=\frac{\verif}{2} > \disagree_{AB}$,
%then 

From \eqref{eq:potential_eq} the potential after the transition stage is
$$
\Phi'=\agree' - C_7\gamma',
$$
and thus
\begin{align*}
\Delta \Phi =&\Delta\agree+C_3\disagree_{AB}-C_2\verif_{AB}+C_5\error_{AB}
+C_6\alpha_{AB}+2C_6\beta_{AB}-C_7\Delta\gamma.
\end{align*}
By Lemma \ref{lem:changeOfAgree}, the decrease of $\agree$ is at most $2\verif-1$.
Moreover, by Lemma \ref{lem:valuesForMissingMP2},
we have $\alpha_{AB} \geq 0.1\verif$ or $\gamma\geq0.25\verif$ in the past when $\forall i\in AB: \verif_i=\frac{\verif}{2} > \disagree_{AB}$.

If $\alpha_{AB} +\beta_{AB}  < 0.1\verif$,
then $\Delta\gamma=-0.25\verif$ according to how $\gamma$ decreases.
By Lemma \ref{lem:valuesForMissingMP2} we should have large enough $\gamma\geq0.25\verif$.
Then, we have
\begin{align*}
\Delta \Phi 
\geq&-(2\verif-1)
+C_3\disagree_{AB} -(m+1)C_2\verif+C_5\error_{AB}
+C_6\alpha_{AB}+2C_6\beta_{AB}+0.25C_7\verif
\\
=& 1+C_3\disagree_{AB}+C_5\error_{AB}+C_6\alpha_{AB}
+2C_6\beta_{AB}
+[0.25C_7-2 -(m+1)C_2]\verif
\geq0
\end{align*}
with $C_7\geq 8+4(m+1)C_2$.

Otherwise, if $\alpha_{AB} +\beta_{AB} \geq 0.1\verif$,
then $\Delta\gamma=0$
and we have
\begin{align*}
\Delta \Phi 
\geq&-(2\verif-1)
+C_3\disagree_{AB} -(m+1)C_2\verif+C_5\error_{AB}
+0.1C_6\verif + C_6\beta_{AB}
\\
=& 1+C_3\disagree_{AB}+C_5\error_{AB}
+C_6\beta_{AB}
+[0.1C_6-2 -(m+1)C_2]\verif
\geq0
\end{align*}
with $C_6\geq 20+10(m+1)C_2$.

\end{enumerate}

\end{enumerate}

\end{enumerate}

Combining the proof of Lemma \ref{lem:potentialIncreaseVCPhase},
if there exists at least one corruption or short hash collision,
the potential decreases by at most  $(m+1)(2C_6-C_2)$,
which happens when $\forall i,j\in AB: \verif_i = \verif_j$.
Moreover, if there is no corruption or short hash collision,
the potential increases as required in the statement.
\end{proof}

Actually, Case \ref{case:gammeDecrease} of Lemma~\ref{lem:potentialIncreaseEpoch} contains
the argument that $\gamma$ is always non-negative.
However, for completeness, we extract that part out in the following lemma.

\begin{lemma}
\label{lem:gammaDecrease}
The variable $\gamma$ is always non-negative.
\end{lemma}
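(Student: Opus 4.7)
The plan is to show that every time $\gamma$ is decremented, its current value is already at least the amount by which it drops, so $\gamma$ never crosses zero. Initially $\gamma = 0$ and all increments are by $+1$, so the only threat to non-negativity is the single decrement rule: $\gamma$ decreases by $0.25\verif$ precisely when all parties simultaneously perform a meeting point transition with common $\verif_i = \verif$ satisfying $0 < 2\disagree_{AB} < \verif$ and $\alpha_{AB} + \beta_{AB} < 0.1\verif$ just before the transition, with $\disagree_{AB} = 0$ after.

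First I would rewind the execution to the most recent epoch in which all parties had a common verification counter equal to $\verif/2$ and strictly greater than $\disagree_{AB}$. Such an epoch must exist: since the parties have just reached a shared $\verif$ via the standard incremental growth rule, they all passed through a shared value $\verif/2$ without any error transition in between, and the condition $2\disagree_{AB} < \verif$ at the end carries back (up to the small slack permitted by the intervening computation stage) to $\disagree_{AB} < \verif/2$ at that earlier moment. At that earlier epoch, Lemma~\ref{lem:haveATrueMP} guarantees the existence of a true meeting point $p^\star$ that is a multiple of $\verif/2$. This meeting point cannot be simultaneously saved by every party, because otherwise the subsequent meeting point transition at scale $\verif/2$ would have succeeded, collapsing $\disagree_{AB}$ to $0$ and preventing the counter from ever reaching $\verif$.

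Next I would apply Lemma~\ref{lem:valuesForMissingMP2} at scale $\verif' := \verif/2$, obtaining
\[
\alpha_{AB} \;\geq\; 0.2 \cdot \tfrac{\verif}{2} \;=\; 0.1\verif \qquad \text{or} \qquad \gamma \;\geq\; 0.5 \cdot \tfrac{\verif}{2} \;=\; 0.25\verif.
\]
The hypothesis $\alpha_{AB} + \beta_{AB} < 0.1\verif$ in the decrement rule forces $\alpha_{AB} < 0.1\verif$, so the first alternative is ruled out and $\gamma \geq 0.25\verif$ holds at the moment of the decrement. Subtracting $0.25\verif$ therefore leaves $\gamma \geq 0$.

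The main obstacle is the bookkeeping linking the decrement event at scale $\verif$ to the lower-scale epoch at $\verif/2$ where Lemma~\ref{lem:valuesForMissingMP2} is applied: one must ensure that no intermediate rollback, error transition, or counter reset has occurred between these two moments that could wipe out the $\gamma$ contribution guaranteed by that lemma. This is precisely the disjointness-of-intervals argument used at the end of the proof of Lemma~\ref{lem:valuesForMissingMP2} (via the definition of $t_D$), and I would reuse it here to confirm that the $0.25\verif$ worth of $\gamma$ witnessed at scale $\verif/2$ persists undiminished to the decrement epoch at scale $\verif$.
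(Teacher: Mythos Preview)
Your approach is essentially the paper's, but one step is under-justified. You assert that the true meeting point $p^\star$ at scale $\verif/2$ ``cannot be simultaneously saved by every party, because otherwise the subsequent meeting point transition at scale $\verif/2$ would have succeeded.'' This does not follow automatically: even when all parties have $p^\star$ saved, the transition can still fail if enough votes are corrupted by channel errors or short hash collisions. The paper handles this case explicitly: if $p^\star$ is saved by all parties and yet the transition at scale $\verif/2$ fails, then each party must have accumulated more than $0.1\cdot(\verif/2)=0.05\verif$ bad votes, whence $\beta_{AB} > 0.05(m+1)\verif \geq 0.1\verif$. But that contradicts the decrement hypothesis $\alpha_{AB}+\beta_{AB} < 0.1\verif$, so in this case $\gamma$ is not decremented at all. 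Only after ruling out this possibility (or, equivalently, after using the bound $\beta_{AB}<0.1\verif$ to argue the transition \emph{would} have gone through) are you entitled to assume some party is missing $p^\star$ and invoke Lemma~\ref{lem:valuesForMissingMP2}.

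Once that gap is closed, the rest of your argument---applying Lemma~\ref{lem:valuesForMissingMP2} at scale $\verif/2$, discarding the $\alpha_{AB}\geq 0.1\verif$ alternative via the decrement hypothesis, and noting that the $\gamma\geq 0.25\verif$ contribution persists to the decrement epoch because no reset or earlier decrement can occur while $\verif$ grows monotonically from $\verif/2$ to $\verif$ with $\disagree_{AB}>0$---matches the paper.
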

\begin{proof}
Initially, $\gamma = 0$.  Hence, it suffices to check that whenever $\gamma$ is decreased,
it never drops below 0.  Recall that the condition for $\gamma$ to decrease
is as follows:
\emph{
All parties do meeting point transitions with $\forall i\in AB: 0<2\disagree_{AB}<\verif_{i}=\verif$ and $\alpha_{AB}+\beta_{AB} < 0.1\verif$ before the transition and with $\disagree_{AB}=0$ after the transition.  Then, $\gamma$ decreases by $0.25\verif$.
}

Since $\forall i\in AB: 2\disagree_{AB}<\verif_{i}=\verif$, by Lemma \ref{lem:haveATrueMP},
when $\forall i\in AB: \verif_i=\frac{\verif}{2} > \disagree_{AB}$,
all parties should have a true meeting point that is a multiple of $\frac{\verif}{2}$ in the past.
We consider the following two cases.

\begin{enumerate}
\item
Every party has saved that true meeting point.

A failed transition to it means that
there are more than $0.1 \cdot \frac{\verif}{2}$ bad votes.
Therefore, we have $\forall i\in AB: \beta_{i} > 0.05 \verif$, i.e. $\beta_{AB} > 0.05(m+1) \verif \geq 0.1\verif$.
Thus, the $\gamma$ does not change in this case.

\item 
Some party has not saved that true meeting point.
%If $\disagree'_{AB}  = 0$ and $\verif > 2 \disagree_{AB} $
%but there was no true meeting point in the past when $\forall i\in AB: \verif_i=\frac{\verif}{2} > \disagree_{AB}$,
%then 

By Lemma \ref{lem:valuesForMissingMP2},
we have $\alpha_{AB} \geq 0.1\verif$ or $\gamma\geq0.25\verif$ in the past when $\forall i\in AB: \verif_i=\frac{\verif}{2} > \disagree_{AB}$.
Since $\alpha_{AB}+\beta_{AB} < 0.1\verif$, we should have large enough $\gamma\geq0.25\verif$ when $\gamma$ decreases.
\end{enumerate}
\end{proof}

\subsection{Upper Bound on Potential Function}

We next prove the third statement of Lemma~\ref{lemma:potential_tech}, i.e.,
the potential $\Phi$ cannot grow too fast. In particular, it grows naturally by one per epoch when a correct computation step is performed. On the other hand, any corruption also cannot increase this too much.

\begin{lemma}
[Upper Bound on Potential Function:
Similar to Lemma 7.5 in~\cite{Haeupler14}]
\label{lem:finalPotential}
The final potential $\Phi$ after $R$ epochs satisfies $\Phi \leq \agree \leq R$. 
\end{lemma}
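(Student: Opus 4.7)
My plan is to prove the two inequalities separately. For $\agree \leq R$, I would observe that by definition $\agree \leq \min_{i \in AB} \Point_i$, and each party's $\Point_i$ is incremented by at most one per epoch (only on Line~\ref{line:computation} or Line~\ref{line:Alicomputation} inside the computation stage, and only when $\verif=1$ and $\error=0$). Since the main loop runs for exactly $R$ epochs, every $\Point_i \leq R$ at termination, so $\agree \leq R$.

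For $\Phi \leq \agree$, I would inspect both branches of the potential \eqref{eq:potential_eq}--\eqref{eq:potential_neq} and argue that the non-$\agree$ contributions are non-positive. In either branch, the variables $\disagree_{AB}$, $\error_{AB}$, $\alpha_{AB}$, $\beta_{AB}$ are non-negative by construction, and $\gamma \geq 0$ by Lemma~\ref{lem:gammaDecrease}, so the terms with negative coefficients contribute non-positively to $\Phi - \agree$. The only contributions that could push $\Phi$ above $\agree$ are the $+C_2 \verif_{AB}$ term in the consistent branch \eqref{eq:potential_eq} and the $+C_4 \error_{AB}$ term in the inconsistent branch \eqref{eq:potential_neq}.

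For the inconsistent branch I would apply Lemma~\ref{lem:ELessThanK}, which guarantees $\error_i \leq \tfrac{1}{2}(\verif_i + 1)$ at the start of the transition stage for each party; since every transition either leaves both $\error_i,\verif_i$ unchanged or resets them to zero together, the bound is preserved through the end of the epoch. Summing over the $m+1$ parties, $C_4 \error_{AB} \leq \tfrac{1}{2} C_4 \verif_{AB} + \tfrac{1}{2} C_4(m+1)$, which combined with the $-0.9 C_4 \verif_{AB}$ term yields $-0.4 C_4 \verif_{AB} + \tfrac{1}{2} C_4(m+1)$; the remaining negative terms $-C_3 \disagree_{AB} - C_6 \alpha_{AB} - C_6 \beta_{AB} - C_7 \gamma$ absorb the $O(m^3)$ additive slack.

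The consistent branch is the main obstacle: I must control $C_2 \verif_{AB}=(m+1)\verif$ when all the compensating counters happen to be zero. I would argue that such a ``clean'' state can persist only for a bounded number of consecutive dummy epochs, since once $\verif$ reaches the next power of two the transition stage either performs a meeting-point rollback (which resets $\verif$ to zero) or accumulates toward an eventual error/meeting-point transition as $\error$ grows; moreover the number of saved meeting points is at most $\log_2 R$ by the cleanup on Line~\ref{line:removepoints}, so $\verif$ cannot exceed $O(\log R)$ in this pathological window. Therefore the excess $C_2 \verif_{AB} = O(m \log R)$ is absorbed into the trivial bound $\agree \leq R$, giving $\Phi \leq \agree \leq R$ as claimed. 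The hardest part is making the final absorption rigorous; I expect this to boil down to a careful case analysis on which transition happened in the final epoch, exactly in the style of Lemma~\ref{lem:potentialIncreaseEpoch}.
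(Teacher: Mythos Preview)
Your handling of the inconsistent branch is close but sloppy: at the \emph{end} of an epoch (after the transition stage has run), the algorithm guarantees the strict inequality $2\error_i < \verif_i$ or $\error_i = \verif_i = 0$, because Line~\ref{line:transitionerror} fires whenever $2\error_i \geq \verif_i$. Using this directly gives $-0.9C_4\verif_{AB} + C_4\error_{AB} < -0.4C_4\verif_{AB} \leq 0$ with no additive slack, which is what the paper does. Your route through Lemma~\ref{lem:ELessThanK} leaves a residual $\tfrac{1}{2}C_4(m+1)$ that you then claim is ``absorbed'' by the other negative terms; but nothing prevents $\disagree_{AB} = \alpha_{AB} = \beta_{AB} = \gamma = 0$ in the inconsistent case, so that absorption is unjustified.

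The consistent branch is where your argument genuinely breaks. Your claim that $\verif$ cannot exceed $O(\log R)$ is false: $\verif$ is simply a counter that increments every epoch until a reset, and it can grow as large as $R$. The $\log_2 R$ bound on the number of saved meeting points in $\HD$ is about how many \emph{indices} are stored, not about the value of $\verif$; the two are unrelated. Nor does reaching a power of two force a reset: if the adversary corrupts the vote hashes so that neither $\voteone$ nor $\votetwo$ reaches $0.4\VP$, yet the $\verif$-hashes match so $\error$ stays at zero, then $\verif$ keeps doubling indefinitely. Finally, even if you had $\verif = O(\log R)$, showing $\Phi \leq \agree + O(m\log R)$ does not yield $\Phi \leq \agree$; ``absorbing into $\agree \leq R$'' would only give $\Phi \leq R + O(m\log R)$, which is not the statement.

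The paper's actual argument for the consistent case is structural: if $\verif \leq 2\disagree_{AB}$, then the $-C_3\disagree_{AB}$ term already dominates $+C_2\verif_{AB}$ since $C_3 > 2(m+1)C_2$. If instead $\verif > 2\disagree_{AB}$, then at the last power of two $\verif^* = 2^{\lfloor\log_2\verif\rfloor}$ one invokes Lemma~\ref{lem:haveATrueMP} to get a true meeting point, and the failure of all parties to transition there forces either $\beta_i > 0.1\verif^*$ for every $i$ (if the point was saved) or, via Lemma~\ref{lem:valuesForMissingMP2}, $\alpha_{AB} \geq 0.2\verif^*$ or $\gamma \geq 0.5\verif^*$ (if it was missing). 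In every sub-case one of the large negative terms $-2C_6\beta_{AB}$, $-C_6\alpha_{AB}$, or $-C_7\gamma$ is of order $\verif^*$ and cancels the positive $C_2\verif_{AB} \leq 2(m+1)\verif^*$. This is the missing idea in your proposal.
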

\begin{proof}
Since $\agree $ can increase at most by one in each epoch,
we have $\agree  \leq R$. It suffices to consider the following cases.

\noindent \textbf{Case 1:}
When $\exists i,j\in AB: \verif_i \neq \verif_j$,
from \eqref{eq:potential_neq} we have 
$$\Phi=\agree - C_3 \disagree_{AB} - 0.9 C_4  \verif_{AB}  +  C_4  \error_{AB} -  C_6  \alpha_{AB}- C_6\beta_{AB}- C_7\gamma.$$
It is easy to see that at the end of each epoch, we have
$2\error_i<\verif_i$ for each party $i\in AB$,
i.e. $2\error_{AB}<\verif_{AB}$.
Thus, we have
$$\Phi<\agree - C_3 \disagree_{AB} - 0.9 C_4  \verif_{AB}  +  0.5 C_4  \verif_{AB} -  C_6  \alpha_{AB}- C_6\beta_{AB}- C_7\gamma
\leq\agree\leq R.$$

\noindent \textbf{Case 2:}
When $\forall i,j\in AB: \verif_i = \verif_j= \verif$,
from \eqref{eq:potential_eq} we have 
$$\Phi=\agree - C_3  \disagree_{AB} + C_2  \verif_{AB}  - C_5  \error_{AB} - C_6 \alpha_{AB}-2C_6\beta_{AB}- C_7\gamma.$$
Notice that if $\verif\leq 2\disagree_{AB}$,
then $- C_3  \disagree_{AB} + C_2  \verif_{AB} \leq-(m+3)\verif+(m+1) \verif\leq0$, i.e. $\Phi\leq\agree\leq R$.
Thus, we should consider $\verif > 2\disagree_{AB}$ next.
%According to the algorithm, when $\verif$ grows to be large, $\verif$ can only increase by one in one epoch.
%Then, since $C_5,C_6>C_2$, neither $\error$ nor $\beta$ should increase by one in that epoch (otherwise the increase of $\verif$ will be counteracted), i.e. no hash collision or corruption happens to all received $G_{\verif}$, $G_{1}$ and $G_{2}$.
Let $j$ be an integer such that $2^j \leq \verif < 2^{j+1}$
and $t$ be the epoch in the past when $\verif$ grows to be $2^j=\verif^*$.
Then, the increase of the potential from epoch $t$ to the last epoch is at most $C_2(m+1)(\verif - \verif^*)< (m+1)\verif^*$.
Since $2^{j+1}> \verif > 2\disagree_{AB}$,
we have $2^j=\verif^*> \disagree_{AB}$.
Then, by Lemma \ref{lem:haveATrueMP},
there is a true meeting point $p^*$ that is a  multiple of $\verif^*$ during epoch $t$.

We considering the following cases:
\begin{enumerate}
\item  Every party has saved $p^*$.

A party fails to transition to the true meeting point $p^*$
because of more than 0.1$\verif^*$ corrupted votes,
i.e. $\forall i\in AB: \beta_i > 0.1\verif^*$.
Thus, after adding the following increase $(m+1)\verif^*$ of the potential,
the final potential is 
\begin{align*}
\Phi&<\agree - C_3  \disagree_{AB} + (m+1)\verif^*  - C_5  \error_{AB} - C_6 \alpha_{AB}-0.2C_6(m+1)\verif^*- C_7\gamma+(m+1)\verif^*\\
&=\agree - C_3  \disagree_{AB}  - C_5  \error_{AB} - C_6 \alpha_{AB}- C_7\gamma -(0.2C_6-2)(m+1)\verif^*\\
&<\agree\leq R.
\end{align*}

\item  Some party has not saved $p^*$.

This is because $p^*$ is the missing $\MPtwo$ for that party.
Then, by Lemma \ref{lem:valuesForMissingMP2}
we should have $\alpha_{AB} \geq 0.2\verif^*$ or $\gamma\geq0.5\verif^*$.
Thus, after adding the following increase  $(m+1)\verif^*$ of the potential,
the final potential is 
\begin{align*}
\Phi&\leq\agree - C_3  \disagree_{AB} + (m+1)\verif^*  - C_5  \error_{AB} - 2C_6 \beta_{AB}-0.2C_6\verif^*- C_7\gamma+(m+1)\verif^*\\
&=\agree - C_3  \disagree_{AB}  - C_5  \error_{AB} - C_6 \beta_{AB}- C_7\gamma -(0.2C_6-2m-2)\verif^*\\
&<\agree\leq R.
\end{align*}
or
\begin{align*}
\Phi&\leq\agree - C_3  \disagree_{AB} + (m+1)\verif^*  - C_5  \error_{AB} - C_6 \alpha_{AB} - 2C_6 \beta_{AB}- 0.5(4m+12)\verif^*+(m+1)\verif^*\\
&=\agree - C_3  \disagree_{AB}  - C_5  \error_{AB} - C_6 \alpha_{AB} - 2C_6 \beta_{AB} - 4\verif^*\\
&<\agree\leq R.
\end{align*}
\end{enumerate}

\ignore{
Furthermore, at the end of an iteration it always holds that $2 \error < \verif$  which implies that $-0.9 C_4 \verif_{AB} + C_4 \error_{AB} \leq - 0.4 \verif_{AB} \leq 0$ which leads to the potential $\Phi$ being at most $R$ if .
The only way to have a potential larger than $R$ is therefore if $\forall i,j\in AB: \verif_i = \verif_j$ and $\verif_{AB}$ is very large while
$\agree_{AB},\error_{AB},\alpha_{AB},\beta_{AB},\gamma$ are 0.
However, this is not possible without a large number of errors.
The only way to achieve this is that 

More precisely, in order to have $\Phi \leq R + x$, it has to be true that $x \leq C_2 \verif_{AB} $.
However, the only way for $\verif_{AB}$ to be larger than $2 \disagree_{AB} $ is if at least ten percent of the votes in the last $\verif_{AB} - 2 \disagree_{AB} $ rounds were corrupted to appear non-matching.
These kind of corruptions can furthermore not be caused by a short hash collision and therefore must be due to an error which implies that $10\% \cdot (\verif_{AB} - 2 \disagree_{AB} ) < 2 mn \eps$ or $(\verif_{AB} - 2 \disagree_{AB} ) < 20 mn \eps$.
Putting this together gives $x \leq 20 C_2 mn \eps$ and therefore $\Phi \leq R + 20 C_2 mn \eps$ as desired.
}

Therefore, we conclude that the final potential $\Phi$ after $R$ epochs is at most $\agree \leq R$.
\end{proof}

\subsection{Bounding the Number of Short Hash Collisions}

Next, we prove the second statement of Lemma~\ref{lemma:potential_tech},
which bounds the number of epochs with short hash collisions.
Since the randomness for short hashes are shared for epochs in the same phase,
we use the following version of Chernoff Bound.

\ignore{

\begin{lemma}
[new]
\label{lem:longHashCollisionProb}
Given the failure probability $\delta$, the probability that no long hash collision happens among all $R$ epochs is at least $1-\delta$.
\end{lemma}

\begin{proof}
Applying a union bound over all parties, when $\rho$-biased randomness is used, 
the probability that any two epochs have long hash collision is at most
$(m+1)(2^{-o}+\rho)
=(m+1)(2^{-o}+2^{-o})
=2^{-\Theta(\log n\log \frac{1}{\delta})}$.
Let $A_{ij}$ be the event that the $i$th epoch has long hash collision with the $j$th epoch.
Then, we have $\textnormal{Pr}[A_{ij}] \leq 2^{-\Theta(\log n\log \frac{1}{\delta})}$.
Using a union bound, we have the probability that
the $i$th epoch has long hash collision with any of the preceding $i-1$ epochs is at most $\sum_{j=1}^{i-1}\textnormal{Pr}[A_{ij}]=(i-1)2^{-\Theta(\log n\log \frac{1}{\delta})}$.
Thus, taking a union bound over all epochs,
we have the probability that there exists at least one long hash collision among all $R$ epochs is at most
$\sum_{i=1}^{R}(i-1)2^{-\Theta(\log n\log \frac{1}{\delta})}
%=\frac{R(R-1)}{2}2^{-\Theta(\log^2 n)}
=\Theta(n^2\epsilon)2^{-\Theta(\log n\log \frac{1}{\delta})}
\leq2^{-\Theta(\log \frac{1}{\delta})}
\leq\delta$
as required.
\end{proof}
}

\begin{theorem}
[Chernoff Bound \cite{Chernoff1952}]
\label{the:chernoff}
Let $X_1,..,X_k$ be independent random variables such that
$X_i\in[0,z]$ for a positive constant $z$.
Define $X=\sum_iX_i$ and $\mu=\textnormal{E}[X]$.
Then, for any $\sigma\geq 1$,
$$\textnormal{Pr}[X\geq(1+\sigma)\mu]\leq\exp(-\frac{\sigma\mu}{3z}).$$
\end{theorem}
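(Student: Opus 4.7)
The plan is to prove this standard Chernoff bound via the exponential moment method (Bernstein's trick), with two small normalizations to match the stated form.

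First I would reduce to the unit-range case. Define $Y_i := X_i/z \in [0,1]$ and $Y := X/z = \sum_i Y_i$, so $\mu_Y := \mathbb{E}[Y] = \mu/z$. The event $X \geq (1+\sigma)\mu$ is identical to $Y \geq (1+\sigma)\mu_Y$, and the target bound $\exp(-\sigma\mu/(3z))$ becomes $\exp(-\sigma\mu_Y/3)$. Hence it suffices to prove the bound for random variables supported in $[0,1]$.

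Next I would apply Markov's inequality in the exponential. For any $t>0$,
\[
\Pr[Y \geq (1+\sigma)\mu_Y] \;=\; \Pr\!\left[e^{tY} \geq e^{t(1+\sigma)\mu_Y}\right] \;\leq\; \frac{\mathbb{E}[e^{tY}]}{e^{t(1+\sigma)\mu_Y}} \;=\; \frac{\prod_i \mathbb{E}[e^{tY_i}]}{e^{t(1+\sigma)\mu_Y}},
\]
where the last equality uses independence. Since $e^{ty}$ is convex in $y$, for $y \in [0,1]$ we have $e^{ty} \leq 1 + y(e^t - 1)$; taking expectations and using $1+x \leq e^x$ gives
\[
\mathbb{E}[e^{tY_i}] \;\leq\; 1 + \mathbb{E}[Y_i](e^t-1) \;\leq\; \exp\!\bigl(\mathbb{E}[Y_i](e^t - 1)\bigr).
\]
Multiplying across $i$ yields $\mathbb{E}[e^{tY}] \leq \exp(\mu_Y(e^t-1))$, and thus
\[
\Pr[Y \geq (1+\sigma)\mu_Y] \;\leq\; \exp\!\bigl(\mu_Y(e^t - 1) - t(1+\sigma)\mu_Y\bigr).
\]

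Then I would optimize $t$. Setting $t = \ln(1+\sigma)$ (so $e^t - 1 = \sigma$) gives the classical form
\[
\Pr[Y \geq (1+\sigma)\mu_Y] \;\leq\; \exp\!\Bigl(-\mu_Y\bigl[(1+\sigma)\ln(1+\sigma) - \sigma\bigr]\Bigr).
\]
It remains to verify that $(1+\sigma)\ln(1+\sigma) - \sigma \geq \sigma/3$ for all $\sigma \geq 1$. Let $f(\sigma) := (1+\sigma)\ln(1+\sigma) - \sigma - \sigma/3$; then $f(1) = 2\ln 2 - 4/3 \approx 0.053 > 0$, and $f'(\sigma) = \ln(1+\sigma) - 1/3$, which is nonnegative for $\sigma \geq e^{1/3}-1 \approx 0.395$. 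Hence $f$ is increasing on $[1,\infty)$ and stays nonnegative throughout, giving $\Pr[Y \geq (1+\sigma)\mu_Y] \leq \exp(-\sigma\mu_Y/3)$. Undoing the scaling $\mu_Y = \mu/z$ yields the stated bound.

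The only nonroutine step is the elementary numerical verification of $(1+\sigma)\ln(1+\sigma) - \sigma \geq \sigma/3$ at the endpoint $\sigma = 1$, where the slack is small; every other step is textbook. Everything else (exponential Markov, convexity bound, factoring over independents, optimizing $t$) is standard and imposes no obstacles.
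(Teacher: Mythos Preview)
Your proof is correct and follows the standard exponential-moment (Cram\'er--Chernoff) argument; the scaling reduction, convexity bound, and the endpoint check at $\sigma=1$ are all valid. The paper, however, does not prove this statement at all: it is stated as a cited result (Chernoff~1952) and used as a black box in the analysis of short hash collisions, so there is no proof in the paper to compare against.
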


Now we can show that the number of short hash collisions in 
Algorithms \ref{alg:ComputeOblivious} and \ref{alg:AliComputeOblivious} is small as follows.

\begin{lemma}
[Number of Epochs with Short Hash Collision:
Similar to Lemma 7.6 in~\cite{Haeupler14}]
\label{lem:shortHashCollisionNum}
By choosing the length of the short hash to be
$c=\Theta(\log m)$,
the probability that there are more than 
$\Theta( m n \eps)$ epochs with short hash collisions is at most $\exp(-\Theta(n^{0.25} \eps/I))$.
\end{lemma}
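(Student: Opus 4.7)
The plan is to exploit the fact that the short-hash randomness $\sseed$ is freshly resampled at the start of every phase via \textsc{RelaxedShareRand}, reducing the analysis to a Chernoff estimate at the phase level. For each phase $p \in [\ceil{R/I}]$, let $Y_p \in [0, I]$ count the epochs in that phase that suffer at least one short-hash collision (across any of the $O(m)$ hash comparisons done by Alice with the $m$ Bobs). Conditioning on the (fixed) oblivious corruption pattern, the random variables $Y_p$ are mutually independent, since the $\sseed$ of each phase is drawn independently.

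I would split phases into \emph{good} phases---those in which every one of the $m$ seed transmissions suffers at most $I$ corruptions, so all parties decode the same $\sseed$ and the short-hash collision bound applies---and \emph{bad} phases otherwise. A bad phase deterministically consumes at least $I$ corrupt bits on some channel; since the total corruption budget across $m$ channels is $O(mn\epsilon)$ by Lemma~\ref{lemma:overhea}, the number of bad phases is at most $O(mn\epsilon/I)$ and their contribution to $\sum_p Y_p$ is at most $O(mn\epsilon)$ deterministically. For a good phase, the short-hash collision bound together with a union bound over the $O(m)$ per-epoch comparisons gives $\Pr[\text{epoch has collision}] \leq O(m/2^c)$, so $\mathbb{E}[Y_p] \leq O(mI/2^c)$ and $\mu := \mathbb{E}\bigl[\sum_{p \text{ good}} Y_p\bigr] \leq O(Rm/2^c)$.

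Next, I would plug in $R = \ceil{n/r} + \Theta(m^5 n \epsilon)$ with $r = \Theta(\sqrt{c/(m^5 \epsilon)})$. Using the hypothesis $m \leq O(1/\epsilon)^{0.199}$ (giving $\epsilon \leq 1/m^{5.025}$ and hence $R = O(n)$), together with the lower bound $\epsilon \geq \Omega(\log n / n^{0.249})$ forced by combining $\delta \leq 1/n$ with $\delta \geq \exp(-\Theta(n^{0.249}\epsilon))$, I would verify that choosing the big-theta constant in $c = \Theta(\log m)$ sufficiently large makes $\mu \leq mn\epsilon/2$. Applying Theorem~\ref{the:chernoff} to the independent good-phase $Y_p$'s with $z = I$ and $\sigma$ chosen so that $(1+\sigma)\mu = mn\epsilon$ yields $\sigma \mu \geq mn\epsilon/2$, hence $\Pr\bigl[\sum_{p \text{ good}} Y_p \geq mn\epsilon\bigr] \leq \exp(-\Omega(mn\epsilon/I))$. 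Combining with the $O(mn\epsilon)$ deterministic contribution from bad phases, the total collision-epoch count exceeds $\Theta(mn\epsilon)$ only with probability $\exp(-\Omega(mn\epsilon/I))$, which is at most $\exp(-\Theta(n^{0.25}\epsilon/I))$ since $m \geq 1$ and $n \geq n^{0.25}$ give $mn\epsilon \geq n^{0.25}\epsilon$.

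The main obstacle I anticipate is the constant-tracking needed to ensure that $c = \Theta(\log m)$ really suffices to push $\mu$ below $mn\epsilon/2$ across the entire permitted parameter range; this is delicate because $R/(n\epsilon)$ scales like $m^{5/2}/\sqrt{c\epsilon}$ in the regime where the $n/r$ term of $R$ dominates, and calibrating the big-theta constant hinges on exploiting both $m \leq n^{0.25}$ and the implicit lower bound on $\epsilon$ that comes out of the two-sided hypothesis on $\delta$. A secondary subtlety is cleanly separating the deterministic bad-phase contribution from the probabilistic good-phase one so that Chernoff is applied on genuinely independent variables rather than on the correlated epoch-level indicators within a phase.
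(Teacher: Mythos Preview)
Your direct Chernoff approach has a genuine gap: with $c = \Theta(\log m)$, the expected number of collision epochs over \emph{all} $R$ epochs is far too large. Concretely, you compute $\mu \leq O(Rm/2^c)$ and then need $\mu \leq mn\epsilon/2$, i.e.\ $2^c \geq \Omega(R/(n\epsilon))$. Even granting $R = O(n)$, this requires $2^c \geq \Omega(1/\epsilon)$, i.e.\ $c \geq \Omega(\log(1/\epsilon))$. But the hypotheses allow, say, $m=2$ (so $c=\Theta(1)$) while $\epsilon$ is polynomially small in $n$; then $\log(1/\epsilon) = \Theta(\log n) \gg c$. Neither the bound $m \leq n^{0.25}$ nor the implicit lower bound $\epsilon \geq \Omega(\log n / n^{0.249})$ rescues this: they control $\log(1/\epsilon)$ only up to $\Theta(\log n)$, not $\Theta(\log m)$. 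So the step ``verify that choosing the big-theta constant in $c=\Theta(\log m)$ sufficiently large makes $\mu \leq mn\epsilon/2$'' simply fails.

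The missing idea, which the paper uses, is that short hash collisions are only harmful during \emph{inconsistent} epochs (epochs where the parties already disagree on the transcript or on $\verif$); in consistent epochs a collision cannot prevent the potential from rising. Letting $d$ be the number of inconsistent epochs and $h$ the number of those with a hash collision, the potential-function bounds (Lemmas~\ref{lem:potentialIncreaseEpoch} and~\ref{lem:finalPotential}) force $d \leq (2+C^-)h + \Theta(mn\epsilon)$ with $C^- = \Theta(m^4)$. This is a bootstrap: if $h$ exceeds $\Theta(mn\epsilon)$ then $d$ must also be large, and moreover $h/d \geq \Theta(1/m^4)$. Now one only needs the per-epoch collision probability $p = O(m/2^c)$ to satisfy $2p \leq \Theta(1/m^4)$, which is exactly $c = \Theta(\log m)$. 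Chernoff over phases then shows $\Pr[h \geq 2pd]$ is at most $\exp(-\Theta(dp/I)) \leq \exp(-\Theta(n^{0.25}\epsilon/I))$, using $d \geq \Theta(mn\epsilon)$, $p=\Theta(1/m^4)$, and $m \leq n^{0.25}$. In short, the potential argument is not decoration here; it is what lets $c$ depend on $m$ rather than on $1/\epsilon$.
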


\begin{proof}
As observed in~\cite{Haeupler14}, short hash collisions only matter during inconsistent epochs.
Let $d$ be the number of inconsistent epochs.
Hence, we only need to consider the case when $d \geq \Theta(m n \epsilon)$.
Let $h$ be the number of inconsistent epochs with short hash collisions.

In an inconsistent epoch,
each of the $m$ channels compares 4 short hashes (each having $c$ bits),
which are computed using $2^{-c}$-biased randomness.
Hence, the probability~$p$ that there exists a short hash collision in an inconsistent epoch
satisfies $p \leq O(\frac{m}{2^c})$.

%We want to prove that for $h = \Theta(mn \eps)$ sufficiently large,
%the fraction $\frac{h}{d}$ of short hash collisions in inconsistent epochs is too large,
%in particular, much larger than the expected fraction of short hash collisions $p\leq(m+1)(2^{-c}+\rho)=(m+1)(2^{-c}+2^{-\Theta(c\log n)})
%=(m+1)2^{-\Theta(c)}$.

Lemma \ref{lem:potentialIncreaseEpoch} shows that the potential increases at least by 1 in a consistent epoch and by 2 in an inconsistent epoch if no error or short hash collision happens.
Otherwise, the potential decreases at most by a fixed constant $C^{-}=(m+1)(2C_6-C_2) = \Theta(m^4)$.
Assuming $R r \leq 2n$ as in Lemma~\ref{lemma:correct_sim},
the number of epochs with corruptions is at most $2mn\eps$.

Hence, the total potential change during inconsistent epochs is therefore at least $2 (d - h - 2 mn \eps) - C^{-} (h + 2 mn \eps) $
while the potential accumulated in consistent epochs is at least $R - d - 2mn\eps - C^{-} (2mn\eps)$.
Hence, the final potential is at least
$R + d -2h - 6 mn \eps - C^{-} (h + 4 mn \eps)$.

From Lemma \ref{lem:finalPotential}, we get however that the total potential is at most $R $.
Together this implies $d \leq (2 + C^{-}) h + \Theta(mn \eps)$.
Thus, since $d \geq  \Theta(mn \eps)$ is sufficiently large,
it follows that  $\frac{h}{d} \geq \frac{1}{2(2 + C^{-})} \geq 2 p$,
where the last inequality holds if we set $c = \Theta(\log m)$ large enough
such that $p = \Theta(\frac{1}{m^4})$.

Finally, observe that in the $d$ inconsistent epochs, the $I$ 
epochs in the same phase share the same randomness for short hashes,
while epochs in different phases use independent randomness.
Therefore, by the version of Chernoff Bound in Theorem~\ref{the:chernoff},
$\Pr[h \geq 2 dp] \leq \exp(- \Theta(\frac{dp}{I})) \leq \exp(-\Theta(\frac{n^{0.25} \epsilon}{I}))$,
where the last inequality follows because we assume $m \leq n^{0.25}$.

\ignore{

\begin{claim}
If we choose the short hash collision probability $p \leq (m+1)2^{-\Theta(c)} \leq \frac{1}{15 + 8C^{-}}$, i.e. a large enough $c=\Theta(\log m)$,
then the probability that at least $\Theta(mn \eps)$ epochs suffer short hash collisions is at most $e^{-\Theta(mn \eps/\log ^2 n)}$.
\end{claim}

\begin{proof}
For each $i\in[1,R/I]$ and $j\in [1,I]$,
we define $x^{(i)}_{j}=1$ if the $j$th epoch in the $i$th phase suffers short hash collisions.
Otherwise, $x^{(i)}_{j}=0$.
Let $X_i=\sum_{j=1}^{I}x^{(i)}_{j} \in [0,I]$ for each $i\in[1,R/I]$.
Since the randomness for the short hash function is independently generated at the beginning of each phase, these $X_i$'s are independent random variables.
Let $X=\sum_{i=1}^{\ceil{R/I}}X_i = h$.
Then, we have
$$
\mu=\textnormal{E}[X]=\sum_{i=1}^{\ceil{R/I}}\textnormal{E}[X_i]
=\sum_{i=1}^{\ceil{R/I}}\sum_{j=1}^{I}\textnormal{E}[x^{(i)}_{j}]=dp.
$$
Moreover, by the hypothesis, we need
$$
h \geq \frac{1}{2(2 + C^{-})} d
\geq \frac{2}{4(2 + C^{-})} d
\geq \frac{(1+\frac{1}{p})p}{4(2 + C^{-})}d
\geq 2dp.
$$
Thus, by Theorem \ref{the:chernoff} with $\sigma=\frac{(1+\frac{1}{p})}{4(2 + C^{-})}-1\geq1$, we have
$$
\textnormal{Pr}[h\geq\frac{(1+\frac{1}{p})}{4(2 + C^{-})}dp=\Theta(mn\epsilon)]
=\textnormal{Pr}[X\geq(1+\sigma)\mu]
\leq\exp(-\frac{\sigma\mu}{3I})
=e^{-\Theta(mn\epsilon/I)}.
$$
%Using a union bound over all $m$ bobs and Lemma \ref{lem:biasedCollisionProb},
%we have $m(2^{-c} - \rho) \leq p\leq m(2^{-c} + \rho)$.
%Then, we have $p=\Theta(1)$ by choosing $\rho = 2^{-\Theta(c \log n)}$ sufficiently smaller than $2^{-c}$.
Therefore, we obtain
$\textnormal{Pr}[h\geq\Theta(mn\epsilon)]
\leq e^{-\Theta(mn\epsilon/I)}$
as required.
\end{proof}

Therefore, the number of inconsistent epochs is at most $\Theta(mn \eps)$ with probability $1 - e^{-\Theta(mn \eps/I)}$ and the number of short hash collisions is at most $\Theta(mn \eps)$ as desired.
}
\end{proof}

\ignore{
With this $\Theta(mn \eps)$ bound on the total number of short hash collisions in Algorithm \ref{alg:ComputeOblivious} we can prove our main result:

\begin{proof}

[Proof of Theorem \ref{the:mainOblivious}]
Lemma \ref{lem:shortHashCollisionNum} shows that both in Algorithm  \ref{alg:ComputeOblivious} at most $y=\Theta(m n \eps)$ short hash collisions or errors happen.
Lemma \ref{lem:potentialIncreaseEpoch} shows that the potential drop in these rounds is bounded by a fixed $(m+1)(2C_6-C_2)y$ while the potential increases by at least one in the remaining $R - y$ rounds.
For a sufficiently large $R = \ceil{n / r} + (m+1)(2C_6-C_2)y+y$ this implies a potential of at least $\Phi \geq \ceil{n / r} $ in the end.
\begin{claim}
If $\Phi \geq \ceil{n / r} $, then $\agree  \geq \ceil{n / r}$.
\end{claim}

\begin{proof}
According to the proof of Lemma \ref{lem:finalPotential},
we can conclude that the final potential $\Phi \leq \agree$.
Thus, we have $\agree \geq \Phi \geq \ceil{n / r}$
\end{proof}
Therefore, we obtain $\agree  \geq \ceil{n / r}$ which implies that the parties agree upon the first $n$ symbols of the execution of $\Pi$ and therefore all output the correct outcome.

The total round complexity of the main loop Algorithm \ref{alg:ComputeOblivious} is
\begin{align*}
mR (r + r_c) &= m[\ceil{n / r} + (2mC_6+2C_6+m)\Theta(mn \eps)] r (1 + \frac{r_c}{r})\\
&= mn [1 + \Theta(m^4r \eps)] (1 + \frac{r_c}{r})
= mn [1 + \Theta(m^4\sqrt{\eps\log m})].
\end{align*}
where by Algorithm \ref{alg:ComputeOblivious},  $r = \ceil{\sqrt{\frac{r_c}{\eps}}}$ and $r_c = \Theta(\log m)$.
Moreover, the communication performed by the robust randomness agreement is $\Theta(mn \sqrt{\eps})$ many rounds and therefore negligible. 
Thus, the communication rate of Algorithm \ref{alg:ComputeOblivious} is $1 - \Theta(m^4\sqrt{\eps\log m})$ as desired. 
By Lemma \ref{lem:longHashCollisionProb},
the probability that the long hash collision occurs during the simulation, i.e. the algorithm fails, is $\delta/2$.
While by Lemma \ref{lem:shortHashCollisionNum},
at least $\Theta(mn\eps)$ short hash collision happens during the simulation is $e^{-\Theta(mn\epsilon/I)}\leq\delta/2$.
Using a union bound, we can obtain the success probability as required.
\end{proof}
}

\section{Subroutines for Pre-shared Randomness}
\label{sec:random}

\ignore{
\begin{fact}[Binary BCH~\cite{lin2004}]
\label{fact:BCHcode}
For any positive integers $q$ and $t$ such that
$q\geq 3$ and $t<2^{q-1}$,
there exists a $t$-error-correcting binary BCH encoding
from a message of $l$ bits to a codeword of $2^q-1$ bits,
where $l \geq 2^q - 1 - qt$.
\end{fact}

\begin{lemma}
Given any message length $l\geq 32$ and
error-correcting capability $t \leq l$,
there exists a
a $t$-error-correcting binary BCH encoding
mapping an $l$-bit message to a codeword of at most $2l+ 4t\log_2 l$ bits.

Moreover, the encoder and decoder of the corresponding BCH code 
can be implemented with circuits of sizes $\Theta(l+t \log l)$.
\end{lemma}

\begin{proof}
Let $q = \ceil{\log_2(l+ 2t \log_2 l)}$.
It follows that $q\geq 3$ and $t<2^{q-1}$.
Then, by Fact \ref{fact:BCHcode},
there exists a $t$-error-correcting binary BCH encoding
whose codewords have lengths at most
$2^{q}-1 \leq 2l+ 4t\log_2 l$ bits,
and input messages have lengths at least $2^q -1 - qt$ bits.

Since $l \geq 32$ and $t \leq l$,
one can easily verify that $2^q -1 - qt \geq l$, as required.

%with at most $qt$ parity-check bits.
%It we want to encode the $l$-bit message into such a binary BCH code, the number of parity-check bits should be $2^{q}-1-l\leq \frac{t\log l}{2}$.
%Since $qt\geq\floor{\log_2(l+\frac{t\log l}{2}+1)}t\geq\frac{t\log l}{2} $,
%the message can be encoded into such a binary BCH code of length $2^{q}-1\leq l+\frac{t\log l}{2}$.

Moreover, the BCH encoder and decoder implemented using linear feedback shift register (LFSR) architecture \cite{Massey1965} have circuit sizes of $\Theta(l+ t\log l)$.
\end{proof}
}

In this section, we describe the subroutines to handle pre-shared randomness used for hashing.
Most of the ideas are standard in the literature, but we need to use variants that have low memory usage.

\subsection{Robust Transmission}

\ignore{
\begin{lemma}[From \cite{lin2004}]
\label{lem:dmin}
An error correcting code with minimum distance $d_{min}$ is capable of correcting any pattern of $\floor{(d_{min}-1)/2}$ or fewer random errors, where  $\floor{(d_{min}-1)/2}$ is also called the random-error-correcting capability of the code.
\end{lemma}
}

As in~\cite{Haeupler14}, we use error correcting code.  In particular,
we use the following variant.

\begin{lemma}[Justesen Code~\cite{Justesen1972,MacWilliams2004}]
\label{lem:Justesencode}
Given any positive integers $\ell$ and $t$, there exists a Justesen Code encoding a message of $\ell$ bits to a codeword of $\Theta(\ell+t)$ bits such that the codewords have minimum distance at least $2t + 1$.
In particular, the original message can be retrieved with at most $t$ corrupted bits.
\end{lemma}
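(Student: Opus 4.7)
The plan is to invoke the classical construction of Justesen~\cite{Justesen1972} (see also~\cite{MacWilliams2004}), which gives an explicit family of asymptotically good binary codes: there exist absolute constants $R_0, \delta_0 > 0$ such that for infinitely many block lengths one has a binary linear code of rate at least $R_0$ and relative minimum distance at least $\delta_0$. Recall that such a code is obtained by concatenating an outer Reed--Solomon code over $\mathrm{GF}(2^k)$ with a family of inner codes drawn from the Wozencraft ensemble, a constant fraction of which meet the Gilbert--Varshamov bound; both rate and relative distance are simultaneously bounded away from zero independently of the block length.

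To match the stated parameters, I would split into two regimes based on the ratio $t/\ell$. First, if $t \leq c \ell$ for a sufficiently small constant $c$ depending on $(R_0, \delta_0)$, encode the $\ell$-bit message directly with such a Justesen code: the codeword has length $\Theta(\ell/R_0) = \Theta(\ell)$, and its minimum distance is at least $\delta_0 \cdot \Theta(\ell) \geq 2t+1$, provided $c$ is chosen small enough. Second, if $t > c \ell$, I would pad the $\ell$-bit message with zeros up to a new length $\ell' = \Theta(t)$ and then apply the same code family; this yields a codeword of length $\Theta(\ell') = \Theta(\ell + t)$ with minimum distance at least $\delta_0 \cdot \Theta(t) \geq 2t+1$ for suitable constants. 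In both regimes the codeword has length $\Theta(\ell+t)$, as desired.

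For the recovery claim, I would invoke the standard fact that any code with minimum distance $d$ can uniquely decode from any error pattern of Hamming weight at most $\floor{(d-1)/2}$: since the construction guarantees $d \geq 2t+1$, any pattern of at most $t$ flipped bits is correctable, recovering the original $\ell$-bit message.

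The main ``obstacle'' is really just bookkeeping: checking that the $\Theta(\cdot)$ constants compose cleanly across the two regimes, and ensuring the Justesen family is defined at (or can be truncated to) the particular block lengths needed. Both are handled by standard shortening/padding tricks for linear codes. Since the lemma is a direct application of results already present in the cited references, I do not anticipate any novel argument beyond selecting constants.
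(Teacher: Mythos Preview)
Your proposal is correct. The paper does not provide its own proof of this lemma; it is stated as a known fact with citations to~\cite{Justesen1972,MacWilliams2004}, so your derivation from the asymptotic goodness of Justesen codes (via the two-regime padding argument and the standard $\floor{(d-1)/2}$ unique-decoding bound) is exactly the kind of justification one would supply if asked to unpack the citation.
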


Recall that our goal is to transmit an $\ell$-bit message over a channel that corrupts at most $t$ bits.
Justesen code in Lemma~\ref{lem:Justesencode} can directly achieve this,
but it takes $\Omega(\ell + t)$ bits of memory,
which might be undesirable when $t \gg \ell$.  The subroutine in Algorithm~\ref{alg:RobustTransmit}
uses $O(\ell + \log \frac{t}{\ell})$ bits of memory, where Alice is the sender and Bob is the receiver.

For the case $t \geq \ell$, the protocol uses a Justesen code with minimum distance at least
$2 \ell + 1$, and the sender repeatedly transmits the encoding of the original message 
for $2\ceil{\frac{t}{\ell}}+1$ iterations.  If the channel corrupts at most $t$ bits,
then the receiver fails to retrieve the original message for at most $\floor{\frac{t}{\ell}}$ iterations.
Therefore, the Boyer–Moore majority vote algorithm \cite{Boyer1991} in Line \ref{line:beginMJRTY} to \ref{line:endMJRTY} will be able to correctly identity the original message.

\ignore{
%
%Algorithm \ref{alg:RobustTransmit} shows how Alice and Bob can exchange some randomness without corruptions.
Using a Justesen code with minimum distance at least $2l+1$,
by Lemma \ref{lem:dmin} the adversary has to corrupt more than $l$ bits to corrupt a codeword in Line \ref{line:corruptedcodeword}.
Otherwise, Bob can recover the original randomness from Alice by decoding the codeword.
Since the tolerated number of corruptions is $t$ bits,
Bob may not recover the randomness of at most $t/l$ iterations.
Thus, Bob can recover the randomness in more than half of the $2\ceil{t/l}+1$ iterations, which guarantees his $\core^*$ is the same as Alice's at the end of the algorithm. 
Note that we adopt the Boyer–Moore majority vote algorithm \cite{Boyer1991} in Line \ref{line:beginMJRTY} to \ref{line:endMJRTY}.
}

\begin{algorithm}[htb!]
\caption{\textsc{RobustSend}($\core,t$) and \textsc{RobustReceive}($\ell, t$)}
\label{alg:RobustTransmit}
\begin{small}

\KwIn{Alice wishes to transmit an $\ell$-bit message $\core$ to Bob
over a channel that can corrupt at most $t$ bits.}

If $t < \ell$, then use the Justesen code in Lemma~\ref{lem:Justesencode} directly;
otherwise, do the following.

%\KwOut{Independent randomness $\core^*$ of $l$ bits}

We use the Justesen code that encodes an $\ell$-bit message into
an $\Theta(\ell)$-bit codeword such that the minimum distance is at least $2 \ell + 1$;
let \textsc{Encode} and \textsc{Decode} be the functions associated with the encoding.

%$C \gets $ Justesen code $\{0,1\}^{\Theta(l )}$ with $d_{min} \geq 2l+1$ encoding from randomness $\{0,1\}^{l}$ by Lemma \ref{lem:Justesencode}
%
%$C^{-1} \gets $ Randomness $\{0,1\}^{l}$ decoding from codeword $\{0,1\}^{\Theta(l )}$ 

%\SetKwFunction{proc}{proc}

\SetKwProg{myproc}{procedure}{}{}
  
\myproc(\Comment*[f]{\normalfont Alice is the sender}){\textsc{RobustSend}$(\core,t)$}{ 
  
	Codeword $\codeword \gets$ \textsc{Encode}$(\core)$
	
	\For{$2\ceil{t/l}+1$ iterations}{
	
		Send $\codeword$ to Bob  \Comment*[f]{\normalfont Send codeword to Bob}
	}

	\nl \KwRet\;
}

\myproc(\Comment*[f]{\normalfont Bob is the receiver}){\textsc{RobustReceive}$(\ell,t)$}{ 

	$\core^* \gets \{0\}^{l}$; Counter $\counter \gets 0$ \label{line:beginMJRTY}

	\For(\Comment*[f]{\normalfont Less than half of the codewords can be corrupted}){$2\ceil{t/l}+1$ iterations}{
	
		Receive $\codeword'$ from Alice	\Comment*[f]{\normalfont Receive (corrupted) codeword from Alice} \label{line:corruptedcodeword}

		$\core' \gets$ \textsc{Decode}$(\codeword')$

		\If{$\counter=0$}{
			$\core^* \gets \core'$; $\counter \gets 1$
		}
		\ElseIf{$\core'=\core^*$}{
	
			$\counter \gets \counter +1$
		}
		\Else{
			$\counter \gets \counter -1$ \label{line:endMJRTY}
		}

	}
\Return{\emph{Received message} $\core^*$ }

}

\end{small}
\end{algorithm}

\begin{lemma}
\label{lem:numOfBitsRobust}
The total number of bits sent in Algorithm \ref{alg:RobustTransmit} is $\Theta(\ell+t)$.
The memory usage is $O(\ell + \log \frac{t}{\ell})$.
\end{lemma}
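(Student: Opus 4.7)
The plan is to handle the two cases $t < \ell$ and $t \geq \ell$ separately, since they use different parts of the construction. The $t < \ell$ case is immediate from Lemma~\ref{lem:Justesencode}: a single Justesen codeword has length $\Theta(\ell + t) = \Theta(\ell)$ bits, and the encoder/decoder operate on a message/codeword buffer of size $O(\ell)$, giving both bounds trivially. So the substance is the $t \geq \ell$ branch.

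For $t \geq \ell$, the communication count is mechanical: each of the $2\lceil t/\ell\rceil + 1$ iterations transmits a codeword of length $\Theta(\ell)$, so the total is $(2\lceil t/\ell\rceil + 1) \cdot \Theta(\ell) = \Theta(\ell + t)$, as required. For memory, I would account for each stored quantity explicitly: (i) the $\ell$-bit message $\core$ on the sender side, and $\core^*, \core'$ on the receiver side, each $\ell$ bits; (ii) one codeword buffer of $\Theta(\ell)$ bits used by \textsc{Encode} / \textsc{Decode}; (iii) the Boyer--Moore counter $\counter$, which never exceeds $2\lceil t/\ell\rceil + 1$ and so fits in $O(\log(t/\ell))$ bits; (iv) the outer loop index, also $O(\log(t/\ell))$ bits. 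Summing gives $O(\ell + \log(t/\ell))$. The key point to flag is that, since Justesen encoding/decoding can be implemented in linear space in the codeword length, neither routine silently inflates the memory bound beyond $O(\ell)$.

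What is perhaps worth stating explicitly (even though the lemma as phrased only claims the two bounds) is the correctness of the Boyer--Moore majority step, since the memory bound crucially depends on storing only a single candidate rather than a tally over all iterations. By Lemma~\ref{lem:Justesencode}, each codeword tolerates up to $\ell$ corruptions before \textsc{Decode} can return a wrong message, so with at most $t$ total corruptions at most $\lfloor t/\ell \rfloor$ of the $2\lceil t/\ell \rceil + 1$ iterations can produce an incorrect $\core'$. Hence the correct $\core$ appears strictly more than half the time, and the Boyer--Moore invariant guarantees that the surviving candidate at the end is $\core$.

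The main obstacle, if any, is making the memory accounting airtight: one must verify that the Justesen encoder and decoder themselves can be executed in $O(\ell)$ working space (so that we do not need to buffer intermediate state comparable to the entire transmission $\Theta(\ell + t)$) and that the receiver never needs to retain past codewords or decoded candidates beyond the single $\core^*$ tracked by Boyer--Moore. Once these routine but essential points are checked, the two claimed bounds follow immediately from the summation above.
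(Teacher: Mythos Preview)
Your proposal is correct and follows essentially the same approach as the paper's own proof: split on $t < \ell$ versus $t \geq \ell$, count $(2\lceil t/\ell\rceil+1)\cdot\Theta(\ell)=\Theta(\ell+t)$ transmitted bits in the repeated-codeword case, and account for $O(\ell)$ bits for the codeword buffer plus $O(\log(t/\ell))$ bits for the counter. Your version is more explicit in itemizing the stored quantities and in justifying the Boyer--Moore step, but the underlying argument is the same.
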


\begin{proof}
For $t < \ell$, the statement follows immediately. Otherwise,
only one codeword of $\Theta(l)$ bits is exchanged during each of the $2\ceil{t/l}+1$ iterations.
Thus, the total number of bits exchanged is $\Theta(\ell + t)$.

For the memory usage, each codeword takes $O(\ell)$ bits and each counter takes $O(\log \frac{t}{\ell})$ bits.
\end{proof}

\subsection{Generating Biased Randomness with Low Memory Usage}

As in~\cite{Haeupler14},
we use the technique of generating $\rho$-biased $q$-bit randomness
with a short seed~\cite{Naor1993}.

\noindent \textbf{High Level Intuition.}  The idea is to consider
a constant degree expander graph with $\Theta(q)$ vertices, each of which
is labeled with some suitably chosen $q$-bit vector.
To generate $\rho$-biased randomness, one first picks a starting vertex
uniformly at random (which takes $O(\log q)$ bits)
and performs a random walk for $l = O(\log \frac{1}{\rho})$ steps (which
takes $O(l)$ random bits).  Then, each visited vertex is selected 
independently with probability $\frac{1}{2}$ (which takes $l$ random bits).
Finally, a $q$-bit randomness is formed by summing up the vectors
corresponding to the chosen vertices.  Hence, given the expander graph,
the length of the seed is $O(l + \log q)$.

\noindent \textbf{Memory Usage.}  To use the above approach,
each party needs to have access to the expander graph, which takes $\Omega(q^2)$ bits
to store.  To ensure low memory usage, we do not rely on the expander graph.
In some sense, instead of using a random walk with a random starting vertex,
we just pick an independent vertex uniformly at random each time,
where the vector associated with a vertex does not need to be stored explicitly.
Therefore, in our approach, the length of the seed is $O(l \log q)$, which is slightly longer.

\noindent \textbf{Construction.}  For completeness, we outline the construction.
First, we assume that $q$ is a power of 2, which can be achieved by rounding to
the next power of 2.  We use the finite field $\F_q$ with order $q$,
where each element can be represented by $\log_2 q$ bits.
Hence, we also fix some bijection $\rank: \F_q \rightarrow [1..q]$
and label the elements $F_q = \{v_1, \ldots, v_q\}$ such that
$\rank(v_j) = j$.

\noindent \textbf{Polynomial Evaluation.}
Given a vector $\mathbf{b} \in \F_q^k$,
we index its coordinates with $[0..k-1]$.
For $v \in \F_q$, we denote $\EvalPoly(\mathbf{b}, v) := \sum_{j=0}^{k-1} \mathbf{b}_j v^j$.
The following result follows from standard linear algebra.

\begin{fact}[$k$-wise Independence]
Suppose $\mathbf{b} \in \F_q^k$ is selected uniformly at random.
Then, the induced sequence $(\EvalPoly(\mathbf{b}, v): v \in \F_q)$ of $q$ elements
is $k$-wise independent.
\end{fact}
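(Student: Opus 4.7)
The plan is to use the standard Lagrange interpolation argument: a polynomial of degree less than $k$ over the field $\F_q$ is uniquely determined by its values at any $k$ distinct points, and the map from coefficient vectors to value tuples at $k$ fixed points is a bijection on $\F_q^k$.

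Concretely, I would fix any $k$ distinct elements $u_1, \ldots, u_k \in \F_q$ and any target tuple $(y_1, \ldots, y_k) \in \F_q^k$. The key step is to show that there is exactly one coefficient vector $\mathbf{b} \in \F_q^k$ such that $\EvalPoly(\mathbf{b}, u_i) = y_i$ for every $i \in [k]$. This reduces to showing that the Vandermonde-type linear system
\begin{equation*}
\sum_{j=0}^{k-1} \mathbf{b}_j u_i^j = y_i \qquad (i = 1, \ldots, k)
\end{equation*}
has a unique solution, which follows because the $k \times k$ Vandermonde matrix $(u_i^j)_{i \in [k], j \in [0..k-1]}$ has determinant $\prod_{i < i'} (u_{i'} - u_i) \neq 0$ (all $u_i$ are distinct, and we are over a field).

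Once uniqueness is established, since $\mathbf{b}$ is drawn uniformly from the $q^k$ elements of $\F_q^k$, and exactly one such $\mathbf{b}$ realizes any given target $(y_1, \ldots, y_k)$, the joint distribution of $(\EvalPoly(\mathbf{b}, u_1), \ldots, \EvalPoly(\mathbf{b}, u_k))$ is uniform on $\F_q^k$. This is precisely the definition of $k$-wise independence for the sequence $(\EvalPoly(\mathbf{b}, v) : v \in \F_q)$.

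I do not expect any genuine obstacle here — the fact is essentially a restatement of the invertibility of Vandermonde matrices over a field. The only small care needed is to note that the argument uses that $\F_q$ is a field (so nonzero differences $u_{i'} - u_i$ are invertible), which holds since $q$ is a prime power.
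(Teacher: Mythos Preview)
Your proposal is correct and is precisely the standard linear algebra argument the paper has in mind; the paper itself does not write out a proof but simply states that the result ``follows from standard linear algebra,'' and the Vandermonde-matrix invertibility argument you give is exactly that.
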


Algorithm~\ref{alg:biasedRandom} is adapted from~\cite[Section 3.1.1]{Naor1993},
and we include it here for completeness.

\begin{algorithm}[!ht]
\caption{$\rho$-Biased $q$-Bit Randomness Generator}
\label{alg:biasedRandom}
\begin{small}
\KwIn{Finite field $\F_q = \{v_1, \ldots, v_q\}$
and bijection $\rank: \F_q \rightarrow [1..q]$ such that $\rank(v_j) = j$.}

%, bias $\rho$, block size $b$ and independent randomness $\core^*$ of length $\Theta(\log q\log \frac{1}{\rho})$}

%\KwIn{Length $q$, bias $\rho$, block size $b$ and independent randomness $\core^*$ of length $\Theta(\log q\log \frac{1}{\rho})$}
%\KwOut{$\rho$-biased randomness of length $q$  block by block}

\SetKwProg{myproc}{procedure}{}{}

\myproc{\textsc{RandInit}$(\core^*, q \in \Z, \rho \in (0,1))$}{

$l \gets \Theta(\log\frac{1}{\rho})$

Use the $\Theta(l \log q)$ bits in $\core^*$ to
initialize the following variables (independently uniformly at random)
from their respective domains.

Pick $\mathbf{a} \in \{0,1\}^l$, using $l$ bits.

\For{each $1\leq t \leq l$}
{

		Pick $\mathbf{b}^{(t)} \in \F^7_q$, using $7 \log q$ bits.

		Pick $\mathbf{c}^{(t)} \in \{0,1\}^{\log q}$, using $\log q$ bits.

		Pick $\mathbf{d}^{(t)} \in \F^2_q$, using $2 \log q$ bits.

		\ignore{
		Pick a subset $S^{(t)}\subseteq \{1,...,\log q\}$, using
    $\log q$ bits.

    \For {each $1\leq j \leq \log q$}
    {
        Compute $\mathbf{d}_j^{(t)}= |S^{(t)}\bigcap \{1,...,j\}| \mod 2 \in \{0,1\}$.
    }
		}
}

\Return{$\mathcal{S} \gets (\mathbf{a}, \mathbf{b}, \mathbf{c}, \mathbf{d})$}
}

\myproc{$\mathcal{S}$.\textsc{ExtractBit}$(i \in [1..q])$}{

			Set $v \in \F_q$ such that $\rank(v) = i$.

			\For{each $1\leq t \leq l$}{
        
						$u_t\gets \EvalPoly(\mathbf{b}^{(t)}, v) \in \F_q$

            %$j_t \gets \log q -\max\{j|2^j<\rank(z_t)\}$.

            \If{$\rank(u_t) \mod 2=0$\label{algline:umod2}}
            {
                    $r_t \gets 0$
            }
            \Else
            {

							$z_t \gets \EvalPoly(\mathbf{d}^{(t)}, v) \in \F_q$

							$j_t \gets \max\{1, \ceil{\log_2 \rank(z_t)}\}$

							$r_t \gets \mathbf{c}_{j_t}^{(t)} \in \{0,1\}$

            }
        }

\Return{$\sum_{t=1}^{l} \mathbf{a}_{t}r_t  \mod 2$}

}

%\Return{}
\end{small}
\end{algorithm}

\begin{fact}
In Algorithm~\ref{alg:biasedRandom},
\textsc{RandInit} takes and stores  $\Theta(\log \frac{1}{\rho} \log q)$ bits,
and \textsc{ExtractBit} needs only 
$O( \log \frac{1}{\rho} \log q)$ bits of memory usage.
\end{fact}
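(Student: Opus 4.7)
The plan is to verify the stated memory bounds by simply tallying the storage used by the seed and by the per-bit extraction. This is a bookkeeping exercise, so the proof will be short; the only subtlety is to confirm that \textsc{ExtractBit} never needs to materialize the $q$-bit biased string in memory, but instead computes each requested output bit on the fly using only $O(\log q)$ scratch space that is reused across the $l$ inner iterations.

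First I would bound the seed length produced by \textsc{RandInit}. With $l = \Theta(\log \tfrac{1}{\rho})$, the stored state $(\mathbf{a},\mathbf{b},\mathbf{c},\mathbf{d})$ consists of $l$ bits for $\mathbf{a}$ and, for each $t \in [1..l]$, a constant number of field elements ($7$ elements of $\F_q$ for $\mathbf{b}^{(t)}$, $\log q$ bits for $\mathbf{c}^{(t)}$, and $2$ elements of $\F_q$ for $\mathbf{d}^{(t)}$), each of which occupies $\log q$ bits. Summing gives $l + \Theta(l \log q) = \Theta(\log \tfrac{1}{\rho}\log q)$ bits, as claimed. This matches the number of random bits read from $\core^*$, so initialization neither inflates nor compresses the seed.

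Next I would analyze \textsc{ExtractBit}$(i)$. Beyond the stored state, the routine holds: the input index $i$ (equivalently $v = \rank^{-1}(i) \in \F_q$), the current loop variables $u_t, z_t \in \F_q$, the integer $j_t \in [1..\log q]$, the bit $r_t$, and the running parity of $\sum_t \mathbf{a}_t r_t$. Each of these fits in $O(\log q)$ bits, and crucially all of them can be overwritten at the start of each iteration of $t$, so the total additional scratch space is $O(\log q)$ rather than $O(l \log q)$. The only nontrivial primitive inside the loop is $\EvalPoly(\mathbf{b}^{(t)}, v)$, which evaluates a polynomial of constant degree $7$ (resp.\ degree $2$ for $\mathbf{d}^{(t)}$) using Horner's rule over $\F_q$ in $O(\log q)$ bits of scratch, reusing a single accumulator. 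Combining this with the $\Theta(\log \tfrac{1}{\rho} \log q)$ bits already consumed by the resident state $\mathcal{S}$ yields a total working memory of $O(\log \tfrac{1}{\rho}\log q)$ bits, matching the claim.

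The only place where care is needed is the implicit assumption that field arithmetic in $\F_q$ itself does not blow up the memory footprint. I would argue this by fixing a standard polynomial-basis representation of $\F_q$ (where $q$ is a power of two, as assumed in the construction): addition is bitwise XOR on $\log q$-bit registers, and multiplication can be done with a shift-and-XOR routine that uses a single $\log q$-bit accumulator plus a constant number of $\log q$-bit temporaries, with the irreducible polynomial defining $\F_q$ hard-coded. Hence evaluating a constant-degree polynomial over $\F_q$ takes $O(\log q)$ space, and the overall $O(\log \tfrac{1}{\rho} \log q)$ bound follows. No step here is a genuine obstacle; the main thing to be careful about is not accidentally allocating an array of length $l$ or $q$ inside the extraction loop.
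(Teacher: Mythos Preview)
Your bookkeeping argument is correct. The paper does not actually supply a proof of this fact; it is asserted immediately after Algorithm~\ref{alg:biasedRandom} and left to the reader, so your write-up is more detailed than anything the paper provides. The tally you give for \textsc{RandInit} ($l$ bits for $\mathbf{a}$ plus $10\log q$ bits per $t\in[l]$, totaling $\Theta(l\log q)$) is exactly right, and your observation that \textsc{ExtractBit} can maintain a running parity of $\sum_t \mathbf{a}_t r_t$ while reusing $O(\log q)$ scratch for $v,u_t,z_t,j_t$ across iterations is the key point that keeps the extraction from accidentally allocating $\Theta(l)$ or $\Theta(q)$ memory. Your remark about constant-degree polynomial evaluation via Horner's rule and $\F_q$ arithmetic in $O(\log q)$ space is a welcome sanity check that the paper omits.
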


\bibliographystyle{plain}
\bibliography{intercode,oram}

%\appendix
%\input{appendix.tex}

%\input{two_party_proof.tex}

\end{document}